\def\showauthornotes{1}
\def\showdraftbox{0}
\newenvironment{fminipage}%
  {\begin{Sbox}\begin{minipage}}%
  {\end{minipage}\end{Sbox}\fbox{\TheSbox}}
\newcommand{\defeq}{\stackrel{\textup{def}}{=}}
\definecolor{ForestGreen}{rgb}{0.1333,0.5451,0.1333}
\definecolor{DarkRed}{rgb}{0.8,0,0}
\definecolor{Red}{rgb}{1,0,0}
\declaretheorem[numberwithin=section]{theorem}
\declaretheorem[numberlike=theorem]{lemma}
\declaretheorem[numberlike=theorem]{proposition}
\declaretheorem[numberlike=theorem]{corollary}
\declaretheorem[numberlike=theorem,style=definition]{definition}
\theoremstyle{remark}
\newtheorem*{rem*}{Remark}
\newcommand{\nfrac}[2]{\nicefrac{#1}{#2}} \def\abs#1{\left| #1
  \right|} 
\newcommand{\marginlabel}[1]%
{\mbox{}\marginpar{\it{\raggedleft\hspace{0pt}#1}}}
\newcommand\poly{{\textrm{poly}}}  %usage \poly(n)
\newcommand{\support}{\textrm{support}}
\definecolor{Mygray}{gray}{0.8}
\let\csname ifcommentflag\expandafter\endcsname
\newcommand{\todo}[1]{\colorbox{Mygray}{\color{red}\parbox{\textwidth}{#1}}}
\newcommand{\todo}[1]{}
\newcommand{\Authornote}[2]{{\sf\small\color{red}{[#1: #2]}}}
\newcommand{\Authoredit}[2]{{\sf\small\color{red}{[#1]}\color{blue}{#2}}}
\newcommand{\Authorcomment}[2]{{\sf \small\color{gray}{[#1: #2]}}}
\newcommand{\Authorfnote}[2]{\footnote{\color{red}{#1: #2}}}
\newcommand{\Authorfixme}[1]{\Authornote{#1}{\textbf{??}}}
\newcommand{\Authormarginmark}[1]{\marginpar{\textcolor{red}{\fbox{%\Large
#1:!}}}}
\newcommand{\Authornote}[2]{}
\newcommand{\Authoredit}[2]{}
\newcommand{\Authorcomment}[2]{}
\newcommand{\Authorfnote}[2]{}
\newcommand{\Authorfixme}[1]{}
\newcommand{\Authormarginmark}[1]{}
\newlength{\pgmtab}  %  \pgmtab is the width of each tab in the
\def\qedsketch{\ifmmode\Box\else{\unskip\nobreak\hfil
\penalty50\hskip1em\null\nobreak\hfil$\Box$
\parfillskip=0pt\finalhyphendemerits=0\endgraf}\fi}
\newlength{\tpush}
\newcommand{\handout}[5]{
   \noindent
   \begin{center}
   \framebox{ \vbox{ \hbox to \textwidth { {\bf \coursenum\ :\  \coursename} \hfill #5 }
       \vspace{3mm}
       \hbox to \textwidth { {\Large \hfill #2  \hfill} }
       \vspace{1mm}
       \hbox to \textwidth { {\it #3 \hfill #4} }
     }
   }
   \end{center}
   \vspace*{4mm}
   \newcommand{\lecturenum}{#1}
   \addcontentsline{toc}{chapter}{Lecture #1 -- #2}
}
\def\sparsifier{NMC sparsifier}
\def\trim{\mathrm{trim}}
\def\shave{\mathrm{shave}}
\def\defeq{:=}
\def\vol{\mathrm{vol}}
\def\abs#1{\left|#1  \right|}
\newcommand{\DSprune}{\mathcal{D}}
\newcommand{\counterDeletion}{\#\mathrm{del}}
\renewcommand\leq{\leqslant}
\renewcommand\geq{\geqslant}
\newcommand\Otil{\tilde{O}}
\newcommand{\eps}{\varepsilon}
\renewcommand{\setminus}{\backslash}
\global\long\def\Otil{\tilde{O}}
\global\long\def\poly{\mathrm{poly}}
\newcommand{\erclogowrapped}[1]{%
\setlength\intextsep{0pt}%
\begin{wrapfigure}[3]{r}{#1*\real{1.1}}%
\includegraphics[width=#1]{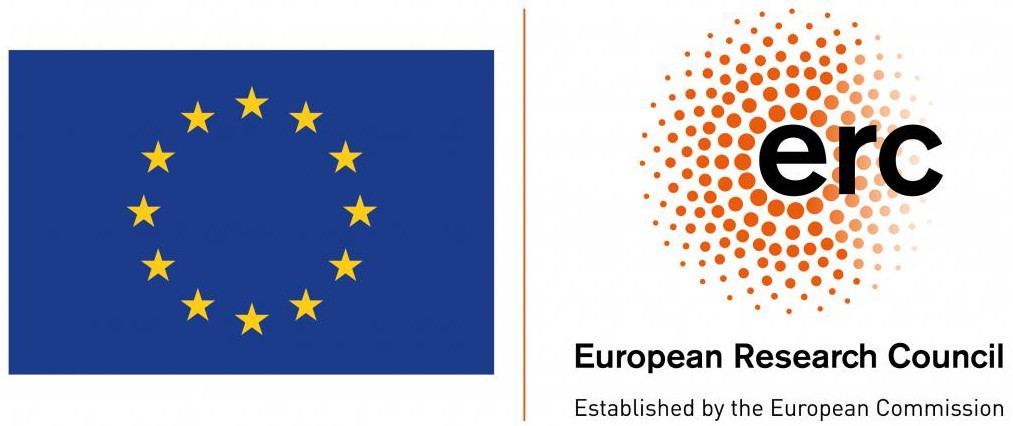}%
\end{wrapfigure}%
}
\def\blfootnote{\xdef\@thefnmark{}\@footnotetext}
\begin{document}

\title{Fully Dynamic Exact Edge Connectivity in Sublinear Time}
\author{Gramoz Goranci\thanks{Institute for Theoretical Studies, ETH Zurich. This project is supported by Dr. Max R\"ossler, the Walter Haefner Foundation and the ETH Z\"urich Foundation. Part of this work was done while the author was at the University of Glasgow. email: \texttt{gramoz.goranci@eth-its.ethz.ch 
}} \and Monika Henzinger\thanks{Faculty of Computer Science, University of Vienna. This project has received funding from the European Research Council (ERC) under the European Union's Horizon 2020 research and innovation programme (Grant agreement No. 101019564 ``The Design of Modern Fully Dynamic Data Structures (MoDynStruct)'' and from the Austrian Science Fund (FWF) project ``Fast Algorithms for a Reactive Network Layer (ReactNet)'', P~33775-N, with additional funding from the \textit{netidee SCIENCE Stiftung}, 2020--2024.
email: \texttt{monika.henzinger@univie.ac.at}} \and Danupon Nanongkai\thanks{Max Planck Institute for Informatics \& University of Copenhagen \& KTH. This project has received funding from the European Research Council (ERC) under the European Union's Horizon 2020 research and innovation programme under grant agreement No 715672. Nanongkai was also partially supported by the Swedish Research Council (Reg. No. 2019-05622). email: \texttt{danupon@mpi-inf.mpg.de}} \and Thatchaphol Saranurak\thanks{University of Michigan. email: \texttt{thsa@umich.edu} } \and Mikkel Thorup\thanks{University of Copenhagen. This project is supported by Investigator Grant 16582, Basic Algorithms Research Copenhagen (BARC), from the VILLUM Foundation. email: \texttt{mikkel2thorup@gmail.com}
} \and Christian Wulff-Nilsen\thanks{University of Copenhagen. This project is supported by the Starting Grant 7027-00050B from the Independent Research Fund Denmark under the Sapere Aude research career programme. email: \texttt{koolooz@di.ku.dk}} }

\maketitle
\blfootnote{\erclogowrapped{5\baselineskip}\\}

\pagenumbering{gobble}

\begin{abstract}
Given a simple $n$-vertex, $m$-edge graph $G$ undergoing
edge insertions and deletions, we give two new fully dynamic algorithms for exactly
maintaining the edge connectivity of $G$ in $\Otil(n)$ worst-case update time and
$\Otil(m^{1-1/31})$ amortized update time, respectively. Prior to our work,
all dynamic edge connectivity algorithms assumed bounded edge connectivity, guaranteed approximate solutions, or were restricted to edge insertions only. Our results answer
in the affirmative an open question posed by Thorup~{[}Combinatorica'07{]}.
\end{abstract}

%TODO:
%\begin{itemize}
%\item make prelim minimal
%\item each section contains the tools it uses from prior works
%\item non-singleton cut sparsifier instead of non-trivial cut sparsifier
%\end{itemize}

\pagenumbering{arabic}
\newpage

\section{Introduction}

The edge connectivity of an undirected, unweighted graph $G=(V,E)$ is the minimum number of edges whose removal disconnects the graph $G$. Finding the edge connectivity of a graph is one of the cornerstone problems in combinatorial optimization and dates back to the work of Gomory and Hu~\cite{GomoryH61} in 1961. Since then, a large body of research work has dealt with the question of obtaining faster algorithms for this problem in the classical sequential setting~\cite{ford1962flows,NagamochiI92_mincut,HaoO94,Gabow95,StoerW97,KargerS96,Karger00,KawarabayashiT19,HenzingerRW20,LiP20,BhardwajLS20,GawrychowskiMW20,MukhopadhyayN20,gawrychowski2021note,li2021deterministic}. This line of work culminated in a breakthrough result by Kawarabayashi and Thorup~\cite{KawarabayashiT19} in 2015 who obtained a \emph{deterministic} algorithm that runs in $\Otil(m)$\footnote{We use $\Otil(\cdot)$ to hide poly-logarithmic factors.} time on a $n$-vertex, $m$-edge graph, which was later improved by Henzinger, Rao, and Wang~\cite{HenzingerRW20} to $O(m \log^2 n \log \log^2 n)$. Edge connectivity has also been extensively studied in various models of computation including the parallel model  \cite{Karger93,Geissmann2018parallel,lopez2021work},
the distributed models \cite{pritchard2011fast,ghaffari2013distributed,nanongkai2014almost,ghaffari2018congested,daga2019distributed,parter2019small,ghaffari2020massively,ghaffari2020faster,Dory2021distributed}, the semi-streaming model \cite{ahn2009graph,MukhopadhyayN20,assadi2021simple},
and several query models \cite{rubinstein2017computing,MukhopadhyayN20,lee2020cut}.
All these models admit non-trivial, if not near-optimal, algorithms
for \emph{exactly} computing edge connectivity.

We study edge connectivity in the \emph{fully dynamic} setting, where the underlying graph $G$ undergoes edge insertions and deletions, known as edge \emph{updates}, and the goal is to maintain the edge connectivity of $G$ after each update with as small update time as possible. In contrast to the long line of research work in other computational models, the only known algorithm for the fully dynamic edge connectivity problem is the trivial solution of recomputing the edge connectivity from scratch after each update, which costs $\tilde \Theta(m)$ time per update. Thorup~\cite{thorup2007fully} introduced this problem and gave a fully dynamic edge connectivity algorithm that supports fast updates as long as the edge connecitvity is upper bounded by some parameter $\eta$, where $\eta$ is a small polynomial in $n$. Concretely, his algorithm achieves $\Otil(\eta^{29/2} \sqrt{n})$ worst-case time per edge update, and thus is slower than the trivial algorithm whenever $\eta =\tilde \Omega( m^{2/29} n^{-1/29})$. In spite of dynamic graph algorithms being a flourishing research field, prior to our work, there has been no progress on the fully dynamic edge connectivity problem in the last 15 years. 

In this paper we give the first solutions with $o(m)$ update time, answering in the affirmative an open question posed by Thorup~\cite{thorup2007fully} of whether this is possible. More concretely, we show the following two results.

\begin{theorem}
\label{thm:main n} Given an undirected, unweighted $n$-vertex, $m$-edge graph $G=(V,E)$, there is a fully dynamic randomized algorithm that processes an online sequence of edge insertions or deletions and maintains the edge connectivity of $G$ in 
$\Otil(n)$ worst-case update time with high probability.
\end{theorem}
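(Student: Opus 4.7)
The overall strategy is to maintain a sparse subgraph $H \subseteq G$ with $\tilde{O}(n)$ edges---an \sparsifier{}---that preserves the value and essential structure of every minimum cut of $G$, and then answer the edge-connectivity query by running a static near-linear-time min-cut algorithm (for instance Henzinger--Rao--Wang) on $H$ within the $\tilde{O}(n)$ per-update time budget. The construction of $H$ follows the Kawarabayashi--Thorup paradigm lifted to the dynamic setting: compute an expander decomposition of $G$ at conductance $\phi = 1/\mathrm{polylog}(n)$ and, inside each cluster, apply the $\trim$ and $\shave$ operations. The expansion of a cluster certifies that a minimum cut cannot slice through it without spending many internal edges, so after contracting the trimmed cores (keeping only their boundaries plus a sparsely sampled interior) we are left with an $\tilde{O}(n)$-edge multigraph that faithfully encodes every near-minimum cut of $G$.

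To turn this sparsifier dynamic I would plug in a pruning data structure $\DSprune$ for expanders (in the style of Saranurak--Wang) that, under a stream of edge deletions, quickly identifies and evicts a small \emph{witness} set of vertices whose local expansion has been destroyed; evicted vertices simply migrate into the inter-cluster boundary, while edge insertions are absorbed directly into the boundary graph. An expander-pruning argument charges the cumulative eviction work against the triggering deletions via the $\counterDeletion$ counter, yielding $\tilde{O}(1)$ amortized cost per deletion per decomposition level. To upgrade to the claimed worst-case $\tilde{O}(n)$ bound I would use the standard de-amortization bootstrap: rebuild the decomposition from scratch in the background, spreading its $\tilde{O}(m)$ cost evenly over the next $\Theta(m/n)$ updates, and keep two copies so that each query always sees an up-to-date certificate. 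In the regime where $\lambda$ is large I would further invoke Karger-style edge sampling inside each expander at rate $\mathrm{polylog}(n)/\lambda$, so that the effective per-cluster edge count drops to $\tilde{O}(|V_i|)$ while all approximate cuts of the cluster are preserved with high probability; this is also the source of the "randomized with high probability" clause in the theorem.

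The main obstacle I anticipate is controlling the cascading effect of $\trim$ and $\shave$ under updates: a single deletion can in principle cause the prune of many vertices, and each evicted vertex can in turn invalidate the sparsifier of the cluster it joins. I would address this by proving a local version of $\trim$/$\shave$ correctness---so that only the affected cluster and its inter-cluster boundary must be re-processed after a prune---and by a potential argument bounding the total re-processing over any window of $n$ updates. A secondary subtlety is \emph{exactness}: the min cut extracted from $H$ is only guaranteed to be a min cut of $G$ up to the slack introduced by sampling and trimming, so the candidate cut returned by the static algorithm on $H$ must still be re-verified in $G$, for instance by running a small number of exact max-flow computations along the candidate cut and its near-minimum neighbours. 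This verification, combined with the expander sampling step, is precisely where the high-probability guarantee enters the final theorem.
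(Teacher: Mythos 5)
Your proposal uses the expander-decomposition-plus-pruning route, but that is the machinery the paper deploys for Theorem~\ref{thm:main det}, the weaker \emph{amortized} $\Otil(m^{1-\nfrac{1}{16}})$ deterministic bound. The paper's actual proof of Theorem~\ref{thm:main n} is a genuinely different algorithm: it maintains a random $2$-out contraction $G'=G/R$ (Theorem~\ref{thm:random-2-out}), which shrinks the vertex set to $O(n/\delta)$, and then dynamically maintains a sparse $\delta_0$-connectivity certificate of $G'$ by combining the Daga et al. reduction from $\delta_0$-certificates to $O(\log n)$-certificates (Algorithm~\ref{alg:para certificate}), Nagamochi--Ibaraki nested spanning forests, and $\ell_0$-sampling sketches stored in a dynamic-tree structure so that the sketch of any contracted supervertex is retrievable in $\Otil(1)$ time. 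That pipeline is what produces the worst-case $\Otil(n)$ update bound and the randomization.

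There are concrete reasons the expander route as you describe it does not deliver worst-case $\Otil(n)$. First, you take $\phi = 1/\mathrm{polylog}(n)$, but then the sparsifier (essentially the inter-cluster boundary after \(\trim\)/\(\shave\)) has $\Otil(\phi m)=\Otil(m)$ edges, not $\Otil(n)$; to make it sparse one must take $\phi \approx c/\delta$ (as in Lemma~\ref{lem: NMCsparsifier} and Theorem~\ref{thm: decrementalSparsifier}), but then expander pruning costs $\Otil(1/\phi^3)=\Otil(\delta^3)$ amortized per deletion, which exceeds $n$ as soon as $\delta\ge n^{1/3}$. Your de-amortization-by-periodic-rebuild likewise charges a rebuild of cost $\Otil(m/\phi)=\Otil(m\delta)$ over $\Theta(m/n)$ updates, i.e.\ $\Otil(n\delta)$ per update, which again is $\omega(n)$ for dense graphs. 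This is precisely the obstruction that forces the paper to run the expander-based algorithm ``in parallel'' with Thorup's bounded-$\lambda$ data structure and balance the two, giving $\Otil(m^{1-\nfrac{1}{16}})$ amortized rather than $\Otil(n)$ worst-case. Second, your fallback of Karger-rate edge sampling inside clusters plus ``re-verification by a few max-flows'' gives only a $(1\pm\eps)$ cut estimate; turning that into an \emph{exact} global min cut within the $\Otil(n)$ budget is not accounted for (global min cut is not a constant number of $s$-$t$ max-flows, and even a single exact max-flow is out of budget). The paper avoids this entirely: the $2$-out contraction exactly preserves any fixed non-singleton near-minimum cut with constant probability (boosted by $O(\log n)$ independent repetitions), and the $\delta_0$-connectivity certificate preserves all cuts of size $\le\delta_0>\lambda$ exactly, so a static $\Otil(n)$-time min-cut computation on the certificate already returns the exact answer.

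So the high-level scaffold---maintain a small NMC-preserving sparsifier dynamically and query it statically---matches the paper, but the specific dynamic construction you propose is the one the paper uses for the \emph{other} theorem, and it provably cannot be pushed to worst-case $\Otil(n)$ without a fundamentally different idea such as the $2$-out contraction plus sketch-based certificate that the paper actually uses.
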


The above randomized algorithm works against an \emph{adaptive} adversary and achieves \emph{sub-linear} update time as long as $m=\Omega(n^{1+\epsilon})$ where $\epsilon$ where is some positive constant. We complement both points of this result by designing a second algorithm that is (i) deterministic and (ii) achieves sub-linear update times regardless of graph density.

%Our second result complements both points;
%it is deterministic and has sub-linear update time for all range of
%density.

\begin{theorem}
\label{thm:main det} Given an undirected, unweighted $n$-vertex, $m$-edge graph $G=(V,E)$, there is a fully dynamic \emph{deterministic} algorithm that processes an online sequence of edge insertions or deletions and maintains the edge connectivity of $G$ in $\Otil(m^{29/31}n^{1/31}) = \Otil(m^{1-1/31})$ amortized update time.
\end{theorem}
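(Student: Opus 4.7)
The plan is to reduce the problem to maintaining an NMC sparsifier $H$ of $G$ on $\Otil(\lambda n)$ edges that preserves every cut of value at most $\lambda$, where $\lambda$ is the current edge connectivity of $G$, and then periodically running a static near-linear time min-cut algorithm (such as that of Henzinger, Rao and Wang) on $H$. The key leverage, which replaces the randomized sampling ideas used for Theorem~\ref{thm:main n}, is that each update changes $\lambda$ by at most one, so across $\Delta$ consecutive updates the value of $\lambda$ stays in an interval of length $O(\Delta)$; this slack lets us amortize the cost of recomputing the min cut on the sparsifier against the update sequence.

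First I would maintain a Nagamochi--Ibaraki style decomposition of $G$ into edge-disjoint spanning forests $F_1, F_2, \dots$; the union of the first $\Otil(\lambda)$ forests is the required NMC sparsifier $H$. Insertions go to the shallowest forest that accepts them; deletions trigger a Holm--Lichtenberg--Thorup style replacement-edge cascade across the hierarchy, which can be implemented deterministically via Euler-tour trees. On top of $H$ I would then apply Kawarabayashi--Thorup $\trim$ and $\shave$ operations to identify and contract provably-high-connectivity subgraphs, shrinking the effective instance on which the static algorithm must be invoked, with a tunable threshold parameter controlling how aggressively we contract.

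Next I would avoid per-update recomputation by running the static algorithm only once every $\Delta$ updates, and in between certifying that the true connectivity has not crossed the reported value. The certification uses a pruning data structure, playing the role of $\DSprune$, that tracks through a counter $\counterDeletion$ how many recent deletions could possibly have lowered any currently-candidate near-minimum cut; once this counter becomes large enough to threaten correctness, the next batch recomputation is triggered early. Balancing $\Delta$ against the $\Otil(\lambda n)$ static recomputation cost, the per-update sparsifier maintenance cost, and the $\trim$/$\shave$ threshold -- and optimizing in $\lambda$, which is always at most the average degree $2m/n$ -- should yield the claimed amortized bound of $\Otil(m^{\nfrac{7}{8}} n^{\nfrac{1}{16}})$.

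The main obstacle will be the deterministic maintenance of the NMC sparsifier when $\lambda$ is large: a single deletion can in principle trigger a chain of $\Omega(\lambda)$ replacements through the forest hierarchy, and without the usual random-sampling shortcuts the per-level amortized guarantees are no longer immediate. Making this step efficient requires exploiting $\trim$/$\shave$ certificates to prove that only a small, controlled portion of $H$ is actually touched by any batch of $\Delta$ updates, and then getting this structural bound to interact cleanly with $\DSprune$ so that the amortized charging survives into the final parameter balance. This coupling between the sparsifier-maintenance cost and the pruning-certification cost is, I expect, where the bulk of the technical work will sit.
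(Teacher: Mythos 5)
Your proposal diverges from the paper's proof in a way that creates real gaps, not just differences of route.

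The central issue is the choice of sparsifier. You propose to maintain a Nagamochi--Ibaraki style hierarchy of $\Otil(\lambda)$ edge-disjoint spanning forests, giving a certificate with $\Otil(\lambda n)$ edges, and then statically recompute on it. But $\lambda n$ can be $\Theta(m)$ (e.g.\ when $\lambda \approx \delta \approx m/n$), so this ``sparsifier'' is not sub-linear in the very regime where you need it most. The Kawarabayashi--Thorup / Saranurak NMC sparsifier that the paper uses is a \emph{contracted} graph of size $\Otil(m/\delta) \le \Otil(n)$ --- a fundamentally different and much smaller object than a $k$-connectivity certificate. You also propose to apply $\trim$ and $\shave$ ``on top of $H$,'' but in Saranurak's construction those operations are applied to the clusters of an \emph{expander decomposition}, whose internal high conductance is exactly what makes the contraction sound; applying them to a forest-hierarchy certificate gives no comparable guarantee. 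The paper's deterministic algorithm never touches the forest-certificate machinery at all (that appears only in the randomized algorithm of Theorem~\ref{thm:main n}); instead it maintains a $\phi$-expander decomposition decrementally via expander pruning (Theorem~\ref{thm:pruning}), derives the NMC sparsifier from it, and handles insertions by batching plus periodic rebuilds.

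The second gap is that you acknowledge but do not resolve the cascading-deletion problem. A single edge deletion can propagate replacement edges across all $\Omega(\lambda)$ forest levels, and there is no known deterministic amortized bound that tames this for large $\lambda$. The paper sidesteps this precisely by abandoning the nested-forest structure: expander pruning charges the update cost to the total number of deletions, giving an $\Otil(m/\phi)$ total bound without any per-level cascade analysis. Your appeal to $\trim$/$\shave$ certificates plus a $\counterDeletion$-style counter to bound the ``touched'' portion of $H$ is speculative --- nothing in the Nagamochi--Ibaraki structure localizes deletions that way.

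Finally, you omit the other half of the paper's argument: running Thorup's algorithm (with $\Otil(\eta^7\sqrt n)$ update time for bounded connectivity) in parallel and switching at a threshold $\tau$. The sparsifier-based algorithm alone gives $\Otil(1/\phi^3 + \phi m)$ amortized time, which only becomes favorable for large $\lambda$ (since one needs $\phi \ge c/\delta$); for small $\lambda$, Thorup's algorithm covers the remaining range. The final exponent $m^{7/8}n^{1/16}$ comes from balancing $\tau^7\sqrt n$ against $m/\tau$, a tradeoff that does not appear anywhere in your sketch. Without that second algorithm, the claimed bound cannot be obtained.
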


Both algorithms can also report the edges on a cut 
%\monika{Do you mean the edges on the cut?}
that attains the edge connectivity of $G$ in time nearly proportional to the edge connectivity, with the caveat that the algorithm
from Theorem \ref{thm:main n} then only works against an \emph{oblivious} adversary.

\subsection{Our Techniques}

In this section, we discuss the main obstacles that explain the lack of progress on fully dynamic exact edge connectivity algorithms
and point out the key ideas that enable us to achieve our results. %\monika{We use a lot of machinery so I would not call it simple.} -- \gg: I completely agree.

Before 2015, all near-linear time algorithms for exact
edge connectivity reduced to computing a minimum 2-respecting cut of
a spanning tree~\cite{Karger00}. This problem involves setting up a sophisticated dynamic programming solution, and the question of whether this solution admits fast dynamic algorithms remains a notoriously
hard open problem. For the somewhat easier problem of maintaining a minimum 1-respecting cut of a spanning tree~\cite{frederickson1985data,alstrup2005maintaining,thorup2007fully}, state-of-the-art dynamic algorithms allow us to solve the problem efficiently.
%The state-of-the-art techniques only allow us to
%maintain an easier minimum 1-respecting cut of a spanning tree \cite{frederickson1985data,alstrup2005maintaining,thorup2007fully}.
In fact, this is used as a key subroutine in Thorup's algorithm \cite{thorup2007fully}
that $(1+\epsilon)$-approximates edge connectivity in $\Otil(\sqrt{n})$ worst-case update time.

In a breakthrough work, Kawarabayashi and Thorup \cite{KawarabayashiT19} in 2015
showed a largely different approach to tackling the edge connectivity problem. Their key insight
was to introduce a notion of sparsification for edge connectivity in a subtle way: given an undirected, unweighted $n$-vertex $G$, one can contract
edges of $G$ and obtain a graph $G'$ such that (i) $G'$ has only $O(n)$ edges and (ii) $G'$
preserves all \emph{non-singleton} minimum cuts of $G$.\footnote{A non-singleton minimum cut is a minimum cut where both sides of the cut contain at least $2$ vertices.} Throughout, we call the graph $G'$ a \emph{non-singleton minimum cut sparsifier} (abbrv. \emph{NMC sparsifier}). Since maintaining singleton
minimum cuts boils down to maintaining the minimum degree of $G$, and the latter can be easily achieved as $G$ undergoes edge updates, we can focus our attention to designing fully dynamic algorithms for maintaing the NMC sparsifier $G'$.

%can be maintained easily while $G$ undergoes edge updates,
%we can reduce the goal of dynamic algorithms to just maintaining
%the NMC sparsifier $G'$.

In the insertions-only setting, Goranci, Henzinger, and Thorup~\cite{goranci2018incremental} observed that
an NMC sparsifier interacts well with edge insertions, as it satisfies a certain composability property. Specifically, they showed that given an NMC sparsifier $G'$ of a graph $G$ and an edge insertion $e$ to $G$, the graph $G' \cup \{e\}$ remains an NMC sparsifier of $G \cup \{e\}$. This was in turn combined with Henzinger's insertions-only algorithm~\cite{Henzinger97} for maintaining small edge connectivity and $k$-connectivity certificates. Periodically invoking a static algorithm for computing NMC sparsifier in a black-box manner then led to a dynamic algorithm with poly-logarithmic amortized update time per edge insertion. 

We may be tempted to employ the same approach for handling edge deletions. However, a short afterthought reveals that the crucial composability property we used for edge insertions completely fails for edge deletions. This suggests that restricting to edge deletions does not seem to help in the dynamic NMC sparsifier problem, so in this work we refocus our attention to the fully dynamic setting.

%can be maintained easily, because when edge $e$ is
%inserted, one may simply insert $e$ into the NMC sparsifier. This
%observation leads to their incremental edge connectivity algorithm,
%that calls the NMC sparsifier static algorithm periodically in a black-box
%manner. This observation, however, fails completely for edge deletions.
%A careful afterthought reveals that we need a fully dynamic algorithm that directly maintains the
%NMC sparsifier, and this is exactly what we achieve in this
%work. 
We devise two new fully dynamic NMC sparsifier algorithms which lead us to Theorems \ref{thm:main n} and \ref{thm:main det},
respectively. The first one is randomized and is based on a dynamic variant of the random $2$-out contraction technique leveraged by Ghaffari, Nowicki, and Thorup~\cite{ghaffari2020faster} (\Cref{sec:randomized}), whereas the second one is deterministic and builds upon the expander decomposition-based approach for computing edge connectivity by Saranurak~\cite{Saranurak21} (\Cref{sec:det}). We note that the original construction of NMC sparsifiers~\cite{KawarabayashiT19} is already quite involved in the static setting and seems difficult to adapt in the dynamic setting. In the paragraphs below, we give a succinct summary of the technical ideas behind both of our algorithms.

Key to our randomized algorithm for dynamically maintaining an NMC sparsifier is the following construction: given a graph $G$, for each vertex $u \in V$, sample two incident edges to $u$ independently, with replacement, and contract them to obtain the graph $G'$, which we call a \emph{random $2$-out contraction} of $G$. Despite the fact that $G'$ is known to have only $O(n/\delta)$ vertices~\cite{ghaffari2020faster}, where $\delta$ is the minimum degree of $G$, the number of edges in $G'$ could  potentially still be large, say $\omega(n)$, and thus inefficient for our purposes. The main technical component of our dynamic algorithm is to efficiently maintain a \emph{sparse} $\delta$-connectivity certificate $H'$ of $G'$. Towards achieving this goal, we have to deploy a variety of algorithmic tools from sequential, parallel, and streaming algorithms, namely (i) sequential and parallel constructions of $k$-connectivity certificates~\cite{nagamochi1992linear, daga2019distributed}, and (ii) constructing spanning forests in sub-linear time using linear $\ell_0$-sampling sketches~\cite{cormode2014unifying, kapron2013dynamic}. A more detailed description of the algorithm can be found in~\Cref{sec:randomized}.

Our deterministic algorithm follows the now-widespread and powerful algorithmic approach of employing \emph{expander decompositions} for solving graph-based optimization problems. At a high level, an expander decomposition is a partitioning of a graph into \emph{well-connected} clusters, whose expansion is controlled by a parameter $\phi \in (0,1)$, such that there are few inter-cluster edges left, say roughly $\phi m$. If $\phi \approx \delta^{-1}$, then Saranurak~\cite{Saranurak21} recently showed that contracting a carefully chosen \emph{vertex subset} of each expander in the decomposition leads to a NMC sparsifier $G'$. Our main technical contribution is a simple, deletions-only algorithm for maintaining an expander decomposition (based on expander prunning~\cite{SaranurakW19}), which in turn leads to a deletions-only algorithm for maintaining the NMC sparsifier $G'$. While expander pruning has been already used for dynamically maintaining other graph-based properties~\cite{GoranciRST21,BernsteinBGNSS022}, we believe that our construction is one of the simplest and may prove useful in future applications. We extend our deletions-only NMC algorithm to a fully dynamic one by keeping edge insertions ``on the side'' and rebuilding periodically. Finally, for achieving our claimed sub-linear update time, our NMC sparsifier algorithm is run in ``parallel'' with the exact fully dynamic edge connectivity algorithm of~\cite{thorup2007fully} which returns correct answers only for small edge connectivity. For further details we point the reader to~\Cref{sec:det}.

\begin{comment}

---------------

Although the original construction of NMC sparsifiers \cite{KawarabayashiT19}
is involved and very hard to dynamize, there are two follow-up work
which significantly simplify the construction: a randomized one by
\cite{ghaffari2020faster} and a deterministic one by \cite{Saranurak21}.
These simplified constructions allow us to devise new dynamic NMC sparsifier
algorithms and lead us to Theorems \ref{thm:main n} and \ref{thm:main det}
respectively. Both of algorithms are simple and we discuss them in detail
in \Cref{sec:randomized,sec:det}, respectively.
\end{comment}

\subsection{Related Work}

The study of \emph{approximation} algorithms for the fully dynamic edge connectivity problem was initiated by Karger~\cite{Karger94} who gave a randomized algorithm that maintains a $\sqrt{1+2/\epsilon}$ to edge connectivity in $\Otil(n^{1/2 +\eps})$ expected amortized time per edge operation. Karger and Thorup \cite{thorup2000dynamic}
showed a fully dynamic algorithm that $(2+\epsilon)$-approximates
edge connectivity in $\Otil(1)$ amortized updated time. Thorup~\cite{thorup2007fully} improved the approximation factor to $(1+\epsilon)$ at the cost of increasing the update time to $\Otil(\sqrt{n})$. However, his running time guarantees are worst-case instead of amortized.

%Prior to our work, there were fully dynamic algorithms that maintain
%an \emph{approximation} of edge connectivity. Karger and Thorup \cite{thorup2000dynamic}
%showed a fully dynamic algorithm that $(2+\epsilon)$-approximates
%edge connectivity using $\polylog(n)$ amortized update time. Then,
%Thorup \cite{thorup2007fully} showed an $(1+\epsilon)$-approximation
%algorithm with $\Otil(\sqrt{n})$ worst-case update time.%
\begin{comment}
By using the standard uniform sampling by Karger CITE, these approximation
algorithms works on weighted undirected graphs.
\end{comment}
{} 

Prior to our work, all known fully dynamic algorithms for \emph{exactly} computing the edge connectivity $\lambda(G)$ of $G$ take sub-linear time only when $\lambda(G)$ is small. In the same work~\cite{thorup2007fully}, Thorup also showed an exact fully dynamic algorithm with $\Otil(\lambda(G)^{29/2}\sqrt{n})$
worst-case update time, which is sub-linear whenever $\lambda(G)=o(m^{2/29}n^{-1/29})$.\footnote{Nevertheless, Thorup's result does not assume that $G$
is an undirected, unweighted graph.} For $\lambda(G)$ being a small constant, edge connectivity can be maintained in  $\Otil(1)$ amortized update time. Specifically, there were a series of refinements in the literature for designing fully dynamic algorithms for graph connectivity (i.e., checking whether $\lambda(G) \geq 1$)~\cite{frederickson1985data,henzinger1999randomized,holm2001poly,kapron2013dynamic,nanongkai2017dynamic,chuzhoy2020deterministic} and $2$-edge connectivity (i.e., checking whether $\lambda(G)\geq 2$)~\cite{frederickson1997ambivalent,henzinger1997fully,holm2001poly,holm2018dynamic}. When the underlying dynamic graph is guaranteed to remain \emph{planar} throughout the whole sequence of online updates, Lacki and Sankowski~\cite{lkacki2011min} gave an algorithm with $\Otil(n^{5/6})$ worst-case update time per operation.

Partially dynamic algorithms, i.e., algorithms that are restricted to either edge insertions or deletions only, have also been studied in the context of exact maintenance of edge connectivity. Henzinger~\cite{Henzinger97} designed an insertions-only algorithm with $O(\lambda(G) \log n)$ amortized update time. Recently, Goranci, Henzinger, and Thorup
\cite{goranci2018incremental} showed how to improve the update time to $O(\log^3 n \log \log^ 2 n)$, thus removing the dependency on edge connectivity from the running time.

%Closest to our work is the paper by Goranci, Henzinger, and Thorup
%\cite{goranci2018incremental} which showed an incremental algorithm
%for exactly maintaining $\lambda(G)$ with $\polylog(n)$ amortized
%update time. However, this algorithm is crucially restricted to edge
%insertions only. 

To summarize, all previous dynamic edge connectivity algorithms either maintain an approximation to $\lambda(G)$, require that $\lambda(G)$ is small, handle edges insertions only, or are restricted to special family of graphs such as planar graphs. Hence, our results are the
first fully dynamic exact edge connectivity algorithms that achieve sub-linear update times on general graphs. 

\section{Preliminaries}

Let $G=(V,E)$ be an $n$-vertex, $m$-edge graph. For a set $U \subseteq V,$ the \emph{volume} of $U$ in $G$ is defined as $\vol_G(U) := \sum_{u \in U} d(u)$, where $d(u)$ denotes the degree of $u$ in $G$. Let $\delta = \min_{u \in U} \{d(u)\}$ denote the \emph{minimum degree} in $G$. A \emph{cut} is a subset of vertices $C \subseteq V$ where  $\min\{\abs{C}, \abs{V \setminus C}\} \geq 1$. A cut $C$ is \emph{non-singleton} iff $\min\{\abs{C}, \abs{V \setminus C}\} \geq 2$. For two disjoint sets $S,T \subseteq V$, let $E(S,T) \subseteq E$ be the set of edges with one endpoint in $S$ and the other in $T$. Let $\partial(S) \defeq E(S, V \setminus S)$. The \emph{edge connectivity} in $G$, denoted by $\lambda(G)$, is the cut $C$ that minimizes $\abs{\partial(C)}$.

It is a well-known fact that edge connectivity can be computed in near-linear time on the number of edges.
\begin{theorem}[\cite{Karger00}] \label{thm: staticEdgeConnectivity}
	Let $G=(V,E)$ be a weighted, undirected graph with $m$ edges. There is an algorithm that computes the edge connectivity $\lambda(G)$ of $G$ in $\Otil(m)$ time.
\end{theorem}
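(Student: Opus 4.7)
The plan is to follow Karger's original strategy, which combines random sampling, tree packings, and a dynamic programming algorithm for minimum 2-respecting cuts. First I would compute a crude $O(1)$-approximation $\lambda^*$ of $\lambda(G)$ in $\Otil(m)$ time (e.g., via Matula's algorithm or Nagamochi--Ibaraki sparsification for the unweighted case, with a scaling trick for weights) and then sample each edge with probability proportional to $\log n / \lambda^*$ to produce a skeleton $H$ with $\Otil(n)$ edges. By Karger's cut-counting bound (at most $n^{O(\alpha)}$ cuts of value at most $\alpha \lambda$), a Chernoff-plus-union-bound argument over all approximate cuts shows that every cut of $H$ has value within a $(1 \pm o(1))$ factor of its original value scaled by $p$; in particular, the minimum cut of $H$ has value $\Theta(\log n)$, and any minimum cut of $G$ remains near-minimum in $H$.

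Second, I would invoke the Nash-Williams--Tutte tree packing theorem: a graph of edge connectivity $c$ contains $\lfloor c/2 \rfloor$ edge-disjoint spanning trees, which Gabow's algorithm finds in $\Otil(m)$ time on $H$. The crucial structural fact due to Karger is that, for any fixed minimum cut $C$, a uniformly random tree from the packing ``2-respects'' $C$, meaning it has at most two edges crossing $C$, with constant probability. Hence it suffices to iterate over the $\Otil(1)$ trees of the packing; with high probability, at least one of them 2-respects the true minimum cut.

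Third, for each tree $T$ in the packing I would compute the minimum cut that 2-respects $T$. The 1-respecting case reduces to computing, for every tree edge $e$, the total weight of non-tree edges whose tree path crosses $e$, which is solvable in $\Otil(m)$ via DFS together with LCA-based offline path updates. The 2-respecting case is the main obstacle: naively enumerating all $\binom{n-1}{2}$ pairs of tree edges costs $\Omega(n^2)$. Karger circumvents this by observing that each non-tree edge is ``interested in'' only pairs of tree edges lying on its tree path, so the pair-minimization can be driven by a heavy-light decomposition of $T$ combined with segment-tree style range operations, yielding a per-tree running time of $\Otil(m)$. Summing over all $\Otil(1)$ trees, and finally comparing the returned value against the minimum vertex degree of $G$ to handle singleton cuts, gives the claimed $\Otil(m)$ total runtime.
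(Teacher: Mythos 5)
The paper does not prove this statement; it is cited verbatim from Karger's near-linear-time minimum-cut paper~\cite{Karger00}, and the present paper uses it only as a black box (to recompute $\lambda(H)$ on the $\Otil(n)$-edge \sparsifier\ after each update). Your proposal is a faithful high-level recapitulation of Karger's actual argument: random skeleton construction with Karger's cut-counting bound, a tree packing in the skeleton whose size is $O(\log n)$ so that by the Nash-Williams/Tutte bound and an averaging argument some tree $2$-respects a fixed minimum cut with constant probability, and a per-tree $\Otil(m)$ minimum $2$-respecting-cut computation. Two small remarks: first, the heavy-light decomposition plus segment-tree route you describe is the modern presentation (Gawrychowski--Mozes--Weimann, Mukhopadhyay--Nanongkai); Karger's original paper achieves the same $\Otil(m)$ bound using a ``bough'' decomposition together with link-cut trees and precomputed partial sums, so you are using an equivalent but later-developed implementation of that step. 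Second, your final comparison against the minimum degree is redundant, since the $2$-respecting-cut machinery already covers singleton cuts (the packing lemma makes no non-singleton restriction), but it is harmless.
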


\section{Randomized Algorithm with $\Otil(n)$ Update Time}
\label{sec:randomized}

In this section we prove \Cref{thm:main n}. Our algorithm requires several tools from different works and we review them below.

\subsection{Algorithmic Tools}

\paragraph{Random $2$-Out Subgraph and Contraction.}

Let $G=(V,E)$ be a undirected simple graph. Let $R=(V,E')$ be a
\emph{random $2$-out subgraph} of $G$ which is obtained from $G$ using the following procedure
\begin{itemize}
\itemsep0em
\item Set $R \gets (V,E')$, where $E' \gets \emptyset$.
\item For each $u \in V$: 
\begin{itemize}
\itemsep0em
\item[]$\triangleright~$ Sample from $E$ two incident
edges $(u,v_{1})$, $(u,v_{2})$ independently, with replacement.
\item[]$\triangleright~$ Add $(u,v_{1})$ and $(u,v_{2})$ to $E'$.
\end{itemize}
\end{itemize}
%for each vertex $u\in V$, sample two incident
%edges $(u,v_{1})$ and $(u,v_{2})$ independently, with replacement and independent and (b) add them to $H$. 
The graph $G'=G/R$ obtained
by contracting all edges of $R$ is called a \emph{random $2$-out
contraction}. Ghaffari, Nowicki and Thorup~\cite{ghaffari2020faster} showed that a random 2-out contraction reduces the number of nodes to $O(n/\delta)$ whp, while preserving any fixed non-singleton nearly minimum cut with constant probability, as in the theorem below.
\begin{theorem}
[Theorem 2.4 of \cite{ghaffari2020faster}]\label{thm:random-2-out}A
random $2$-out contraction of a graph with $n$ vertices and minimum
degree $\delta$ has $O(n/\delta)$ vertices, with high probability,
and preserves any fixed non singleton $(2-\epsilon)$ minimum cut,
for any constant $\epsilon\in(0,1]$, with some constant probability
$p_{\epsilon}>0$.
\end{theorem}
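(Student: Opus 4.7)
The argument breaks into two independent parts: bounding the number of vertices in $G/R$, and lower-bounding the survival probability of a fixed non-singleton $(2-\eps)$-minimum cut.

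For the vertex count, the plan is to show that with high probability every connected component of $R$ has size $\Omega(\delta)$, which immediately yields $O(n/\delta)$ components and hence $O(n/\delta)$ vertices of $G/R$. For a fixed subset $S \subseteq V$ of size $s$, the event ``$S$ is a union of connected components of $R$'' requires each $u \in S$ to sample both of its two incident edges from among the at most $s-1$ edges joining $u$ to other vertices of $S$; using $d(u) \geq \delta$ and independence across the vertices of $S$, this event has probability at most $((s-1)/\delta)^{2s}$. A standard union bound over such $S$ in the relevant size range, combined with the trivial observation that any remaining large components contribute at most $O(n/\delta)$ to the total count, yields the claim.

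For cut preservation, fix a non-singleton cut $(A,B)$ with $\alpha := \abs{\partial(A)} \leq (2-\eps)\lambda$, and let $k_u$ denote the number of cut edges incident to $u$. Since each vertex samples its two edges independently and with replacement, and since the cut is preserved exactly when no cut edge lies in $R$,
\[
\Pr[\text{cut preserved}] \;=\; \prod_{u \in V} \left(1 - \frac{k_u}{d(u)}\right)^{2}.
\]
From $d(u) \geq \delta \geq \lambda$ and $\sum_u k_u = 2\alpha$ I obtain the key sum bound $\sum_u k_u/d(u) \leq 2(2-\eps)$. Call a vertex $u$ \emph{heavy} if $k_u/d(u) > 1/2$; the same sum bound caps the number of heavy vertices by $4(2-\eps) = O(1)$. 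For the light vertices, a convexity estimate such as $\ln(1-x) \geq -2x$ on $[0,1/2]$ together with the sum bound shows that their combined contribution to the product is at least $\exp(-8(2-\eps))$, an absolute positive constant.

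The main obstacle lies with the heavy vertices, where the factors $(1-k_u/d(u))^2$ could a priori be arbitrarily small. Here the non-singleton hypothesis plays an essential role: if some $u \in A$ had $k_u$ close to $d(u)$, then moving $u$ from $A$ to $B$ (which is a legal reassignment since $|A| \geq 2$) produces a new cut of size $\alpha + d(u) - 2k_u$, and forcing this to be at least $\lambda$ by minimality gives $2k_u \leq d(u) + (1-\eps)\lambda$, i.e.\ $1 - k_u/d(u) \geq \eps/2$ using $d(u) \geq \lambda$. This provides a uniform pointwise lower bound of $\eps^2/4$ on each heavy factor, and multiplying the $O(1)$ heavy factors against the light-vertex bound gives $\Pr[\text{cut preserved}] \geq p_\eps > 0$ for a positive constant $p_\eps$ depending only on $\eps$, as required.
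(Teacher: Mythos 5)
Your Part~2 (cut preservation) is correct. Factoring the survival probability as $\prod_u(1-k_u/d(u))^2$, bounding $\sum_u k_u/d(u)\le 2\alpha/\delta\le 2(2-\eps)$, splitting into heavy and light vertices, and—crucially—using the non-singleton hypothesis to shift a heavy vertex across the cut so that minimality of $\lambda$ forces $1-k_u/d(u)\ge\eps/2$, gives a valid pointwise lower bound on each heavy factor and hence the constant $p_\eps>0$. (For the record, the paper itself does not prove this statement; it is imported verbatim from Ghaffari--Nowicki--Thorup, so there is no internal proof to compare against.)

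Part~1, however, has a genuine gap, and the intermediate claim you set out to prove is in fact false in general: it is \emph{not} true that with high probability every component of $R$ has size $\Omega(\delta)$. As a counterexample, take $G$ to be a disjoint union of cliques of size $\delta+1$ with $\delta=n^{1/4}$. For a fixed pair $u,v$ inside one clique, the event that both of $u$'s samples equal $v$, both of $v$'s samples equal $u$, and every other vertex of that clique avoids $\{u,v\}$ in both of its samples has probability $\Theta(\delta^{-4})$. Summing over $\Theta(\delta^2)$ pairs per clique and $\Theta(n/\delta)$ cliques gives $\Theta(n/\delta^3)=\Theta(n^{1/4})$ expected size-$2$ components, and since the cliques are independent a second-moment argument shows that $\Theta(n^{1/4})$ size-$2$ components exist w.h.p.\ even though $\delta=n^{1/4}\to\infty$. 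The theorem's conclusion $O(n/\delta)$ still holds in this example because the small components are few, but your plan of forbidding them entirely cannot succeed. Relatedly, the union bound you invoke does not converge in the range you need: from $\binom{n}{s}\le(en/s)^s$ one gets $\binom{n}{s}((s-1)/\delta)^{2s}\le(ens/\delta^2)^s$, whose base exceeds $1$ once $s\ge\delta^2/(en)$, so summing up to $s=\Theta(\delta)$ fails for every $\delta=o(n)$. The actual GNT argument for the vertex bound controls the number of vertices that land in small components (permitting a few small ones) rather than ruling them out, and is considerably more delicate than a union bound over subsets.
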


\paragraph{Sequential and Parallel Constructions of $k$-Connectivity Certificates.}

Given a graph $G=(V,E)$, a \emph{$k$-connectivity certificate} $H$ of $G$ is a subgraph of $G$ that preserves all cuts of size at most $k$. Concretely, for any vertex set $S \subseteq V$, 
$\min\{k, |E_H(S,V\setminus S)|\} = \min\{k, |E_G(S,V\setminus S)|\}$. Nagamochi and Ibaraki~\cite{nagamochi1992linear} designed a sequential algorithm for computing a \emph{sparse} $k$-connectivity certificate in linear time. Below we shall review an algorithm that does not run in linear time but it's simpler and suffices for our purposes.

%\mh{The given algorithm does not run in linear time, the NI algorithm is more sophisticated. You should mention this.}

\begin{algorithm}
\textbf{Input: }A graph \textbf{$G=(V,E)$ }with $n$ vertices, and
a parameter $k$. \\
\textbf{Output: } A $k$-connectivity certificate $H$ of $G$.
\begin{enumerate}
\itemsep0em
\item Set $G_{1}\gets G$. 
\item For $i \gets 1,\ldots,k$:
\begin{enumerate}
\itemsep0em
\item Find a spanning forest $F_{i}$ of $G_{i}$.
\item Set $G_{i+1}\gets G_{i}\setminus F_{i}$.
\end{enumerate}
\item Return $H=\cup_{i=1}^{k}F_{i}$.
\end{enumerate}
\caption{\label{alg:seq certificate}A sequential algorithm for computing a $k$-edge connectivity certificate}
\end{algorithm}
\begin{theorem}[\cite{nagamochi1992linear}]
\label{thm:seq certificate}Given a graph $G=(V,E)$ with $n$ vertices
and an integer parameter $k \geq 1$, \Cref{alg:seq certificate} returns
a $k$-connectivity certificate $H$ of $G$ containing $O(nk)$ edges.
\end{theorem}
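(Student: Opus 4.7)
The plan is to prove the two assertions separately. The size bound $|H|=O(nk)$ is immediate from the construction: each $F_i$ is a spanning forest of the $n$-vertex graph $G_i$, hence has at most $n-1$ edges, and $H$ is the union of the $k$ forests, so $|H| \leq k(n-1)$.

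For the certificate property, I would fix an arbitrary $S \subseteq V$, set $c := |E_G(S,V\setminus S)|$, and aim to show $|E_H(S,V\setminus S)| \geq \min\{k,c\}$; the reverse inequality is trivial since $H \subseteq G$. Writing $c_i := |E_{G_i}(S,V\setminus S)|$ and $f_i := |E_{F_i}(S,V\setminus S)|$, the recursive definition $G_{i+1}=G_i\setminus F_i$ yields $c_{i+1}=c_i-f_i$ with $c_1=c$, and since the forests $F_i$ are pairwise edge-disjoint we have $|E_H(S,V\setminus S)| = \sum_{i=1}^k f_i$.

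The heart of the argument is the structural observation that whenever $c_i \geq 1$ one also has $f_i \geq 1$: any crossing edge $(u,v) \in E_{G_i}(S,V\setminus S)$ places $u$ and $v$ in the same connected component of $G_i$, hence in the same tree of the spanning forest $F_i$, and the unique $u$--$v$ path in that tree must itself contain at least one edge crossing the cut. Iterating this, either $c_i \geq 1$ throughout $i=1,\ldots,k$, in which case $\sum_{i=1}^k f_i \geq k$ and $|E_H(S,V\setminus S)| \geq k \geq \min\{k,c\}$; or some $c_j=0$ with $j \leq k$, meaning every one of the original $c$ cut edges has already been picked up by some $F_i$, so $|E_H(S,V\setminus S)|=c$ and necessarily $c<k$, again giving the bound.

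There is no real obstacle in this argument; the only non-trivial ingredient is the spanning-forest observation just above, which is the standard fact that a spanning forest preserves intra-component connectivity and therefore intersects every non-empty cut of its underlying graph. The rest is a clean one-representative-per-round peeling: at each iteration we either extract a new edge from the surviving cut or we have already exhausted it, which is precisely what the $k$-connectivity certificate definition demands.
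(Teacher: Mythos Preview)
Your argument is correct and is exactly the standard proof of this fact; the paper itself does not prove the theorem but simply cites it from Nagamochi--Ibaraki, so there is nothing to compare against beyond noting that your write-up is the classical peeling argument.

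One small inaccuracy: in the second branch of your dichotomy you assert ``necessarily $c<k$''. This need not hold, since a single forest $F_i$ may pick up many crossing edges at once (e.g.\ if all $c$ crossing edges lie in one spanning tree $F_1$, then $c_2=0$ already with $j=2$ regardless of how large $c$ is). Fortunately this claim is not used: once you have $|E_H(S,V\setminus S)|=c$ the bound $|E_H(S,V\setminus S)|\ge\min\{k,c\}$ is immediate, so you can simply drop the clause ``and necessarily $c<k$''.
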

%Although we will use \Cref{alg:seq certificate} at some point in our algorithm, 
Observe that \Cref{alg:seq certificate} for constructing a $k$-connectivity certificate computes $k$ \emph{nested}
spanning forests. When $k$ is large, this long chain of dependency
is too inefficient in the dynamic setting. Although \Cref{alg:seq certificate} will prove useful at some point in our final construction, we need to bypass this dependency issue. To this end, we will exploit an alternative
$k$-connectivity certificate construction by Daga et al.~\cite{daga2019distributed} which was
developed in the context of distributed and parallel algorithms. We describe this construction in \Cref{alg:para certificate}. The key advantage of this algorithm
is that it reduces the $k$-connectivity certificate problem to $O(k/\log n)$
instances of $k'$-connectivity certificate where $k'=O(\log n)$. This suggests that we can afford using algorithms even with polynomial dependency on $k'$ since $k'$ is logarithmic on the number of nodes.

\begin{algorithm}
\textbf{Input: }A graph \textbf{$G=(V,E)$ }with $n$ vertices, and
a parameter $k$. \\
\textbf{Output: }A $k$-connectivity certificate $H$ of $G$.
\begin{enumerate}
\itemsep0em
\item Choose $c \gets k/(4\tau\ln n)$ where $\tau$ is a big enough constant and $c$ is an integer. Let $k' \gets \left\lceil 6\tau\log n\right\rceil$.
\item Randomly color each edge of $G$ using colors from $\{1,\dots,c\}$.
Let $E_{i}$ be a set of edges with color $i$. Let $G_{i}=(V,E_{i})$.
\item For $i \gets 1, \ldots, c$:
\begin{enumerate}
\itemsep0em
    \item Apply \Cref{alg:seq certificate} to compute a $k'$-connectivity certificate $H_i$ of $G_{i}$.
\end{enumerate} 
\item Return $H=\cup_{i=1}^{c}H_{i}$.
\end{enumerate}
\caption{\label{alg:para certificate}A parallel algorithm for computing a $k$-edge
connectivity certificate}
\end{algorithm}
\begin{theorem}
[Section 3 of \cite{daga2019distributed}]\label{thm:para certificate}Given
a graph $G=(V,E)$ with $n$ vertices and an integer parameter $k$, 
\Cref{alg:para certificate} returns a subgraph $H$ of $G$ such that $\textup{(a)}$ $H$ contains $O(nk)$ edges, and $\textup{(b)}$ $H$ is a $k$-connectivity
certificate of $G$ with high probability. 
\end{theorem}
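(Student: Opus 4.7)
The plan is to verify parts (a) and (b) separately. For part (a), each sub-certificate $H_i$ returned by Algorithm~\ref{alg:seq certificate} applied to $G_i$ has $O(nk')$ edges by Theorem~\ref{thm:seq certificate}. Summing over the $c$ color classes and using $ck' = (k/(4\tau \ln n)) \cdot \lceil 6\tau \log n \rceil = O(k)$, we obtain $|E(H)| = \sum_i |E(H_i)| = O(c \cdot nk') = O(nk)$, as claimed.

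For part (b), I would fix any cut $(S, V\setminus S)$ with $d \defeq |E_G(S, V\setminus S)|$ crossing edges, and let $d_i \defeq |E_{G_i}(S, V\setminus S)|$. Since each edge is colored independently and uniformly at random from $[c]$, we have $d_i \sim \mathrm{Bin}(d, 1/c)$ with $\mathbb{E}[d_i] = d/c$ and $\sum_i d_i = d$. Because $H_i$ is a $k'$-connectivity certificate of $G_i$ and the color classes (hence the $H_i$) are pairwise edge-disjoint,
$$|E_H(S, V\setminus S)| \;=\; \sum_{i=1}^{c} |E_{H_i}(S, V\setminus S)| \;\geq\; \sum_{i=1}^{c} \min(d_i, k').$$
Establishing the certificate property therefore reduces to proving, with high probability and simultaneously over all cuts, that $\sum_i \min(d_i, k') \geq \min(d, k)$.

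For a single fixed cut, a multiplicative Chernoff argument establishes the latter inequality with probability $1 - n^{-\Omega(\tau)}$. When $d \leq k$, the inequality reduces to $d_i \leq k'$ for every $i$, since then $\sum_i \min(d_i, k') = \sum_i d_i = d$: here $\mathbb{E}[d_i] = d/c \leq 4\tau \ln n$ while $k' \geq 6\tau \ln n$ sits a factor at least $1.5$ above the mean, so Chernoff yields $\Pr[d_i > k'] \leq n^{-\Omega(\tau)}$, and I would union bound over $i \in [c]$. When $d > k$, the expected sum $\sum_i \mathbb{E}[\min(d_i, k')]$ is at least $\min(d, ck') = \min(d, 1.5k) \geq k$, and standard concentration for sums of bounded independent variables gives $\sum_i \min(d_i, k') \geq k$ with the same high probability.

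The hard part will be lifting this per-cut bound to a simultaneous bound over all relevant cuts of $G$, since in general the number of cuts of size at most $k$ can be super-polynomial in $n$. Here I would invoke Karger's near-minimum cut counting theorem, which bounds the number of cuts of size at most $\alpha \lambda(G)$ by $O(n^{2\alpha})$. The key observation is that the Chernoff estimate above sharpens as $d$ decreases (the mean $d/c$ shrinks with $d$, making the deviation to $k'$ proportionally larger). Stratifying cuts into dyadic size ranges and matching the cuts at each scale against the correspondingly stronger tail bound, the sum of failure contributions can be absorbed into $n^{-\Omega(1)}$ provided $\tau$ is chosen a sufficiently large constant. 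Combining the two regimes then shows that $H$ is a $k$-connectivity certificate of $G$ with high probability, completing the proof.
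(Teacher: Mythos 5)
The paper does not prove Theorem~\ref{thm:para certificate}: it is stated verbatim as a citation to Section~3 of Daga et al.~\cite{daga2019distributed}, so there is no in-paper argument to compare against. Your sketch is therefore a reconstruction, and its overall shape (random $c$-coloring, Chernoff concentration of the per-color cut sizes $d_i$, union bound over cuts via Karger's counting) is indeed the right high-level strategy and matches the cited source.

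Two points in your reconstruction do not hold as written. First, in the $d > k$ regime you claim $\sum_i \mathbb{E}[\min(d_i,k')] \geq \min(d,ck')$; this runs the wrong way, since $x \mapsto \min(x,k')$ is concave and Jensen gives $\mathbb{E}[\min(d_i,k')] \leq \min(d/c,k')$, so the expectation argument does not directly produce a lower bound. The correct fix is a direct lower tail for each $d_i$: when $d > k$ one has $\mathbb{E}[d_i] = d/c > 4\tau\ln n$, so by Chernoff each $d_i \geq 4\tau\ln n$ simultaneously with probability $1-n^{-\Omega(\tau)}$, whence $\sum_i \min(d_i,k') \geq c\cdot 4\tau\ln n = k$.

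Second, and more seriously, the union bound over cuts is not as routine as the dyadic stratification you gesture at. Karger's counting theorem gives at most $n^{2\alpha}$ cuts of size at most $\alpha\lambda(G)$; in the regime where the theorem is applied here, $k$ can be as large as $\Theta(\delta)$ while $\lambda(G)$ may be much smaller, so the number of cuts of size at most $k$ can be as large as $n^{\Theta(k/\lambda)}$, which is super-polynomial in $n$. The per-cut failure probability you derive is $n^{-\Omega(\tau)}$ for cuts of size near $k$ (and only improves by constant factors in the exponent as $d$ shrinks toward $\lambda$, since $k'-d/c$ changes by at most a constant factor over $d \in [\lambda,k]$). A naive stratified union bound therefore does not close the gap: at scale $d$ you would need per-cut failure probability roughly $n^{-\Omega(d/\lambda)}$, which your Chernoff estimate does not deliver for $d$ close to $k$. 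Making this argument airtight requires a genuinely different device — for instance the cut-projection or edge-strength machinery used in sparsification proofs, or a more careful accounting specific to Nagamochi--Ibaraki forests — rather than a dyadic union bound alone. You correctly flag this as the hard part, but I would not treat the sketch as establishing part (b); it is a plan pointing at the correct ingredients, with the crucial step left open.
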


\paragraph{Spanning Forest From $\ell_{0}$-Sampling Sketches.}
A well-known tool in the streaming algorithms literature is the $\ell_{0}$-sampling technique. This tool is particularly useful in the context of the following natural problem: given a $N$-dimensional vector $x$, we would like to construct a data structure supporting the following:
\begin{itemize}
\itemsep0em
    \item if $x = 0$, it reports \texttt{null},
    \item if $x \neq 0$, then it returns some $e \in \support(x)$ with high probability,
\end{itemize}
where $\support(x)$ consists of all \emph{non-zero} entries of $x$. In our concrete application, $x$ will correspond to a vertex of the graph and we will store the edges incident to $x$ in such a data structure, i.e.,~$\support(x)$ will be the edges incident to $x$. By itself such a data structure is trivial: just keep an adjacency list for every vertex $x$. However, 
for the concrete implementation of the above data structure, we use a linear function, widely reffered to as a \emph{linear sketching transformation} (aka linear sketch) $\Pi$, such that given $\Pi(x)$ for $x$ and $\Pi(y)$ for $y$, we get that $\Pi(x) + \Pi(y)$ gives a data structure that (1) returns an edge incident to $x$ or $y$ (but not to both) and (2) $\Pi(x) + \Pi(y)$ can be computed in $\tilde O(1)$ time.

More formally, in the theorem below we present the main result relating the $\ell_0$-sampling technique to the above data structure problem and focusing on running time instead of space guarantees.
%The following $\ell_{0}$-sampling sketch is well-known and used extensively
%in streaming algorithms.
\begin{theorem}[\cite{cormode2014unifying}]
\label{thm:sketch}For any dimension parameter $N$, there is a randomized algorithm for constructing a representation of a linear sketching transformation $\Pi:\mathbb{R}^{N}\rightarrow\mathbb{R}^{O(\log^{3}N)}$
in $\Otil(1)$ time, such that for any vector $x \in \mathbb{R}^{N}$,
\begin{enumerate}
\itemsep0em
    \item we can compute $\Pi (x)$ in $\Otil(|\support(x)|)$ time, and
    \item given $\Pi (x)$, we can obtain a \emph{non-zero} entry of $x$ in $\Otil(1)$ time with high probability.
\end{enumerate}
\end{theorem}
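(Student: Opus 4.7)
The plan is to implement the sketch via the standard geometric-subsampling and $1$-sparse recovery construction.  First I would fix $L \defeq \lceil \log_2 N \rceil + 1$ geometric sampling levels.  For each level $i \in \{0,1,\ldots,L\}$ pick a pairwise independent hash function $h_i:[N]\to\{0,1\}$ with $\Pr[h_i(j)=1]=2^{-i}$, and let $S_i\defeq\{j:h_i(j)=1\}$.  At each level $i$ the sketch will store a $1$-sparse recovery structure for the restricted vector $x|_{S_i}$, all of whose fields are linear in $x$.

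The $1$-sparse recovery at level $i$ stores the three linear sums
\[ \sigma_0^{(i)} \defeq \sum_{j\in S_i} x_j,\qquad \sigma_1^{(i)} \defeq \sum_{j\in S_i} j\cdot x_j,\qquad \phi^{(i)} \defeq \sum_{j\in S_i} x_j\cdot r^{\,j}\ \bmod p, \]
where $p$ is a random prime of size $\Theta(N^c)$ and $r\in\mathbb{Z}_p$ is uniformly random.  When $x|_{S_i}$ has exactly one non-zero entry $j^\star$, we recover $j^\star=\sigma_1^{(i)}/\sigma_0^{(i)}$ and $x_{j^\star}=\sigma_0^{(i)}$, and we verify $1$-sparsity by checking $\phi^{(i)}=x_{j^\star}\cdot r^{\,j^\star}\bmod p$.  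A standard polynomial-identity argument shows this verification fails on a non-$1$-sparse level with probability at most $N/p = 1/\poly(N)$.

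Correctness follows by picking the level $i^\star$ with $2^{i^\star}\in[k,2k]$, where $k\defeq|\support(x)|$: the expected number of non-zero entries surviving into $S_{i^\star}$ is between $1/2$ and $1$, and by pairwise independence a second-moment / inclusion-exclusion argument gives $\Pr[|\support(x)\cap S_{i^\star}|=1]\ge \Omega(1)$.  Repeating the whole construction $\Theta(\log N)$ times independently and returning the first valid recovered entry boosts the success probability of decoding to $1-1/\poly(N)$.

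Finally I would account for the resources.  Each $1$-sparse recovery structure occupies $O(\log N)$ scalar fields, there are $L=O(\log N)$ levels per copy and $O(\log N)$ independent copies, giving sketch dimension $O(\log^3 N)$.  Sampling the hash seeds, the prime $p$, and the element $r$ takes $\tilde O(1)$ time, which yields the representation of $\Pi$.  To compute $\Pi(x)$ one iterates over $\support(x)$ and for each non-zero index $j$ updates the $O(\log^2 N)$ fields (one per level per copy) it affects, each update costing $\tilde O(1)$ arithmetic in $\mathbb{Z}_p$; the total is $\tilde O(|\support(x)|)$.  Decoding scans $O(\log^2 N)$ candidate levels in $\tilde O(1)$ time each.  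The main obstacle is implementing the fingerprint \emph{linearly} in $x$ while still allowing a $\Theta(\log N)$-factor union bound over all candidate recoveries across all levels and repetitions; this is why the verification uses a single random evaluation point $r$ over a large prime field rather than a collision-free hash, and why the coefficients $r^{\,j}$ must be precomputable in $\tilde O(1)$ per index $j$ encountered in $\support(x)$.
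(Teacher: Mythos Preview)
The paper does not prove this theorem; it is stated as a black-box result imported from \cite{cormode2014unifying}, with only the remark that ``the sketch $\Pi$ can be represented using simple hash functions'' so that initialization takes $\Otil(1)$ time. There is therefore no paper proof to compare against.

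That said, your construction is precisely the standard $\ell_0$-sampler of Cormode--Firmani (geometric subsampling plus $1$-sparse recovery with a polynomial-fingerprint verifier, amplified by $O(\log N)$ independent copies), and your accounting of the sketch dimension, initialization time, update time, and decoding time is correct. One small caveat: the theorem is stated for $x\in\mathbb{R}^N$, but your fingerprint $\phi^{(i)}=\sum_{j\in S_i} x_j r^{j}\bmod p$ only makes sense for integer (or $\mathbb{Z}_p$-valued) entries; this is harmless here because the paper only ever applies the sketch to rows of the signed incidence matrix, whose entries lie in $\{-1,0,1\}$, but it is worth flagging that the construction as written does not handle arbitrary reals.
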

The sketch $\Pi$ can be represented using simple hash functions (see \cite{cormode2014unifying}) and avoids explicitly representing $\Pi$ as a matrix of size $O(\log^3N)\times N$.
This is why it only takes $\Otil(1)$ to initialize $\Pi$.

We call $\Pi (x)$ an \emph{$\ell_{0}$-sampling sketch}
of $x$. Our algorithm will maintain the $\ell_{0}$-sampling sketch
of the row-vectors of the signed vertex-edge incidence matrix of a graph $G$, which is defined as follows. Given a graph $G=(V,E)$ with $n$ vertices, the \emph{signed vertex-edge incidence
matrix} $B\in\{-1,0,1\}^{n\times\binom{n}{2}}$ of $G$ is
\[
B_{u,(v,w)}:=\begin{cases}
1 & (v,w)\in E\text{ and }u=v\\
-1 & (v,w)\in E\text{ and }u=w\\
0 & \text{otherwise.}
\end{cases}
\]
Let $b_{u}\in\{-1,0,1\}^{\binom{n}{2}}$ denote the $u$-th row of
$B$. We observe that one can efficiently compute and update the sketches $\Pi(b_u)$ for all $u \in V$.

%Observe that we can compute and update the sketches $\{\sk_{\ell_{0}}(B_{u})\}_{u\in V}$
%quickly.
\begin{proposition}
\label{prop:update sketch}Given a $n$-vertex graph $G$ with $m$ edges, there is an algorithm to compute a linear transformation $\Pi(b_u)$ for all $u \in V$ in
$\Otil(m)$ time. Upon an edge insertion or deletion in $G$, one can update the sketches $\Pi(b_u)$ for all $u \in V$ in $\Otil(1)$ time.
\begin{proof}
%Note that $\Pi$ can be represented as a matrix of size $O(\log^{3}n) \times \binom{n}{2}$. 
When computing $\Pi(b_u)$,
we only spend $\Otil(|\support(b_u)|)$ time by \Cref{thm:sketch} (1).
Since the incidence matrix $B$ contains only $\sum_u|\support(b_u)| = O(m)$ non-zeros, the first claim of the proposition follows. The second claim holds since (i) each edge update affects only two entries of $B$ and (ii) $\Pi$ is a linear transformation. 
Concretely, let $(u,v)$ be the updated edge and let $e_u$ and $e_v$ be the elementary unit vectors with the non-zero entry only at $u$ and $v$, respectively. 
We start by computing $\Pi(e_u)$ and $\Pi(e_v)$ in $\Otil(1)$ time, and then proceed to evaluating $\Pi(b_u) \pm \Pi(e_u)$ and $\Pi(b_v) \mp \Pi(e_v)$ in $\Otil(1)$ time, where the sign depends on whether $(u,v)$ is inserted or deleted.
\end{proof}
\end{proposition}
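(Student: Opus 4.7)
The plan is to handle initialization and updates separately, with the key observations being that the signed vertex--edge incidence matrix $B$ is very sparse and that $\Pi$ is a linear map, so incremental updates can be pushed through $\Pi$ entry by entry.

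For the initialization, I would simply apply part (1) of Theorem~\ref{thm:sketch} once to each row $b_u$ of $B$. Each such call costs $\Otil(|\support(b_u)|)$ time. The crucial observation is that each edge $(v,w)\in E$ contributes a non-zero entry to exactly two rows of $B$, namely $b_v$ and $b_w$, so $\sum_{u\in V}|\support(b_u)| = 2m$. Summing the per-row cost therefore gives $\Otil(m)$ total time, as required. The transformation $\Pi$ itself is generated once at the outset in $\Otil(1)$ time by Theorem~\ref{thm:sketch}.

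For an edge update $(u,v)$, I would exploit the fact that inserting or deleting $(u,v)$ changes $B$ in only two positions: the entry $B_{u,(u,v)}$ and the entry $B_{v,(u,v)}$. Let $\mathbf{1}_{(u,v)}\in\mathbb{R}^{\binom{n}{2}}$ denote the standard basis vector whose unique non-zero coordinate corresponds to the edge $(u,v)$. Then the updated row satisfies $b_u^{\textrm{new}} = b_u^{\textrm{old}} \pm \mathbf{1}_{(u,v)}$ and $b_v^{\textrm{new}} = b_v^{\textrm{old}} \mp \mathbf{1}_{(u,v)}$, where the signs are determined by whether the update is an insertion or a deletion and by the orientation convention in the definition of $B$. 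By linearity of $\Pi$,
\[
\Pi(b_u^{\textrm{new}}) = \Pi(b_u^{\textrm{old}}) \pm \Pi(\mathbf{1}_{(u,v)}), \qquad \Pi(b_v^{\textrm{new}}) = \Pi(b_v^{\textrm{old}}) \mp \Pi(\mathbf{1}_{(u,v)}).
\]
Since $|\support(\mathbf{1}_{(u,v)})|=1$, another application of Theorem~\ref{thm:sketch}(1) computes $\Pi(\mathbf{1}_{(u,v)})$ in $\Otil(1)$ time; adding or subtracting it to the stored sketches $\Pi(b_u)$ and $\Pi(b_v)$ is also $\Otil(1)$ since the sketches live in $\mathbb{R}^{O(\log^3 n)}$. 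This gives the claimed $\Otil(1)$ update time.

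There is no serious obstacle here; the proof is essentially a bookkeeping argument combining sparsity of $B$ with linearity of $\Pi$. The only mild subtlety is being careful about signs when defining the basis-vector update, which depends on the orientation chosen for the edge $(u,v)$ in the definition of $B$, but this only affects which of $\pm$ or $\mp$ appears and does not change the running time analysis.
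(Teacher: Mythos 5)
Your proof is correct and takes essentially the same route as the paper: bound the initialization by summing $|\support(b_u)|$ over rows, then use linearity of $\Pi$ to push the two-entry change of $B$ through the sketch in $\Otil(1)$ time. If anything, your use of a single edge-indexed basis vector $\mathbf{1}_{(u,v)}$ is cleaner than the paper's slightly ambiguous phrasing ``$e_u$ and $e_v$ with the non-zero entry at $u$ and $v$,'' since the rows $b_u$ live in the edge-indexed space $\mathbb{R}^{\binom{n}{2}}$ and the unique changed coordinate is the one corresponding to the edge $(u,v)$.
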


The sketches $\Pi(b_u)$ for all $u \in V$ are particularly useful since for any given any set $S\subseteq V$, they allow us to obtain an edge crossing the cut $(S,V\setminus S)$ in $\Otil(|S|)$ time. This is better than the trivial approach which scans through all edges incident to $S$ and takes $O(|S|^{2} + |E(S, V \setminus S)|)$ time in the worst case. 
More formally, let $b_S := \sum_{u \in S} b_u$ for any vertex set $S \subseteq V$. Now, by \Cref{thm:sketch} each $\Pi(b_u)$ can be queried in $\Otil(1)$ time. Using the linearity of $\Pi$, we compute $\Pi(b_S) = \sum_{u \in S} \Pi(b_u)$ in $\Otil(|S|)$ time. Observing that non-entries of $b_S$ correspond exactly to the edges crossing the cut $(S, V \setminus S)$, we can obtain one of these edges from $\Pi(b_S)$ in $\Otil(1)$ time. This observation has proven useful for building a spanning forest of a graph $G$ in $\Otil(n)$ time\footnote{Note that this sub-linear in the size of the graph since $G$ has $m$ edges} when give access to the sketches $\Pi(b_u)$, $u \in V$. More precisely, it is implicit in previous works~\cite{ahn2012analyzing,kapron2013dynamic} that if $O(\log n)$ \emph{independent} copies of the sketches are maintained, then a spanning forest can be constructed using Boruvka's algorithm, as summarized in the theorem below.

%$\sk_{\ell_{0}}(B_{S})=\sum_{u\in S}\sk_{\ell_{0}}(B_{u})$ in $\Otil(|S|)$
%time and, as non-zeroes entries of $B_{S}$ exactly corresponds to
%an edge crossing the cut $(S,V\setminus S)$, so we can obtain such
%edge from $\sk_{\ell_{0}}(B_{S})$ in $\Otil(1)$ time. In fact, given
%a $O(\log n)$ copies of the sketches $\{\sk_{\ell_{0}}(B_{u})\}_{u\in V}$,
%this allows us to build a spanning forest of $G$ in time $\Otil(n)$
%instead of $O(m)$ by following Boruvka's spanning tree algorithm.
%The fact was actually implicit in several previous work \cite{ahn2012analyzing,kapron2013dynamic},
%which will be crucial for us.
\begin{theorem}[\cite{ahn2012analyzing,kapron2013dynamic}]
\label{thm:forest from sketch}Let $G=(V,E)$ be an $n$-vertex graph. Let $\Pi_1,\ldots,\Pi_{O(\log n)}$
be linear transformations for the $\ell_{0}$-sampling sketch from
\Cref{thm:sketch} generated independently. Let $\Pi_i(b_u)$ be the sketches for all vertices $u\in S$ and all $i=1,\ldots, O(\log n)$. Then there is an algorithm to construct a spanning forest of $G$
in $\Otil(n)$ time with high probability.
\end{theorem}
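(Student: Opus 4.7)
The plan is to implement Boruvka's classical spanning forest algorithm, using a fresh independent sketch $\Pi_i$ in each of its $O(\log n)$ rounds. Initially every vertex forms its own component. In round $i$, for every current component $S$ we compute, by linearity of the sketch,
\[
\Pi_i(b_S) \;=\; \sum_{u \in S} \Pi_i(b_u),
\]
and then apply the recovery algorithm of Theorem~\ref{thm:sketch} to obtain either \texttt{null} (in which case $S$ is already a connected component of $G$) or an index $(v,w)$ with $(b_S)_{(v,w)} \neq 0$, which is an edge crossing $(S,V\setminus S)$. Indeed, in the incidence matrix $B$ the entry $(v,w)$ contributes $\pm 1$ to $b_S$ iff exactly one of $v,w$ lies in $S$, so the nonzero coordinates of $b_S$ are precisely the edges in $E(S, V\setminus S)$. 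Each recovered edge is added to the output forest and its two incident components are merged in a union-find structure. By the textbook analysis of Boruvka's algorithm, the number of non-final components at least halves after each round, so after $O(\log n)$ rounds every surviving component coincides with a connected component of $G$ and the accumulated edges form a spanning forest.

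For the running time, each per-vertex sketch $\Pi_i(b_u)$ has $O(\log^3 n)$ entries (Theorem~\ref{thm:sketch}), so computing the aggregated component sketch $\Pi_i(b_S)$ takes $\Otil(|S|)$ time. Since the components in any single round partition $V$, the total aggregation cost in that round is $\Otil(n)$. Per-component recovery costs $\Otil(1)$, and union-find bookkeeping adds only polylogarithmic overhead, so each round runs in $\Otil(n)$ time and the overall algorithm in $\Otil(n\log n) = \Otil(n)$.

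The main obstacle is ensuring correctness of $\ell_0$-sampling across rounds: Theorem~\ref{thm:sketch} guarantees high-probability recovery only for a query vector fixed \emph{before} $\Pi$ is drawn, yet the component $S$ queried in round $i$ depends on the random outcomes in rounds $1,\ldots,i-1$. Drawing a fresh independent sketch $\Pi_i$ per round circumvents this, which is exactly why the theorem statement furnishes $O(\log n)$ independent copies: conditioned on the history of components formed up through round $i-1$, the sketch $\Pi_i$ is independent of the query $S$, so Theorem~\ref{thm:sketch}'s recovery guarantee applies verbatim. A union bound over the $O(\log n)$ rounds and the at most $n$ queries per round yields overall success with high probability.
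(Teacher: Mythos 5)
Your proposal is correct and takes essentially the same approach the paper describes (and attributes to \cite{ahn2012analyzing,kapron2013dynamic}): running Boruvka's algorithm for $O(\log n)$ rounds with a fresh, independent $\ell_0$-sampling sketch per round, precisely so that the component queries in round $i$ are independent of the randomness of $\Pi_i$. The paper states the theorem by citation without spelling out a proof; your writeup supplies the missing details, including the crucial independence/adaptivity observation, correctly.
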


\paragraph{Sum of $\ell_{0}$-Sampling Sketches via Dynamic Tree.} Suppose our goal is to maintain a data structure for a (not necessarily spanning) forest $F$ of a graph $G$. We next review a result that allows us to a build a data structure on $F$ such that for any connected component $S$ of $F$, we can compute $\Pi(b_S)$ in $\Otil(1)$ time, which is faster than the previous approach that yielded an $\Otil(|S|)$ time algorithm by explicitly summing $\sum_{u \in V} \Pi(B_u)$. This construction is implicit in \cite{kapron2013dynamic} and can be alternatively obtained by combining Theorem 4.16 of \cite{NanongkaiS17} with \Cref{thm:sketch} and \Cref{prop:update sketch}. 

%The following data structure is implicit in \cite{kapron2013dynamic}.
%Suppose that we are maintain a forest $F$. It says that we can build
%a data structure on $F$ so that, given any connected component $S$
%of $F$, we can compute the sum of the sketches $\sk_{\ell_{0}}(B_{S})$
%in $\polylog(n)$ time instead of explicitly summing $\sum_{u\in V}\sk_{\ell_{0}}(B_{u})$
%in $\Otil(|S|)$ time. 
\begin{theorem}
\label{thm:sketch tree} Let $\Pi$ be a linear transformation for the $\ell_0$-sampling sketch from~\Cref{thm:sketch}. There is a data structure $\mathcal{D}(\Pi,G,F)$ that maintains an $n$-vertex graph $G=(V,E)$ and a (not necessarily spanning) forest $F$ on the same vertex set $V$ that supports the following operations
%Given a linear transformation for $\ell_{0}$-sampling
%sketch $\sk_{\ell_{0}}:\mathbb{R}^{\binom{n}{2}}\rightarrow\mathbb{R}^{O(\log^{3}n)}$
%from \Cref{thm:sketch}. There is a data structure ${\cal D}(\sk_{\ell_{0}},G,F)$
%that maintains a graph $G=(V,E)$ with $n$ vertices and a forest
%$F$ on the same vertex set $V$ (but $F$ might not be a subgraph
%of $G$) and supports the following operations:
\begin{itemize}
\itemsep0em
\item insert or delete an edge in $G$ in $\Otil(1)$ time,
\item insert or delete an edge in $F$ (as long as $F$ remains a forest) in $\Otil(1)$ time, and
\item given a pointer to a connected component $S$ of $F$,  return $\Pi(b_S)=\sum_{u\in V}\Pi(b_{u})$, where $b_{u}$ is the $u$-th row of the incidence matrix $B$ of $G$, in $\Otil(1)$ time.
\end{itemize}
\end{theorem}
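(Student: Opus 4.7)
The plan is to maintain $F$ using an augmented balanced dynamic-tree data structure --- Euler-tour trees (ET-trees) or top trees --- and to decorate every cluster/internal node of the structure with the sum of $\Pi(b_u)$ taken over all vertices $u$ that the cluster represents. Because $\Pi$ is linear and every sketch has dimension $O(\log^3 n)$ by Theorem~\ref{thm:sketch}, combining two aggregates via coordinate-wise addition takes $\Otil(1)$ time; this is the semigroup operation that the dynamic tree will maintain as a subtree aggregate.

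First I would maintain the individual vertex sketches $\Pi(b_u)$ as $G$ is updated, exactly as in Proposition~\ref{prop:update sketch}: an update of an edge $(u,v)$ in $G$ changes only $\Pi(b_u)$ and $\Pi(b_v)$, each in $\Otil(1)$ time. After each vertex-sketch change $\Delta$ at some $u$, I would propagate $\Delta$ along the $O(\log n)$ ancestors of $u$'s leaf in its dynamic-tree BST, adding $\Delta$ to each ancestor's stored aggregate; since each addition is $\Otil(1)$, a graph update costs $\Otil(1)$ in total.

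Second, for an insertion or deletion of an edge in $F$ I would invoke the link/cut primitive of the dynamic tree, which in ET-trees amounts to an $O(1)$ number of splits and concatenations on balanced BSTs over the Euler tour, with each rotation or splice recomputing the affected cluster aggregate from its $O(1)$ children aggregates in $\Otil(1)$ time; the total cost is again $\Otil(1)$. To attach each vertex sketch to a unique point in the Euler tour I would use the standard trick of designating (say) the first occurrence of each vertex as the ``owner'' of $\Pi(b_u)$, so the aggregate at any cluster sums each $\Pi(b_u)$ exactly once. To answer a query on a component $S$ given a pointer into the dynamic tree, I would climb parent pointers to the root of the BST representing $S$ in $O(\log n)$ time and return the aggregate sketch stored there, which equals $\sum_{u\in S}\Pi(b_u)=\Pi(b_S)$ by construction.

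The main subtlety is ensuring that graph updates (which modify leaf values and propagate upward) and forest updates (which restructure the BST and recompute aggregates from children) together preserve the invariant that every cluster stores the sum of $\Pi(b_u)$ over its owned vertices; this follows by a straightforward structural induction, because both kinds of updates restore the invariant locally at every affected node. The construction is implicit in~\cite{kapron2013dynamic} and can also be obtained by plugging Theorem~\ref{thm:sketch} and Proposition~\ref{prop:update sketch} into the augmentation framework underlying Theorem~4.16 of~\cite{NanongkaiS17}, with $\ell_0$-sketch addition as the associative aggregate.
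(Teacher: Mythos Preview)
Your proposal is correct and is precisely the construction the paper has in mind: the paper does not give its own proof but simply states that the data structure is implicit in~\cite{kapron2013dynamic} and can equivalently be obtained from Theorem~4.16 of~\cite{NanongkaiS17} combined with Theorem~\ref{thm:sketch} and Proposition~\ref{prop:update sketch}, which is exactly the augmented dynamic-tree (ET-tree/top-tree) scheme you describe. Your write-up is in fact more detailed than what the paper provides.
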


%The formal proof of \Cref{thm:sketch tree} also follows from combining
%Theorem 4.16 of \cite{NanongkaiS17} with \Cref{thm:sketch} and \Cref{prop:update sketch}. 

%TODO: Proof?

\subsection{The Algorithm}

In this section, we show an algorithm for maintaining the edge connectivity of an $n$-vertex dynamic graph undergoing edge insertions and deletions in $\Otil(n)$ worst-case update time with high probability, i.e., we prove \Cref{thm:main n}. 

Let $G$ be a graph undergoing edge insertions and deletions. We make the following two simplifying assumptions
\begin{enumerate}
\itemsep0em
    \item[(1)] the edge connectivity is attained at a \emph{non-singleton} minimum cut $C^{*}$ of $G$, and
    \item[(2)] the minimum degree $\delta$ of $G$ is between some fixed range $[\delta_0/2, \delta_0]$, and we shall construct a data structure depending on $\delta_0$.
\end{enumerate}

We lift (1) by observing that if this assumption doesn't hold, then we have that the edge connectivity $\lambda = \delta$, and the minimum degree $\delta$ of $G$ can be maintained in a straightforward way. Assumption (2) can be lifted by constructing $O(\log n)$ data structures
for the ranges $[2^{i},2^{i+1}]$ for $i \leq \log n+1$, and query the data structure for range $[2^{i},2^{i+1}]$ whenever $\delta\in[2^{i},2^{i+1}]$.

%there is a non-singleton mincut $C^{*}$ in $G$, otherwise the edge
%connectivity $\lambda=\delta$ and this is very easy to maintain.
%We assume that $\delta$ is between some fixed range $[\delta_{0}/2,\delta_{0}]$
%and build a data structure depending on $\delta_{0}$. This assumption
%can be easily lifted by constructing $O(\log n)$ data structures
%for range $[2^{i},2^{i+1}]$ for every $i\le\log n+1$, and query
%the data structure for range $[2^{i},2^{i+1}]$ whenever $\delta\in[2^{i},2^{i+1}]$.

%We will also maintain a spanning forest of $F$ of $R$ using the
%dynamic spanning forest algorithm by \cite{kapron2013dynamic}. Note
%that, for each edge update to $G$, there are only $O(1)$ edge updates
%in $R$ and $F$, and the update time of the algorithm by \cite{kapron2013dynamic}
%is $\Otil(1)$ worst-case update time.

Before we proceed further, recall that $R$ is a random 2-out subgraph and $G' = G/R$ is the random 2-out contraction of $G$. By \Cref{thm:random-2-out}, the minimum cut $C^{*}$ is preserved in $G'$ with some positive constant probability. We can boost this to a high probability bound by repeating the whole algorithm $O(\log n)$ times. As $G'$ contains $O(n/\delta_{0})$ vertices
with high probability, throughout we will also assume that this is the case.

Despite the size guarantee on the vertex set of $G'$, maintaining $G'$ and running a static edge connectivity algorithm on $G'$ is not enough as the number of edges in $G'$ could potentially be large. This leads us to the main component of our data structure that instead maintains a $\delta_0$-connectivity certificate $H'$ of $G'$ while supporting updates as well as querying $H'$ in $\Otil(n)$ update time. 

\begin{theorem} \label{thm:maintainCertificate} Let $G$ be a dynamic graph and let $G'$ be the random $2$-out contraction of $G$. Let $\delta_0$ be an integer parameter such that  $n \ge \delta_0 > \lambda$. There is a data structure that supports edges updates to $G$ (and thus to $G'$)  and gives  query access to a $\delta_0$-connectivity certificate $H'$ of $G'$ containing $\Otil(n)$ edges with high probability. The updates can be implemented in $\Otil(\delta_0)$ worst-case time,
the  queries in $\Otil(n)$ worst-case time. 
\end{theorem}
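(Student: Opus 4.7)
The plan is to maintain a dynamic certificate of $G'$ via the parallel construction of Theorem~\ref{thm:para certificate}: choosing $k=\delta_0$ reduces the task to computing, for each of $c=\lceil \delta_0/(4\tau\ln n)\rceil$ random color classes, a $k'$-connectivity certificate of $G'_i$ with $k'=O(\log n)$. Since $|V(G')|=O(n/\delta_0)$ with high probability, every such sub-certificate (produced as $k'$ nested spanning forests by Algorithm~\ref{alg:seq certificate}) can be computed via the $\ell_0$-sampling sketches of Theorem~\ref{thm:sketch} together with the sketch-based spanning forest algorithm of Theorem~\ref{thm:forest from sketch}, and the union $H'=\bigcup_i H'_i$ has $O(|V(G')|\cdot\delta_0)=\Otil(n)$ edges. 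The challenge is to maintain just enough information to enable per-color sketch queries and super-vertex aggregation in total time $\Otil(\delta_0)$ per update and $\Otil(n)$ per query.

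\textbf{Initialization and updates.} I fix a random hash assigning each potential edge of $G$ a color in $\{1,\ldots,c\}$, and let $G_i$ be the color-$i$ subgraph. For each color $i$ I initialize $\Theta(k'\log n)=\Otil(1)$ independent linear sketches $\Pi_{i,j}$. I also maintain the random $2$-out subgraph $R$ via a ``smallest hash value'' rule (each vertex $u$ samples the two incident edges with smallest hash), together with a dynamic spanning forest $F$ of $R$; the connected components of $F$ are exactly the super-vertices of $G'$. For each pair $(i,j)$ I instantiate $\DSprune(\Pi_{i,j},G_i,F)$ from Theorem~\ref{thm:sketch tree}, so that for any super-vertex $S$ I can query $\Pi_{i,j}(\sum_{u\in S} b^{(i)}_u)$ in $\Otil(1)$ time; by the signed incidence identity, the nonzero entries of this sketch are exactly the edges of $G'_i$ incident to $S$, with self-loops cancelling automatically. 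An edge update to $G$ modifies a single color class $G_i$, changes $R$ (and hence $F$) by $O(1)$ edges, and is relayed to all $c\cdot \Otil(1)$ data structures in $\Otil(1)$ time each. The dominant cost is propagating an $F$-change across all colors, giving worst-case update time $\Otil(c)=\Otil(\delta_0)$.

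\textbf{Query.} To produce $H'$, I loop over colors $i=1,\ldots,c$ and for each one run the $k'$ iterations of Algorithm~\ref{alg:seq certificate} on $G'_i$. At the start of each color I extract $\Otil(1)$ independent sketches per super-vertex by querying the $\DSprune(\Pi_{i,j},G_i,F)$ structures, using total time $\Otil(n/\delta_0)$. At iteration $t$, I invoke Theorem~\ref{thm:forest from sketch} on the current working sketches to obtain a spanning forest $F_{i,t}$ of the residual graph in $\Otil(n/\delta_0)$ further time, then exploit the linearity of $\Pi_{i,j}$ to subtract the contribution of each edge of $F_{i,t}$ from the working sketches, preparing them for the next iteration without mutating the long-lived structures. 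Each color thus costs $k'\cdot\Otil(n/\delta_0)=\Otil(n/\delta_0)$, so the total query time is $c\cdot\Otil(n/\delta_0)=\Otil(n)$; correctness of $H'=\bigcup_{i,t}F_{i,t}$ and the $\Otil(n)$ bound on $|H'|$ follow directly from Theorem~\ref{thm:para certificate}.

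\textbf{Main obstacle.} The delicate ingredient is updating $R$ so that the induced spanning forest $F$ undergoes only $O(1)$ changes per $G$-update; without this, propagating $F$-changes to all $c$ data structures would blow up the update time. The smallest-hash-value sampling rule guarantees it, since any single edge insertion or deletion alters $R$ by at most two edges, which a worst-case dynamic connectivity structure converts into $O(1)$ edge modifications in $F$. A second, milder subtlety at query time is avoiding destructive writes to the long-lived sketches while peeling off successive spanning forests, resolved by operating entirely on fresh working copies.
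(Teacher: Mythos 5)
Your proposal matches the paper's proof in all essential respects: the reduction to $c = O(\delta_0/\log n)$ color classes via the parallel certificate template of Theorem~\ref{thm:para certificate}, maintaining $\mathcal{D}(\Pi, G_i, F)$ structures from Theorem~\ref{thm:sketch tree} for $O(\log^2 n)$ independent sketches per color, recovering supervertex sketches $\Pi(b_S)$ through the dynamic tree, and a query routine that peels $O(\log n)$ nested spanning forests per color via Theorem~\ref{thm:forest from sketch} while operating on working copies of the sketches. Your observation that self-loops of $G'_i$ vanish automatically through the signed incidence identity is precisely the mechanism the paper relies on.

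The one place you deviate -- and the one you single out as ``the main obstacle'' -- is the smallest-hash rule for maintaining $R$. This is not needed, and it is not what the paper does. Directly maintaining the $2$-out sample (on deletion of an edge that is one of $u$'s two samples, resample uniformly from $u$'s remaining neighbors; on insertion of an edge at $u$, swap it into each of $u$'s two slots with probability $1/d(u)$) already changes $R$ by $O(1)$ edges per $G$-update in the worst case, and a worst-case dynamic spanning forest then changes $F$ by $O(1)$ edges. The propagation cost across the $c \cdot O(\log^2 n)$ sketch structures is therefore $\Otil(\delta_0)$ with no additional tricks. Moreover, your rule actually changes the distribution of $R$: Theorem~\ref{thm:random-2-out} is stated for two samples drawn independently with replacement and with independent randomness across vertices, whereas taking the two smallest-hash incident edges draws two distinct edges per vertex, and because a single hash on edges is shared across vertices, it correlates the choices of adjacent vertices. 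You would need to re-derive the constant-probability preservation of $(2-\eps)$-minimum non-singleton cuts for this modified sampling process rather than cite the theorem as stated.
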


%that allows us to query a $\delta_0$-connectivity certificate $H'$ of $G'$ that contains (i) $\Otil{O}(n)$ edges and (ii) can be computed in $\Otil{n}$ time. Note that our 

We claim that \Cref{thm:maintainCertificate} proves \Cref{thm:main n}. To see this, recall that $H'$ preserves all cuts of size at most $\delta_0$ in $G'$ by the definition of $\delta_0$-connectivity certificate. By our assumption $\lambda < \delta_0$, this implies that the minimum cut $C^{*}$ is preserved in $H'$ and suggests that we can simply run a static edge connectivity algorithm on top of $H'$ to find $C^{*}$ in $\Otil(|E(H')|)=\Otil(n)$ time. Therefore, the rest of this section is devoted to proving~\Cref{thm:maintainCertificate}.

The first component of the data-structure is to maintain (i) a random 2-out subgraph $R$ of $G$ and (ii) a spanning forest $F$ of $R$. By the definition of $R$, (i) can be implemented in $O(1)$ worst-case update time per edge update. For (ii), we use the dynamic spanning forest data structure of Kapron, King and Mountjoy~\cite{kapron2013dynamic} that guarantees a $\Otil(1)$ worst-case update time. Since for each edge update to $G$, there are only $O(1)$ edge updates to $R$, this can be implemented in $\Otil(1)$ worst-case time per edge update as well.
We use  the following two-step approach to prove our result
\begin{enumerate}
\itemsep0em
\item[(1)] reducing the dynamic $\delta_0$-connectivity certificate problem to the dynamic $O(\log n)$-connectivity certificate problem via the template presented in \Cref{alg:para certificate}, and
\item[(2)] solving the dynamic $O(\log n)$-connectivity certificate problem using the linear sketching tools developed in the previous section.
\end{enumerate}

%Now, we claim that we can build a dynamic data structure on $G$ such
%that, given a query, we can with high probability return a $\delta_{0}$-connectivity
%certificate $H'$ of $G'$ containing $\Otil(n)$ edges in $\Otil(n)$
%time. Observe that this is sufficient because $H'$ preserves all
%cuts of size at most $\delta_{0}$ in $G'$ and in particular, $C^{*}$
%is preserved in $H'$. So we can simply run a static edge connectivity
%algorithm on top of $H'$ to detect $C^{*}$ in $\Otil(|E(H')|)=\Otil(n)$
%time. The rest of this section is for discussing how to query for
%$H'$ in $\Otil(n)$ time.

\paragraph{Reducing $\delta_{0}$-Connectivity Certificates to 
$O(\log n)$-Connectivity Certificates.}

We follow the template of \Cref{alg:para certificate} from~\Cref{thm:para certificate}:
Set $k=\delta_{0}$ and let $c=k/(4\tau\ln n)$ and $k'=\left\lceil 6\tau\log n\right\rceil$
be defined as in \Cref{alg:para certificate}. Then, color each edge of $G$ by randomly choosing a color from $\{1,\dots,c\}$. Let $G_{i}$
be the subgraph of $G$ containing edges of color $i$. We observe that
all graphs $G_{i}$ can be maintained explicitly together with $G$ with in $\Otil(1)$ time per edge update. Similarly, let $G'_{i}$ be the subgraph of the random $2$-out contraction $G'$ containing edges of color $i$, i.e., $G'_{i} = G' \cap G_i$. 

Our goal is to not explicitly maintain $G'_{i}$, but instead build a dynamic data structure with $\Otil(1)$
worst-case update time per edge update in $G$ that gives query access to a $k'$-connectivity certificate $H'_{i}$ of $G'_{i}$ in $\Otil(n/\delta_{0})$
time with high probability. 

We claim that this suffices to prove~\Cref{thm:maintainCertificate}. To see this, note that for each $i=1,\ldots,c$ we can query $H'_i$ in $\tilde{O}(n/\delta_0)$ time. Then we simply union all these certificates to compute $H'=\cup_{i=1}^{c}H'_{i}$ in $c \cdot \Otil(n/\delta_{0})= \delta_0/(4\tau \ln n) \cdot  \Otil(n/\delta_0) = \Otil(n)$ time, which bounds the query time. To bound the update time, note that the worst-case cost for maintaining these $c$ data structures is $c \cdot \Otil(1) = \Otil(\delta_0)$. By \Cref{thm:para certificate} it follows that (i) $H'$ is indeed a $\delta_{0}$-connectivity certificate $H'$ of $G'$ and (ii) the size of $H'$ is at most $\delta \cdot O(n/\delta) = O(n)$, which completes the proof of~\Cref{thm:maintainCertificate}.

%Suppose that we can indeed build such a data structure, then we claim
%that we are done by \Cref{thm:para certificate}. Because, for each
%color $i$, we can just query $H'_{i}$ in $\Otil(n/\delta_{0})$
%time. Then, we compute $H'=\cup_{i=1}^{c}H'_{i}$ in $c\times\Otil(n/\delta_{0})=\Otil(n)$
%time and, by \Cref{thm:para certificate}, $H'$ is indeed a $\delta_{0}$-connectivity
%certificate $H'$ of $G'$. Also the total update time for maintaining
%the data structures is $c\times\Otil(1)=\Otil(n)$ worst-case update
%time. 

\paragraph{$O(\log n)$-Connectivity Certificates of $G'_{i}$ via Linear Sketching.}

Let us fix a color $i = 1,\ldots,c$. Recall that our goal is to obtain a $O(\log n)$-connectivity
certificate $H'_{i}$ of $G'_{i}$ in time $\Otil(n/\delta_0)$ per query and $\Otil(1)$ per edge update.
%and $G'_i$ contains all edges of color $i$ of the random $2$-out contracted graph $G'$. 

%Although we never maintain $G'_{i}$ explicitly, our strategy is as
%follows. 
Recall that $G_{i}$, which is the graph containing all color-$i$ edges of $G$, is explicitly maintained. Let $F$ be a spanning
forest of the random 2-out subgraph $R$ which we also maintain explicitely as discussed above. For all $j, \ell = 1, \ldots, O(\log n)$,
we independently generate the linear transformation for a  $\ell_{0}$-sampling
sketch $\Pi^{(j,\ell)}$ using \Cref{thm:sketch}. For each index pair
$(j,\ell)$, we build a data structure ${\cal D}^{(j,\ell)}:={\cal D}(\Pi^{(j,\ell)},G_{i},F)$\footnote{Here we deliberately omit the subscript $i$ from ${\cal D}^{(j,\ell)}$ since the color $i$ was fixed in the beginning of the paragraph and this omission simplifies the presentation.} and maintain whenever $G$ or $F$ changes
using \Cref{thm:sketch tree}.  Since for each edge update to $G$, there are only $O(1)$ edge updates to $G_{i}$ and $F$, respectively, we can maintain all ${\cal D}^{(j,\ell)}$'s in $\Otil(1)$ worst-case
update time per edge update to $G$. This completes the description of handling edge updates and its running time analysis. 

It remains to
show how to query a $O(\log n)$-connectivity certificate $H'_{i}$
of $G'_{i}$ using ${\cal D}^{(j,\ell)}$ in $\Otil(n/\delta_{0})$
time.
The main idea of our construction is the following simple but crucial observation. Each vertex $u$
in $G'_{i}$ corresponds to a tree $S_{u}$ of $F$ as $S_{u}$ is contracted into $u$. The latter holds by the definition of $G'_i$. Therefore, if we let $B$ and $B'$
denote the incidence matrices of $G_{i}$ and $G'_{i}$, respectively,
then we have $b_{S_{u}}=b'_{u}$. Now, using the data structures ${\cal D}^{(j,\ell)}$
we can retrieve $\Pi^{(j,\ell)}(b_{S})$ for all components
$S$ of $F$, which in turn gives us $\Pi^{(j,\ell)}(b'_{u})$
for all $u\in V(G'_{i})$. Note that the total time to query ${\cal D}^{(j,\ell)}$ for all
$j,\ell = 1, \ldots, O(\log n)$ and all $u \in V(G'_i)$ is $\Otil(|V(G'_{i})|)=\Otil(n/\delta_{0})$.

Now, the sketches $\Pi^{(j,\ell)}(b'_{u})$
for all $u\in V(G'_{i})$ and all $j,\ell = 1,\ldots,O(\log n)$ allow us to compute $O(\log n)$-connectivity certificate
$H'_{i}$ of $G'_{i}$ as shown below. Note that the algorithms modifies the sketches $\Pi^{(j,\ell)}(b'_{u})$ but we may revert the sketches back to their initial state after computing and returning $H'_i$.

We finally explain the procedure for querying $H'_{i}$. To this end, we follow the template of \Cref{alg:seq certificate}
from \Cref{thm:seq certificate}. Set $G_{1}^{tmp}\gets G'_{i}$ to be
the temporary graph that we will work with. Consider the $j$-th round of the algorithm where $j = 1, \ldots, O(\log n)$.
We compute a spanning forest $F_{j}$ of $G_{j}^{tmp}$ using the
sketches $\Pi^{(j,1)},\dots,\Pi^{(j,\log n)}$
on vertices of $G'_{i}$ in $\Otil(|V(G'_{i})|)=\Otil(n/\delta_{0})$
time using \Cref{thm:forest from sketch}. Next, we update $G_{j+1}^{tmp}\gets G_{j}^{tmp}\setminus F_{j}$
and also update the sketches $\Pi^{(j',\ell)}$ for all
$j'>j$ so that they maintain information of graph $G_{j+1}^{tmp}$
and not of $G_{j}^{tmp}$. This takes $\Otil(n/\delta_{0})$ time because
$F_{j}$ contains $\Otil(n/\delta_{0})$ edges. This ends the $j$-th round. After all rounds have been completed, we return
$H'_{i}=\cup_{j}F_{j}$ which is a $O(\log n)$-connectivity certificate
by \Cref{thm:seq certificate}. Since there are $O(\log n)$ iterations, the total query time is $\Otil(n/\delta_{0})$, what we wanted to show. Note that algorithm internally stores the edge connectivity value only and not the edges on a cut that attains the edge connectivity. Therefore, the adversary cannot reveal anything useful from querying this information, and thus it follows that our algorithm works against an adaptive adversary. 

\section{Deterministic Algorithm with $O(m^{1-\eps})$ Update Time}
In this section we prove \Cref{thm:main det}. Our algorithm requires several tools from different works and we review them below.
\label{sec:det}

\subsection{Algorithmic Tools}

\paragraph{Expander Decomposition and Pruning.}
The key to our approach is the notion of \emph{expander} graphs. The \emph{conductance} of an unweighted, undirected graph is defined by 
\[
\Phi(G) \defeq \min_{\emptyset \subset S \subset V } \partial(S)/\min\{\vol_G(S), \vol_G(V \setminus S)\}.
\]
%$\Phi(G) \defeq \min_{\emptyset \subset S \subset V } \frac{\partial(S)}{\min\{\vol_G(S), \vol_G(V \setminus S)\}}.$
We say that a graph $G$ is $\phi$-\emph{expander} if $\Phi(G) \geq \phi$.
Next, we introduce the notion of expander decomposition.
\begin{definition}[Expander Decomposition] \label{def: expanderDecomp}
Let $G=(V,E)$ be an undirected, unweighted graph and let $\phi \in (0,1)$ be a parameter. A \emph{$\phi$-expander decomposition} of $G$ is a \emph{vertex-disjoint} partitioning $\mathcal{U}= \{U_1,\ldots,U_k\}$ of $V$ such that
\begin{itemize}
\itemsep0em
\item  $\sum_{1 \leq i \leq k} \abs{\partial(U_i)} = O(\phi m)$, and
\item  for each $1 \leq i \leq k$,  $\Phi(G[U_i]) \geq \phi$.
\end{itemize}
\end{definition}

%We refer to expander decompositions with slack $1$ as $\phi$-expander decompositions. 
We now review an efficient algorithm for finding expander decompositions.
\begin{theorem}[\cite{SaranurakW19}] \label{thm:expanderDecomp} Let $G=(V,E)$ be an undirected, uniweighted graph and let $\phi \geq 0$ be a parameter. There is an algorithm \textsc{Expander}$(G,\phi)$ that in $\tilde{O}(m/\phi)$ time finds a $\phi$-expander decomposition.
\end{theorem}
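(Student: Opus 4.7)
The plan is to implement \textsc{Expander}$(G,\phi)$ by recursively invoking a balanced sparse cut subroutine, together with a trimming step that converts ``near-expanders'' into true $\phi$-expanders. The core subroutine, which I will assume can be built from a cut-matching game whose matching player is implemented via an almost-linear-time (local) approximate max-flow, takes $G$ and $\phi$ and in $\tilde O(m/\phi)$ time either $(i)$ certifies $\Phi(G) = \Omega(\phi)$, or $(ii)$ returns a cut $(S, V\setminus S)$ of conductance $O(\phi)$ that is roughly balanced in the sense $\min\{\vol_G(S), \vol_G(V\setminus S)\} \ge \vol_G(V)/\mathrm{poly}(\log n)$, or $(iii)$ returns a small unbalanced sparse cut suitable for trimming.

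First I would recurse on the balanced case: if a balanced sparse cut $(S, V\setminus S)$ is found, output $\textsc{Expander}(G[S],\phi) \cup \textsc{Expander}(G[V\setminus S],\phi)$. Balance forces the recursion depth to be $O(\log n)$, and the total volume of cuts produced across all levels sums to $\tilde O(\phi m)$. Second, I would handle the near-expander case by a trimming step: start from $U = V$ and iteratively peel off a vertex set $A$ of volume $O(|\partial_G(A)|/\phi)$ that witnesses a local conductance violation inside $U$, updating $U \gets U \setminus A$. One shows (as in expander pruning, cf.~\cite{SaranurakW19}) that after trimming, the remaining core $U$ is a true $\phi$-expander and that the total volume removed is $\tilde O(|\partial(U)|/\phi)$, i.e.\ only a $\tilde O(1)$ factor larger than the edges initially leaving the near-expander. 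I then recurse on $G[V\setminus U]$.

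Correctness is then immediate: every output cluster is either certified as a $\phi$-expander by the subroutine or obtained as a trimmed core. For the cut-size bound $\sum_i |\partial(U_i)| = O(\phi m)$ (up to $\mathrm{polylog}$ factors absorbed in $\tilde O$), I would charge each inter-cluster edge to the recursion level at which it becomes an inter-cluster edge and use that each level contributes $O(\phi \vol_G(V))$ new boundary edges, geometrically summed over $O(\log n)$ levels. For the running time, every level spends $\tilde O(m/\phi)$ work (balanced sparse cut plus trimming, the latter implementable via a local flow computation whose cost is charged to the removed volume), and since each edge appears in at most $O(\log n)$ levels the total is $\tilde O(m/\phi)$.

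The main obstacle I expect is the analysis of trimming. One needs an almost-linear-time procedure that, given a near-expander $U$, produces a subset $U'\subseteq U$ which is genuinely a $\phi$-expander while guaranteeing $\vol(U\setminus U') = \tilde O(|\partial(U)|/\phi)$, so that recursing on the trimmed part does not blow up either the total cut size or the total runtime. Making this simultaneously local (so that the work is proportional to the removed volume, not to $|U|$) and correct in the presence of many trimming iterations is the delicate point, and is where the reduction to local flow / expander-pruning primitives from \cite{SaranurakW19} is essential.
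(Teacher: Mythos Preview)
The paper does not prove this theorem at all: Theorem~\ref{thm:expanderDecomp} is stated as a black-box citation to \cite{SaranurakW19} and is used only as an ingredient in the decremental expander decomposition of \S\ref{sec: decrementalExpanderDecom}. There is therefore nothing in the paper to compare your proposal against.

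That said, your sketch is a reasonable high-level summary of the actual argument in \cite{SaranurakW19}: recursive balanced-sparse-cut via a cut--matching game, with a trimming step to upgrade near-expanders to genuine $\phi$-expanders, and a charging argument over $O(\log n)$ recursion levels for both the inter-cluster edge count and the running time. The obstacle you flag (making trimming local so that its work is proportional to the removed volume) is indeed the crux of that paper. One minor caution: as stated in Definition~\ref{def: expanderDecomp} the inter-cluster bound is $O(\phi m)$ without the $\tilde O$, whereas the recursion you outline naturally yields $\tilde O(\phi m)$; in \cite{SaranurakW19} this is handled by running the subroutine with a parameter $\phi' = \phi/\mathrm{polylog}(n)$ so that the final clusters are $\phi$-expanders while the boundary stays $\tilde O(\phi' m) = O(\phi m)$ after absorbing the polylog. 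If you intend to present this as a self-contained proof rather than a citation, you should make that rescaling explicit.
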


The next result allows us to turn static expander decompositions into their dynamic variants, as shown in Section~\ref{sec: decrementalExpanderDecom}

\begin{theorem}[Expander Pruning, \cite{SaranurakW19}] \label{thm:pruning} Let $G=(V,E)$ be an undirected, unweighted $\phi$-expander. Given an online sequence of $k \leq \phi \vol(G)/20$ edge deletions in $G$, there is an algorithm that maintains a \emph{pruned set} $P \subseteq V$ satisfying the following properties; let $G_i$ and $P_i$ be the graph $G$ and the set $P$ after the $i$-th deletion. For $1 \leq i \leq k$,
\begin{enumerate}
\itemsep0em
\item $P_0 = \emptyset$ and $P_i \subseteq P_{i+1}$,
\item $\vol_G(P_i) \leq 8i/\phi$ and $|\partial(P_i)| \leq 4i$, and \label{pruning:volume}
\item $G_i[V \setminus P_i]$ is a $\phi/6$-expander. \label{pruning:Expansion}
\end{enumerate}
The total time for updating $P_0, \ldots, P_k$ is $O(k \log m / \phi^2)$.
\end{theorem}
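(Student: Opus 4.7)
The plan is to maintain the invariant that $G_i[V \setminus P_i]$ is a $\phi/6$-expander after every update. Initialize $P_0 = \emptyset$; the invariant holds trivially since $G_0 = G$ is a $\phi$-expander. On the $i$-th edge deletion, test whether the unpruned region $V \setminus P_{i-1}$ still has conductance at least $\phi/6$ in $G_i$. If yes, set $P_i = P_{i-1}$. If not, locate a low-conductance witness set $A \subseteq V \setminus P_{i-1}$ using a \emph{local} algorithm seeded at the endpoints of the deleted edge, add $A$ to the pruned set, and repeat until the invariant is restored. Monotonicity $P_{i-1} \subseteq P_i$ is then immediate, and property (3) will follow once we argue the restoration succeeds.

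The key algorithmic primitive I would use is a local sparse-cut finder in the spirit of the Spielman--Teng Nibble or the Orecchia--Zhu unit-flow: seed the endpoints of the deleted edge with $\Theta(1/\phi)$ units of excess, route flow locally with a capacity-$1/\phi$ scheme, and either (a) saturate, yielding a short routing of length $O(1/\phi)$ that \emph{certifies} expansion of the residual graph via standard flow--conductance duality, or (b) leave residual excess, from which one reads off a cut of conductance $<\phi/6$ and volume $O(1/\phi)$ via its level set. The running time of one invocation is $\Otil(\vol(A)/\phi)$, where $A$ is the returned set.

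To bound volumes and totals I would set up a charging argument: each deletion reduces the ``expansion budget'' of the unpruned graph by at most $O(1/\phi)$, so the volume ever added to $P$ across $k$ deletions is $O(k/\phi)$; tracking constants carefully yields $\vol_G(P_i) \leq 8i/\phi$ and $|\partial(P_i)| \leq 4i$. The hypothesis $k \leq \phi\,\vol(G)/20$ keeps both quantities safely below $\vol(G)/2$, so the ``smaller side'' of every cut we find is genuinely the side we prune, and the residual graph has sufficient volume for the expansion argument to go through. Summing the per-step cost $\Otil(\vol(A_i)/\phi)$ over all deletions gives total time $O(k\log m/\phi^2)$.

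The main obstacle is property (3): showing that the greedy peeling of witness cuts actually leaves a true $\phi/6$-expander, rather than merely a graph on which the local algorithm happened not to detect anything. The standard route is a two-sided guarantee on the local flow as sketched above, but a delicate point is that successive pruning steps interact: each removal of $A$ changes the graph in which the \emph{next} sparse cut must be analyzed, and the flow--conductance duality must be applied to the current residual rather than to $G$. Establishing that the inner pruning loop terminates after $O(1)$ iterations per deletion on average (by charging iterations to edges on $\partial(P)$, whose total size is bounded by $4k$), and showing that the local flow's approximate certificate can be sharpened to the clean $\phi/6$ bound on the residual, is where most of the technical effort sits.
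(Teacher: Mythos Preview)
The paper does not prove this theorem at all: Theorem~\ref{thm:pruning} is quoted verbatim from \cite{SaranurakW19} and used as a black box. There is therefore nothing in the paper to compare your proposal against.

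That said, your sketch is a reasonable high-level outline of the actual argument in \cite{SaranurakW19}, which indeed uses a local unit-flow primitive (their ``trimming'' procedure) seeded at the endpoints of the deleted edge, together with a charging argument that bounds the total pruned volume by $O(k/\phi)$. You are also right that the delicate part is the two-sided guarantee: showing that when the flow saturates, the residual graph is genuinely a $\Theta(\phi)$-expander, not merely one on which the local routine failed to find a cut. In \cite{SaranurakW19} this is handled by the specific invariants of their unit-flow routine rather than by a generic flow--conductance duality, and the termination/iteration bound comes from a potential argument on excess rather than a direct charge to $\partial(P)$. If you intend to actually write out a proof, you should consult that paper for the precise invariants, since the constants $8$, $4$, and $6$ in the statement arise from their particular flow parameters and would not fall out of a generic Nibble-style analysis.
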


\paragraph{Non-singleton Minimum Cut Sparsifiers.}

Our algorithm relies on sparsifiers that preserve non-singleton cuts of simple graphs.

\begin{definition} Let $G=(V,E)$ be an undirected, unweighted graph. A multi-graph $H=(V',E')$ is a \emph{non-singleton minimum cut sparsifiers} \emph{(}abbrv. \emph{\sparsifier}\emph{)} of $G$ if $H$ preserves all non-singleton minimum cuts of $G$, i.e., for all cuts $C \subset V$ with $\min\{|C|, |V \setminus C|\} \geq 2$ and $\partial_G(C) = \lambda(G)$, 
\[
		\abs{\partial_H(C)} = \abs{\partial_G(C)}.
\]
We say that $H$ is of size $k$ if $|E'| \leq k$. 
\end{definition}

Kawarabayashi and Thorup~\cite{KawarabayashiT19} showed that undirected, simple graphs admit \sparsifier{s} of size $\Otil(m/\delta)$. They also designed a deterministic $\tilde{O}(m)$ time algorithm for computing such sparsifiers. We will construct \sparsifier{s} that are based on expander decompositions, following the work of Saranurak~\cite{Saranurak21}, as we can turn them into a dynamic data structure. To do so, we need to formally define the procedures of \emph{trimming} and \emph{shaving} vertex subsets of a graph. 
\begin{definition}[Trim and Shave] \label{def: TrimShave}
Let $U$ by any vertex subset in $G=(V,E)$. Define $\trim(U) \subseteq U$ to be the set obtained by the following procedure: while there exists a vertex $u \in U$ with $|E(u,U)| < \frac{2}{5} d(u)$, remove $u$ from $U$. Let $\shave(U) = \{u \in U \mid |E(u,U)| > \frac{1}{2} d(u) + 1\}$. 
\end{definition}

Observe that the trimming procedure recursively removes a vertex with few connections inside the \emph{current} set $U$ while the shaving procedure removes all vertices with few connections inside the \emph{initial} set $U$. Saranurak~\cite{Saranurak21} showed that we can construct \sparsifier{s} by  applying triming and shaving  to each cluster in the expander decomposition. We formally summarize his construction in the lemma below. 
\begin{lemma}[\cite{Saranurak21}] \label{lem: NMCsparsifier} Let $G=(V,E)$ be an undirected, simple graph with $m$ edges, and let $\phi = c/\delta$, where $\delta$ is the minimum degree in $G$ and $c \geq 40$ some positive constant. Let \[ \mathcal{U} = \textsc{Expander}(G, \phi),~\mathcal{U}' = \{\trim(U) \mid U \in \mathcal{U} \} \text{ and } \mathcal{U}'' = \{\shave(U') \mid U' \in \mathcal{U'}\} .\] Let $H=(V',E')$ be the graph obtained from $G$ by contracting every set $U'' \in \mathcal{U}''$. Then $H$ is an \sparsifier~of size $\tilde{O}(\phi m)$ for $G$. The running time for computing $H$ is $\tilde{O}(m)$.
\end{lemma}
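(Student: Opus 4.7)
The plan is to prove three claims in sequence: (A) $H$ preserves every non-singleton minimum cut of $G$; (B) $|E(H)| = \tilde O(\phi m)$; and (C) the total runtime is $\tilde O(m)$. Item (A) is the conceptual core, while (B) and (C) are essentially accounting arguments.

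For (A), I would first extract the two structural properties that the trim/shave composition buys, starting from the $\phi$-expander $G[U]$ given by Theorem~\ref{thm:expanderDecomp}. Trimming preserves an $\Omega(\phi)$-expansion of $U'$ with respect to the original $G$-volumes: since every surviving $u \in U'$ has $|E_G(u, U')| \geq \tfrac{2}{5} d(u)$, the $\phi$-conductance of $G[U]$ only degrades by a constant factor when passing to the sub-cluster $U'$ and switching from $\vol_{G[U']}$ to $\vol_G$. Shaving then guarantees $|E_G(u, U')| > d(u)/2 + 1$ for every $u \in U''$. I would then argue by contradiction that no $U'' \in \mathcal{U}''$ can straddle a non-singleton minimum cut $C$: writing $A \defeq C \cap U''$ and $B \defeq U'' \setminus A$ with both non-empty and $\vol_G(A) \leq \vol_G(U'')/2$, the near-expansion of $U'$ forces $|E_G(A, V \setminus A)| \geq \Omega(\phi) \cdot \vol_G(A) \geq \Omega(c \cdot |A|)$ using $\vol_G(A) \geq |A| \delta$ and $\phi = c/\delta$; the shave bound then ensures that most of these edges stay inside $U''$ rather than escaping through $\partial_G(U')$ or $\partial_G(U)$, so they contribute to $E_G(A, B) \subseteq \partial_G(C)$. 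For $c \geq 40$ this exceeds $\delta \geq \lambda(G) \geq |\partial_G(C)|$, a contradiction, unless $A$ or $B$ is empty.

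For (B), the edges of $H$ are exactly the $G$-edges not lying inside any $U''$, and I would upper bound this by $\sum_{U \in \mathcal{U}} |\partial_G(U)| + \sum_{U \in \mathcal{U}} \vol_G(U \setminus U'')$. The first sum is $O(\phi m)$ by Theorem~\ref{thm:expanderDecomp}; for the second, each vertex removed by trim can be charged to $\Omega(d(u))$ edges in $\partial_G(U)$ (via the trim termination condition $|E(u, U_{\text{cur}})| < \tfrac{2}{5} d(u)$), and each shaved vertex contributes $\geq d(u)/2 - 1$ edges to $\partial_G(U')$, so both terms are $O(\phi m)$. For (C), the expander decomposition runs in $\tilde O(m/\phi)$ by Theorem~\ref{thm:expanderDecomp}, while trim, shave, and the final contraction each admit $\tilde O(m)$ implementations via standard priority-queue data structures; using the expander-decomposition variant available to Saranurak in \cite{Saranurak21}, the overall runtime is $\tilde O(m)$.

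The main obstacle I anticipate is the corner case $|A| = 1$ in (A): there the shave bound alone yields only $\approx d(v)/2$ cut edges from the single vertex and does not beat $\delta$. Dispatching this case cleanly requires genuinely invoking the non-singleton hypothesis on $C$, for instance via an uncrossing step applying the submodular inequality $|\partial_G(C)| + |\partial_G(U'')| \geq |\partial_G(C \cap U'')| + |\partial_G(C \cup U'')|$ together with a volume lower bound on $U''$ to produce either a strictly smaller cut or a contradiction with minimality of $\lambda(G)$.
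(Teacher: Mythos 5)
The paper does not prove this lemma; it cites it from \cite{Saranurak21}, so there is no in-paper argument to compare against. Judged on its own merits, your outline has the right high-level shape (structural argument that no $U''$ straddles a non-singleton minimum cut, plus charging arguments for size and runtime), and your treatment of (B) and (C) is essentially correct. But there is a genuine gap at the heart of (A).

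The gap is twofold. First, the claim that ``trimming preserves an $\Omega(\phi)$-expansion of $U'$ with respect to the original $G$-volumes'' is asserted without justification, and it is not the leverage point in the actual argument. Removing vertices from an expander does not in general preserve expansion, and the trim condition $|E(u,U')|\geq \tfrac{2}{5}d(u)$ only says each survivor keeps a constant fraction of its degree inside $U'$ --- it does not bound how many edges a \emph{set} $S\subseteq U'$ loses to $U\setminus U'$. The standard route avoids this: you apply the expansion of $G[U]$ (not $G[U']$) to $A:=C\cap U$ and $B:=U\setminus C$ (the smaller side in $\vol_{G[U]}$), get $\vol_{G[U]}(A)\leq \lambda/\phi\leq\delta^{2}/c$, and then propagate this \emph{volume} bound through trim and shave rather than an expansion bound. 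Second, the small-$|A|$ case you flag is indeed the crux, but submodular uncrossing of $C$ against $U''$ is not how it is closed; the fix is the simpler \emph{minimum-cut locality} fact that for any $u\in C$ with $|C|\geq 2$, $|E(u,C)|\geq |E(u,V\setminus C)|$ (else $C\setminus\{u\}$ would be a smaller cut). Combined with the shave bound $|E(u,V\setminus U')|<d(u)/2-1$, this forces $|E(u,A')|\geq 2$ for any $u\in C\cap U''$, hence $|A'|\geq 3$; the trim bound then yields $|A'|\leq 5\delta/(2c)$ and a counting argument gives $|E(A',B')|>\delta\geq\lambda$, a contradiction. Without some version of this cut-locality step your argument has no way to rule out a single straddling vertex, so as written the proof does not go through.

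One smaller point: you correctly note that Theorem~\ref{thm:expanderDecomp} only gives $\tilde O(m/\phi)$ for the decomposition, which with $\phi=c/\delta$ is not $\tilde O(m)$ for large $\delta$; the $\tilde O(m)$ bound in the lemma relies on the particular decomposition routine used in \cite{Saranurak21}, and this should be stated rather than deferred.
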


\subsection{Decremental Expander Decomposition} \label{sec: decrementalExpanderDecom}

In this section we show that the expander pruning procedure from Theorem~\ref{thm:pruning} allows us to design a dynamic algorithm for maintaining an expander decomposition under edge deletions. While the theorem below is already implicit in other works leveraging the power of expander decompositions~\cite{GoranciRST21,BernsteinBGNSS022} (albeit with slightly different guarantees and variations depending on the specific application), here we give a simple, self-contained version that suffices to solve the edge connectivity problem.

\begin{theorem} \label{thm: decrementalExpanderDecomp} Given an unweighted, undirected graph $G=(V,E)$ with $m$ edges and a parameter $\phi \in (0,1)$, there is a decremental algorithm that supports an online sequence of up to $\phi^2 m$ edge deletions and maintains a $\phi/6$-expander decomposition in $\tilde{O}(m/\phi)$ total update time.
\end{theorem}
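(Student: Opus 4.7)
The plan is to combine the static expander decomposition of Theorem~\ref{thm:expanderDecomp} at initialization with per-cluster instances of the expander pruning of Theorem~\ref{thm:pruning}. Specifically, compute $\mathcal{U}=\textsc{Expander}(G,\phi)$ once, and for each $U\in\mathcal{U}$ launch an independent pruning instance on the $\phi$-expander $G[U]$, maintaining a pruned set $P_U$. The decomposition reported at any moment will be $\{U\setminus P_U : U \text{ still ``live''}\}$ together with a singleton $\{v\}$ for every vertex $v$ that has been pruned or whose cluster has been broken (see the next paragraph).

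Edge deletions are routed by location. If the deleted edge lies inside some live cluster $U$, feed it to the pruning instance for $U$; otherwise (inter-cluster, or touching a singleton) simply register the deletion in the global adjacency representation. The one subtlety is that Theorem~\ref{thm:pruning} only accepts up to $\phi\vol_G(U)/20$ deletions per cluster: when this budget is saturated I would \emph{break} $U$, i.e.\ terminate its pruning instance and demote all its remaining vertices to singletons.

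Correctness then reduces to two invariants. First, every non-singleton part is a $\phi/6$-expander, which is exactly Theorem~\ref{thm:pruning}(\ref{pruning:Expansion}) applied inside each live cluster (singletons trivially qualify). Second, the inter-cluster edge count stays $O(\phi m)$: the initial decomposition contributes $O(\phi m)$; pruning adds at most $\sum_U 4 k_U \le 4k \le 4\phi^2 m$ by Theorem~\ref{thm:pruning}(\ref{pruning:volume}), where $k_U$ denotes deletions fed to $U$; and demoting broken clusters adds at most $\sum_{\text{broken}}\vol_G(U) \le \sum_{\text{broken}} 20 k_U/\phi \le 20k/\phi \le 20\phi m$, using the fact that $\vol_G(U)<20 k_U/\phi$ at the moment of breakage. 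All three contributions are $O(\phi m)$.

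For the running time, initialization is $\tilde O(m/\phi)$ by Theorem~\ref{thm:expanderDecomp}; all pruning instances together cost $\sum_U O(k_U \log m/\phi^2)=O(k\log m/\phi^2)=\tilde O(m)$ over the whole sequence; and rewriting broken clusters contributes $O(\sum_{\text{broken}}\vol_G(U))=O(\phi m)$. The total is $\tilde O(m/\phi)$, as required. I expect the only non-routine point to be the bookkeeping for broken clusters---specifically, arguing that pruning failing for individual clusters does not inflate the boundary beyond $O(\phi m)$. This hinges on turning the global deletion budget $k\le\phi^2 m$ into the per-cluster bound $\vol_G(U)<20k_U/\phi$ at breakage, and then summing over broken clusters. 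Everything else reduces cleanly to the quoted tools.
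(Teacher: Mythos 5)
Your construction matches the paper's essentially verbatim: compute a static $\phi$-expander decomposition, run Theorem~\ref{thm:pruning} per cluster, and demote a cluster entirely to singletons once the deletion count exceeds $\phi\vol_G(U)/20$; the expansion argument, the inter-cluster edge bound, and the running time analysis all proceed the same way. One inaccuracy in your inter-cluster edge count: when a vertex of $P_U$ becomes a singleton, \emph{all} of its incident edges inside $G[U]$ become inter-cluster, including edges between two pruned vertices; the controlling quantity is therefore $\vol_{G[U]}(P_U)\leq 8k_U/\phi$ (the volume bound of Theorem~\ref{thm:pruning}, Property~\ref{pruning:volume}), not the boundary $|\partial(P_U)|\leq 4k_U$ you invoke. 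Your bound $\sum_U 4k_U\le 4\phi^2 m$ thus undercounts; the correct bound for non-broken clusters is $\sum_U 8k_U/\phi\le 8\phi m$. Since your broken-cluster term $\le 20\phi m$ already dominates, the final $O(\phi m)$ bound is unaffected, but the intermediate step should be corrected. The paper avoids the case split by observing that $\vol(P_U)\le 20\cdot\counterDeletion(U)/\phi$ holds uniformly whether or not the cluster was broken, and summing that.
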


We initialize our data structure by (i) constructing a $\phi$-expander decomposition $\mathcal{U}$ of the initial graph $G$ using Theorem~\ref{thm:expanderDecomp}, where $\phi \in (0,1)$ is a parameter, and (ii) starting a pruning data-structure $\DSprune(U)$ for each expander $U \in \mathcal{U}$ using Theorem~\ref{thm:pruning}. We also maintain a counter $\counterDeletion(U)$ that denotes the number of edge deletions inside the cluster $U$. Initially, $\counterDeletion(U) = 0$ for each $U \in \mathcal{U}$. If the total number of edge deletions exceeds $\phi^2 m$, our data-structure terminates.

We next show how to handle edge deletions. Consider the deletion of edge $e$ from $G$. If $e$ is an inter-cluster edge in $\mathcal{U}$, then we simply remove it from $G$ since its removal does not affect the expansion of any of the clusters in $\mathcal{U}$. Otherwise, $e$ is an intra-cluster edge, and let $U_e \in \mathcal{U}$ be the unique cluster that contains $e$. First, we increase the counter  $\counterDeletion(U_e)$ by $1$. Next we compare the number of deletions in the cluster relative to the number of deletions the pruning procedure can handle.

Concretely, if $\counterDeletion(U_e) \leq \phi \vol_G(U_e)/ 20$, we pass the deletion of $e$ to the pruning data structure $\mathcal{D}(U_e)$. Let $P_i$, resp., $P_{i-1}$  be the pruned set that $\mathcal{D}(U_e)$ maintains after, resp. before the deletion $e$. We define $S = \{\{u\} \mid u \in P_i \setminus P_{i-1}\}$ to be the set of singleton clusters, and then replace $U_e$ in $\mathcal{U}$ with $\{U_e \setminus \{ P_i \setminus P_{i-1} \}, S\}$. The last step can be thought of as including every vertex in $P_i \setminus P_{i-1}$ as a singleton cluster in $\mathcal{U}$ and removing these vertices from the current expander $U_e$.

However, if $\counterDeletion(U_e) > \phi \vol_G(U_e)/ 20$, then we declare every vertex in the current cluster $U_e$ to be a singleton cluster. Specifically, we remove $U_e$ from $\mathcal{U}$ and for each $u \in U_e$, add $\{u\}$ to $\mathcal{U}$. Note that the latter implies that all vertices that belonged to the \emph{original} cluster $U_e$ are included as singletons in the current expander decomposition. This completes the description of the procedure for deleting an edge. %\gramoz{Should we add this into an algorithm box?}. 

We next show that the above algorithm correctly maintains an expander decomposition under edge deletions while paying a small constant factor in the expansion guarantee of each cluster and in the number of inter cluster edge.

\begin{lemma}[Correctness] \label{lem: DecExpDecompCorrectness}
The decremental algorithm maintains a $\phi/6$-expander decomposition.
\end{lemma}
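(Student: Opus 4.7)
The plan is to verify the two requirements of Definition~\ref{def: expanderDecomp} for the partition $\mathcal{U}$ maintained by the algorithm: that each cluster is a $\phi/6$-expander, and that the total boundary across clusters is $O(\phi m)$. First I will observe that $\mathcal{U}$ remains a valid vertex partition of $V$ throughout: when pruning, vertices in $P_i \setminus P_{i-1}$ are removed from their cluster and added back as singletons; when disbanding, the entire cluster is replaced by its vertices as singletons. Singleton clusters trivially satisfy the expansion condition since $G[\{u\}]$ has no nontrivial cut.

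For the expansion of non-singleton clusters, I would fix an original cluster $U\in\mathcal{U}_0$ together with its pruning data structure $\mathcal{D}(U)$, which maintains $P_0\subseteq P_1\subseteq\cdots$. At any point while $\counterDeletion(U)\le \phi\vol_G(U)/20$, the current non-singleton representative of $U$ in $\mathcal{U}$ is exactly $U\setminus P_i$ where $i=\counterDeletion(U)$, so property~\ref{pruning:Expansion} of Theorem~\ref{thm:pruning} certifies it as a $\phi/6$-expander in the current graph. Once the counter exceeds the threshold, the cluster ceases to exist as a non-singleton, so no further expansion obligation arises from it.

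For the boundary bound I would classify each edge contributing to $\sum_i|\partial(U_i)|$ by the event that first made it inter-cluster: (a) initially inter-cluster in $\mathcal{U}_0$, (b) promoted by a pruning step, or (c) promoted by a disbanding step. Type (a) contributes $O(\phi m)$ by Theorem~\ref{thm:expanderDecomp}. For type (b), the edges added when a cluster $U$ grows its pruned set from $P_{i-1}$ to $P_i$ lie inside $G[P_i]\cup\partial_G(P_i)$, so the contribution through the lifetime of $U$ is at most $|\partial_G(P_i)|+\vol_G(P_i)\le 4i+8i/\phi = O(\counterDeletion(U)/\phi)$ by property~\ref{pruning:volume} of Theorem~\ref{thm:pruning}. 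Summing over clusters and using that the total number of deletions is at most $\phi^2 m$ gives $O(\phi m)$. For type (c), when $U$ is disbanded all of $G[U]$'s current edges become inter-cluster, contributing at most $\vol_G(U)/2$; but disbanding only fires when $\counterDeletion(U) > \phi\vol_G(U)/20$, hence $\vol_G(U) < 20\counterDeletion(U)/\phi$. Summing over all disbanded clusters again gives $O((\text{total deletions})/\phi) = O(\phi m)$. Adding the three contributions yields $\sum_i|\partial(U_i)|=O(\phi m)$.

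The main obstacle is not any single clever estimate but the careful edge-bookkeeping across the three ways edges can become inter-cluster; in particular, one must avoid double counting by fixing the \emph{first} moment an edge leaves its cluster and attributing it to the correct event. The rest is a direct invocation of the pruning theorem per cluster and a summation using the global cap $\phi^2 m$ on the number of deletions.
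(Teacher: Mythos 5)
Your proof is correct and follows essentially the same approach as the paper's: the expansion part is handled by the same case split (singletons trivially, pruned clusters by Property~\ref{pruning:Expansion} of Theorem~\ref{thm:pruning}, original clusters by the initial decomposition), and the boundary bound is obtained in both cases by showing $\vol(P_U) = O(\counterDeletion(U)/\phi)$ and then summing against the global deletion cap $\phi^2 m$. The only cosmetic difference is that you split the bookkeeping into three edge types (initial, pruning-promoted, disbanding-promoted) and add a redundant $|\partial_G(P_i)|$ term, whereas the paper folds the pruning and disbanding cases into one unified bound $\vol(P_U) \leq 20\cdot\counterDeletion(U)/\phi$; the substance is identical.
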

\begin{proof}
Let $\mathcal{U}$ be expander decomposition that the algorithm maintains for the current graph $G$. 

Our first goal is to show that  for each $U \in \mathcal{U}$, $\Phi(G[U]) \geq \phi/6$. Observe that by construction each cluster $U$ in $\mathcal{U}$ can either be (i) a singleton, (ii) a pruned cluster (i.e., a cluster that is formed by removing vertices from the original cluster) or (iii) an original cluster from the initial expander decomposition. If a cluster is a singleton, then the expansion bound trivially holds. If we have a type (ii) cluster, then by expander pruning (Theorem~\ref{thm:pruning}, Property~\ref{pruning:Expansion}), it follows that $\Phi(G[U]) \geq \phi/6$, where $G$ is the current graph. Finally, for a type (iii) cluster, the initial expander decomposition~(Theorem~\ref{thm:expanderDecomp}) gives that $\Phi(G[U]) \geq \phi \geq \phi/6$. Combining the above cases, leads to the expansion bound we were after.

We now bound the number of inter cluster edges in $\mathcal{U}$. Recall that initially, the expander decomposition has at most $O(\phi m)$ inter-cluster edges~(Theorem~\ref{thm:expanderDecomp}). During the handling of edge deletions, the algorithm introduces new inter-cluster edges when vertices from the pruned set are included as singletons. Thus, our ultimate goal is to bound the volume of the pruned set with the number of edge deletions in a cluster. To this end, let $U$ be a cluster in $\mathcal{U}$. We distinguish two cases. If $\counterDeletion(U) \leq \phi \vol_G(U)/20$ then, then by expander pruning~(Theorem~\ref{thm:pruning}, Property~\ref{pruning:volume}) we have that the maintained pruned set $P_U$ satisfies $\vol(P_U) \leq 8 \cdot  \counterDeletion(U)/\phi$. However, if $\counterDeletion(U) > \phi \vol_G(U)/20$, then by construction, the pruned set $P_U$ is the entire original cluster $U$. By rearranging the inequality in the last condition, we get $\vol(P_U) = \vol(U) \leq 20 \cdot \counterDeletion(U)/\phi$. Combining the above bounds, we get that at any time during our decremental algorithm, the volume of the maintained pruned set of a cluster satisfies $\vol(P_U) = \vol(U) \leq 20 \cdot \counterDeletion(U)/\phi$.

Summing this over all clusters in the expander decomposition $\mathcal{U}$, we have the number of the new inter-cluster edges is bounded by
\[
	\sum_{U \in \mathcal{U}} \vol(P_U) \leq \frac{20}{\phi} \sum_{U \in \mathcal{U}} \counterDeletion(U) \leq \frac{20}{\phi} \phi^2 m = O(\phi m),
\]
where the penultimate inequality follows from the fact the number of edge deletions to $G$ is bounded by $\phi^2 m$ by the assumption of the lemma. Thus, the number of inter-cluster edges increases by a constant multiplicative factor, which concludes the proof of the lemma. 
\end{proof}

We next bound the running time of our decremental algorithm.

\begin{lemma}[Running Time]  \label{lem: DecExpDecompRunTime}
The decremental expander decomposition runs in $\tilde{O}(m/\phi)$ total update time.
\end{lemma}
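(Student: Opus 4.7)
The plan is to split the total work into initialization and deletion-handling, and argue that the former dominates: the $\Otil(m/\phi)$ bound will come from the initial expander decomposition construction, while every cost incurred during the deletion phase can be amortized against the global budget of $\phi^2 m$ deletions.

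First I would handle initialization. Computing the initial $\phi$-expander decomposition $\mathcal{U}$ of $G$ costs $\Otil(m/\phi)$ by Theorem~\ref{thm:expanderDecomp}, and instantiating an expander-pruning data structure $\DSprune(U)$ for each $U \in \mathcal{U}$ takes time proportional to $\vol(U)$, which sums to $O(m)$ across clusters. Maintaining the counters $\counterDeletion(U)$ and the cluster-membership pointers is free in comparison.

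Next I would analyze deletion handling in three pieces. For the pruning work itself, Theorem~\ref{thm:pruning} bounds the total time spent inside a single $\DSprune(U)$ after $\counterDeletion(U) = k_U$ deletions by $O(k_U \log m / \phi^2)$; summing over clusters and using $\sum_U \counterDeletion(U) \le \phi^2 m$ gives $\Otil(m)$. For the per-step bookkeeping that moves each freshly pruned slice $P_i \setminus P_{i-1}$ out of $U_e$ and registers those vertices as singletons in $\mathcal{U}$, the cost is $O(|P_i \setminus P_{i-1}|)$; telescoping across a cluster, this totals at most $|P_U| \le \vol(P_U) \le 8\,\counterDeletion(U)/\phi$ by Theorem~\ref{thm:pruning}, so summing over clusters yields $O(\phi m)$. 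For the collapse events triggered by $\counterDeletion(U_e) > \phi \vol_G(U_e)/20$, the cost $O(\vol(U_e))$ of shattering $U_e$ into singletons can be charged against the trigger: the inequality rearranges to $\vol(U_e) < 20\,\counterDeletion(U_e)/\phi$, so each collapse costs $O(\counterDeletion(U_e)/\phi)$, and the global deletion budget again yields $O(\phi m)$.

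Combining, deletion handling contributes $\Otil(m)$ in total, which is dominated by the $\Otil(m/\phi)$ initialization cost, giving the claimed total update time. The main subtlety, which is mild, is ensuring that no vertex is double-charged between the pruning-slice accounting and the collapse accounting: since each pruning step removes $P_i \setminus P_{i-1}$ from the current $U_e$ permanently, the telescoping identity $\sum_i |P_i \setminus P_{i-1}| = |P_{k_U}|$ holds verbatim per cluster, and a collapse only ever charges the residual vertices still inside $U_e$ at that moment. I expect this bookkeeping to be the main thing to get right; the rest is a direct application of Theorems~\ref{thm:expanderDecomp} and~\ref{thm:pruning} together with a one-line amortization against the deletion budget.
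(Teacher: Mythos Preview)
Your proof is correct and follows the same overall skeleton as the paper: bound initialization by Theorem~\ref{thm:expanderDecomp} and bound the pruning work by Theorem~\ref{thm:pruning}. The one genuine difference is in how the pruning cost is amortized. You charge each cluster's pruning time $O(k_U\log m/\phi^2)$ against the \emph{actual} number of deletions it receives and then invoke the global budget $\sum_U k_U \le \phi^2 m$, yielding $\Otil(m)$ for all deletion-phase work. The paper instead upper-bounds each $k_U$ by the per-cluster cap $\phi\vol_G(U)/20$ from Theorem~\ref{thm:pruning}, getting $\Otil(\vol_G(U)/\phi)$ per cluster and $\Otil(m/\phi)$ in total after summing volumes. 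Your accounting is tighter on the pruning side (by a $1/\phi$ factor) and also explicitly handles the singleton-creation bookkeeping and the collapse case, which the paper leaves implicit; the paper's version is shorter because its looser per-cluster bound already matches the initialization cost, so no separate treatment of bookkeeping or collapses is needed. Either route is fine here.
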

\begin{proof}
The running time of the algorithm is dominated by (1) the time required to compute the initial expander decomposition and (2) the total time to perform expander pruning on each cluster of this decomposition. By Theorem~\ref{thm:expanderDecomp}, (1) is bounded by $\tilde{O}(m/\phi)$. By Theorem~\ref{thm:pruning},  the pruning process on a cluster $U$ can be implemented in $\tilde{O}(\phi \vol_G(U) \cdot \log m/ \phi^2) = \tilde{O}(\vol_G(U)/\phi)$. Summing over all the clusters in the expander decomposition $\mathcal{U}$ and recalling that they form a partition of $V(G)$, we get that the running time of (2) is bounded by
\[
	\sum_{U \in \mathcal{U}} \tilde{O}(\vol_G(U)/\phi) = \tilde{O}(m/\phi).
\] 
Bringing together (1) and (2) proves the claim of the lemma.
\end{proof}

\subsection{Fully Dynamic \sparsifier} \label{sec: fullyDynamicSparsifier}
In this section present a fully dynamic algorithm for maintaining a \sparsifier~of undirected, simple graphs.

\subsubsection{Decremental \sparsifier}
We start by showing that the decremental expander decomposition~(Theorem~\ref{thm: decrementalExpanderDecomp}) almost immediately yields a decremental algorithm for maintaining a \sparsifier.
More specifically, we show the following theorem in this subsection.
\begin{theorem} \label{thm: decrementalSparsifier} Given an unweighted, undirected graph $G=(V,E)$ with $m$ edges and a parameter $\phi \in (0,1)$ satisfying $\phi \geq c/\delta$ for some positive constant $c \geq 240$, there is a decremental algorithm that supports an online sequence of up to $\phi^2 m$ edge deletions and maintains a \sparsifier~$H$ of size $\tilde{O}(\phi m)$ in $\tilde{O}(m/\phi)$ total update time.
\end{theorem}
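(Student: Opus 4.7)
The plan is to combine the decremental expander decomposition of Theorem~\ref{thm: decrementalExpanderDecomp} with the static NMC sparsifier construction of Lemma~\ref{lem: NMCsparsifier}. At initialization, I invoke the decremental expander decomposition with conductance parameter $\phi$, which yields a family $\mathcal{U}$ of $\phi/6$-expanders. Since the hypothesis $\phi \geq 240/\delta$ gives $\phi/6 \geq 40/\delta$, the clusters of $\mathcal{U}$ satisfy the conductance precondition of Lemma~\ref{lem: NMCsparsifier} (taking its ``$c$'' to be at least $40$). I then compute $\trim(U)$ and $\shave(\trim(U))$ for every $U \in \mathcal{U}$, contract each shaved set in $G$, and let $H$ be the resulting multigraph. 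Lemma~\ref{lem: NMCsparsifier} immediately certifies that $H$ is an NMC sparsifier of size $\tilde{O}(\phi m)$, and the initialization runs in $\tilde{O}(m) \subseteq \tilde{O}(m/\phi)$ time.

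For the update procedure, Theorem~\ref{thm: decrementalExpanderDecomp} reports, after each deletion, one of three events: (a) an inter-cluster edge was deleted, (b) a pruning step moved a new vertex set out of some cluster $U_e$ and turned it into singletons, or (c) $U_e$ was fully dismantled into singletons. In case (a) I simply delete the edge from $H$. In cases (b) and (c) I un-contract the newly-pruned vertices from the super-vertex of $U_e$, re-run $\trim$ and $\shave$ on the surviving cluster, re-contract the new shaved set, and insert or delete any inter-cluster edges that are thereby exposed or removed. At every moment the maintained $\mathcal{U}$ is a $\phi/6$-expander decomposition of the current $G$, so Lemma~\ref{lem: NMCsparsifier} continues to certify that $H$ is an NMC sparsifier of size $\tilde{O}(\phi m)$.

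The main obstacle will be bounding the total cost of the trim/shave recomputations within the $\tilde{O}(m/\phi)$ budget, since a single pass of $\trim$ and $\shave$ on cluster $U$ costs $\Theta(\vol(U))$ and rebuilding after every pruning micro-step would blow the budget. To control this I plan a standard doubling amortization: explicitly re-trim and re-shave a cluster $U$ only when the volume pruned from $U$ since its last rebuild has exceeded a constant fraction of $\vol(U)$, or when $U$ is dismantled. Each rebuild then costs $O(\vol(U))$ and is charged against an equal amount of pruned volume, so geometric summation bounds the rebuild work for each initial cluster $U_0$ by $\tilde{O}(\vol(U_0))$ and the total trim/shave work by $\tilde{O}(\sum_{U_0 \in \mathcal{U}_0} \vol(U_0)) = \tilde{O}(m)$. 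The delicate supporting claim is that between explicit rebuilds the intermediate super-vertex in $H$ (the stale shaved set with the already-pruned vertices removed) still lies inside every non-singleton minimum cut of the current $G$; I would establish this directly from the $\phi/6$-expansion of $U_{\text{current}}$ guaranteed by Theorem~\ref{thm:pruning}, together with the monotonicity $\shave(\trim(U')) \subseteq \shave(\trim(U))$ for $U' \subseteq U$ and the fact that the volume has at most halved since the last rebuild, so the proof of Lemma~\ref{lem: NMCsparsifier} can be adapted to the intermediate contraction. Together with the $\tilde{O}(m/\phi)$ backbone of Theorem~\ref{thm: decrementalExpanderDecomp} and the $O(\phi^2 m)$ trivial updates to $H$ for inter-cluster deletions, this yields the claimed $\tilde{O}(m/\phi)$ total update time, while the size bound $\tilde{O}(\phi m)$ follows from Lemma~\ref{lem: NMCsparsifier} at each rebuild plus the $\tilde{O}(\phi m)$ total volume of un-contracted pruned vertices.
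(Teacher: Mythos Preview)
Your high-level plan (run the decremental expander decomposition of Theorem~\ref{thm: decrementalExpanderDecomp}, then maintain $\trim$ and $\shave$ on each cluster and contract) matches the paper exactly. The divergence is in \emph{how} you maintain $\trim$ and $\shave$, and this is where the proposal has a genuine gap.

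The paper does not use any doubling or periodic rebuild. It maintains $U'=\trim(U)$ and $U''=\shave(U')$ \emph{exactly} after every update via a cascading \textsc{Remove} procedure: when a vertex $u$ is pruned from $U$ (or an intra-cluster edge is deleted), remove $u$ from $U'$, decrement $\deg_{U'}(v)$ for each neighbor $v$, and recurse on any neighbor whose trim or shave threshold is now violated; uncontract any vertex that leaves $U''$. The running-time argument is one line: because the setting is purely decremental, a vertex can leave $U'$ (respectively $U''$) at most once, so the total \textsc{Remove} work is $\sum_{u}O(\deg(u))=O(m)$. No amortization trick is needed, and correctness is immediate from Lemma~\ref{lem: NMCsparsifier} since the exact trimmed/shaved sets are maintained at all times.

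Your alternative, keeping a \emph{stale} shaved set between rebuilds, has a correctness problem that the sketch does not address. The monotonicity you invoke, $\shave(\trim(U'))\subseteq\shave(\trim(U))$ for $U'\subseteq U$, points the wrong way: it says the \emph{correct} current shaved set is a \emph{subset} of the stale one, so your contracted super-vertex may contain vertices $u$ that no longer satisfy $|E(u,\trim(U_{\text{current}}))|>d(u)/2+1$ in the current graph. That inequality is precisely what the proof of Lemma~\ref{lem: NMCsparsifier} uses to argue that $u$ cannot be separated from the rest of the shaved set by a non-singleton minimum cut; without it, such a vertex can be cut off by a cut of size at most $\delta$, and the stale contraction then destroys that cut. ``Volume at most halved'' does not prevent this, since pruning half the volume can certainly drop a single vertex's inside-degree below the shave threshold. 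So the ``delicate supporting claim'' is not just delicate---it is, as stated, false in general, and I do not see how to adapt the proof of Lemma~\ref{lem: NMCsparsifier} to an over-large contraction.

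The fix is to drop the doubling scheme and maintain $\trim$/$\shave$ incrementally as the paper does. The monotonicity you proved is still the key fact, but for a different reason: it is what guarantees that vertices only \emph{leave} $U'$ and $U''$, never re-enter, which is exactly what makes the incremental maintenance cost $O(m)$ in total.
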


Let $\phi \in (0,1)$ be parameter with $\phi \geq c/\delta$ for some positive constant $c \geq 240$. Our data-structure internally maintains an expander decomposition under edge deletions~\textsc{DecExpander}$(G,\phi)$~(Theorem~\ref{thm: decrementalExpanderDecomp}). Let $\mathcal{U}_{(0)}$ be the expander decomposition of the initial graph $G^{(0)}$ from \textsc{DecExpander}$(G,\phi)$. Let $\mathcal{U}_{(0)}'= \{\trim(U) \mid U \in \mathcal{U}_{(0)}\}$ and $\mathcal{U}_{(0)}'' = \{\shave(U') \mid U' \in \mathcal{U}'_{(0)}\}$. We define $H^{(0)}=(V',E')$ to be the graph obtained from $G^{(0)}$ by contracting every set $U'' \in \mathcal{U}_{(0)}''$. As we will shortly see, $H^{(0)}$ will correspond to a \sparsifier~of $G^{(0)}$. This suggests that in order to maintain such a sparsifier under edge deletions we need to efficiently maintain the sets $U'= \trim(U)$ and $U'' = \shave(U')$ for every cluster $U$ in the current expander decomposition $\mathcal{U}$. We achieve this by keeping track of the following counters:
\begin{itemize}
\itemsep0em
\item the degree $\deg(u)$ of each vertex $u$ in $V$, and 
\item the degree $\deg_{U'}(u) := |E(u,U')|$ for all $u \in U'$ and all $U' \in \mathcal{U'}$, i.e., the degree of vertex $u$ restricted to the cluster $U'$.
\end{itemize}
Note that both degree values can be computed for each vertex in the initial graph $G_0$ by performing a graph traversal.

Now, consider the deletion of an edge $e=(u,v)$ from $G$. We first decrement the value of both counters $\deg(u)$ and $\deg(v)$ by one to account for the deletion of $e$. Then we pass this deletion to the data-structure $\textsc{DecExpander}(G,\phi)$. This in turn reports a subset of vertices $P_U$ that are pruned out of a cluster $U$ due to the deletion of $e$. At this point observe that the decremental expander decomposition algorithm already has updated $U$ with respect to $P_U$.  Thus it remains to update the sets $U'$ and $U''$ respectively. 

For each $u \in P_U$ we do the following. First, note that when $u \not \in U'$, we don't need to do anything since $U'' \subseteq U'$ asserts that $u$ cannot belong to the contracted set $U''$. If $u \in U'$, then we remove $u$ from $U'$ and potentially $U''$ by invoking the subprocedure $\textsc{Remove}(u)$ defined as follows:
\begin{itemize}
\itemsep0em
\item[] \textsc{Remove}$(u)$
\item[] $\quad \bullet~$ Set $U'_{\mathrm{old}} \gets U'$, and set $U' \gets U' \setminus \{u\}$.
\item[] $\quad \bullet~$ If $u \in U''$ then 
\begin{itemize}
\itemsep0em
\item[] $\quad \triangleright~$ Set $U'' \gets U'' \setminus \{u\}$, and $\textsc{Uncontract}(u)$
\end{itemize}
\item[] $\quad \bullet~$ For every neighbour $v \in E(u, U'_{\mathrm{old}})$:
\begin{itemize}
\itemsep0em
\item[] $\quad \triangleright~$ Set $\deg_{U'}(v) \gets \deg_{U'}(v) - 1$.
\item[] $\quad \triangleright~$ If $\deg_{U'}(v) < 2/5 \deg(v)$ then 
\begin{itemize}
\itemsep0em

\item[] $\quad -~$ $\textsc{Remove}(v)$ 
\end{itemize}
\item[] $\quad \triangleright~$ If $\deg_{U'}(v) <  \deg(v) /2 + 1$ and  $v \in U''$ then
\begin{itemize}
\itemsep0em
\item[] $\quad -~$  Set $U'' \gets U'' \setminus \{v\}$, and $\textsc{Uncontract}(v)$.
\end{itemize}
\end{itemize}
\end{itemize}

Procedure \textsc{Uncontract}$(u)$ simply reverts the operation of contracting $u$ into some cluster $U''$. It can also be interpreted as adding the vertex $u$ together with its incident edges in $G$ to the current sparsifier $H$. This completes the description of the algorithm.

The next lemma shows that the algorithm maintains a sparsifier that preserves non-singleton minimum cuts exactly.

\begin{lemma}[Correctness] \label{ref: correctnessDecSparsifier}
The decremental algorithm correctly maintains a \sparsifier~$H$ of size $O(\phi m)$.
\end{lemma}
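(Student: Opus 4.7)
My plan is to reduce to the static sparsifier guarantee of Lemma~\ref{lem: NMCsparsifier} and argue inductively that the incremental updates preserve its hypotheses. Concretely, I would first invoke Lemma~\ref{lem: DecExpDecompCorrectness} to conclude that each cluster $U$ in the maintained decomposition $\mathcal{U}$ induces a $\phi/6$-expander on $G[U]$. Since the hypothesis of Theorem~\ref{thm: decrementalSparsifier} gives $\phi \geq 240/\delta$, we have $\phi/6 \geq 40/\delta$, which meets the expansion prerequisite of Lemma~\ref{lem: NMCsparsifier}. If the maintained sets satisfied $U' = \trim(U)$ and $U'' = \shave(U')$ at all times, then $H$ would coincide with the static sparsifier produced by applying Lemma~\ref{lem: NMCsparsifier} to the current graph and current decomposition, yielding both the cut-preservation property and the size bound $\tilde{O}(\phi m)$ at once.

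The bulk of the proof will therefore consist in showing, by induction on the update sequence, that $U' = \trim(U)$ and $U'' = \shave(U')$ are invariants of the data structure. The base case holds by the explicit construction of $H^{(0)}$. For the inductive step, after an edge deletion causes $\textsc{DecExpander}$ to prune a set $P_U$ from $U$, the procedure $\textsc{Remove}$ is invoked on each vertex of $P_U \cap U'$. Here I would appeal to the fact that $\textsc{Remove}$ is a verbatim simulation of the iterative definition of $\trim$: it removes a vertex whose restricted degree $\deg_{U'}(\cdot)$ has just fallen below the $(2/5)\deg(\cdot)$ threshold, decrements the restricted degrees of its neighbours in $U'_{\mathrm{old}}$, and recurses whenever a neighbour newly violates the trim condition. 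This cascade terminates precisely when no vertex of the updated $U'$ violates the trim condition, which is exactly when $U' = \trim(U)$. The shaving check $\deg_{U'}(v) < \deg(v)/2 + 1$ is evaluated on exactly those neighbours whose restricted degree decreased, which is the only way the shave condition can newly fail, so $U''$ remains equal to $\shave(U')$ after the call; the corresponding calls to $\textsc{Uncontract}$ reinsert into $H$ precisely the edges of $G$ incident to the departing vertices.

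Once these invariants are in place, cut preservation of $H$ follows directly from Lemma~\ref{lem: NMCsparsifier} applied to the current $\mathcal{U}$ with parameter $\phi/6$: every non-singleton minimum cut of $G$ respects the contracted sets $U''$, so its value is preserved in $H$. The size bound $|E(H)| = \tilde{O}(\phi m)$ is obtained similarly, by combining the $O(\phi m)$ bound on inter-cluster edges of $\mathcal{U}$ guaranteed by Lemma~\ref{lem: DecExpDecompCorrectness} with the bound $\vol_U(U\setminus U'')=\tilde O(|\partial(U)|/\phi)$ that follows from the trim and shave characterizations used in the proof of Lemma~\ref{lem: NMCsparsifier}, summed over all $U\in\mathcal{U}$.

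The main obstacle I anticipate is rigorously verifying that the counters $\deg$ and $\deg_{U'}$ remain faithful representations of degrees in the current graph and of $|E(v, U')|$ respectively, under arbitrary deletion sequences. In particular, one must argue carefully about edge deletions $(u,v)$ whose endpoints both survive in $U'$, and about the interplay between the pruning performed by $\textsc{DecExpander}$ and the cascaded removals inside $\textsc{Remove}$: the induction hypothesis must track the state of the counters precisely enough that every invocation of a threshold test inside $\textsc{Remove}$ is answered with the value that would be obtained by recomputing from scratch against the current graph and the current $U'$. This bookkeeping is the subtle step that underpins the inductive invariants above.
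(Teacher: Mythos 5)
Your proof takes essentially the same route as the paper's: invoke the correctness of the decremental expander decomposition (Lemma~\ref{lem: DecExpDecompCorrectness}) to guarantee the clusters remain $\phi/6$-expanders, observe $\phi/6 \geq 40/\delta$, and argue inductively that the maintained sets satisfy $U'=\trim(U)$ and $U''=\shave(U')$ at every step, so that Lemma~\ref{lem: NMCsparsifier} (applied with parameter $\phi/6$) yields both cut preservation and the size bound. The paper's proof additionally cites the uniqueness of $\trim$ (from \cite{chuzhoy2019new}) to justify that the order in which $\textsc{Remove}$ cascades through vertices is immaterial --- your claim that the cascade terminates ``exactly when $U'=\trim(U)$'' tacitly relies on this fact, so you should make it explicit.

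You also flag a genuine subtlety that the paper's proof does not address: the stated deletion routine decrements only $\deg(u)$ and $\deg(v)$ before invoking the pruning data structure, but never decrements $\deg_{U'}(u)$ or $\deg_{U'}(v)$ when $e=(u,v)$ is an intra-$U'$ edge whose endpoints survive pruning, nor does it re-test the trim and shave thresholds at $u$ and $v$ in that case (an intra-$U'$ deletion shifts $\deg_{U'}(v) - \tfrac{2}{5}\deg(v)$ downward and can newly violate the trim condition, and an inter-cluster deletion can newly \emph{satisfy} the shave condition, which the one-directional $\textsc{Remove}$ cascade cannot repair). To fully close the induction one must add and verify this bookkeeping; the paper's proof asserts the invariants (1) and (2) hold after each deletion but only argues them for the vertices reached via $\textsc{Remove}$ starting from the pruned set $P_U$. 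Your anticipation of this ``main obstacle'' is well founded, and a complete proof should argue that the deletion handler either updates $\deg_{U'}$ for intra-cluster deletions and invokes $\textsc{Remove}$ directly on $u$ and $v$ when the trim threshold newly fails, or explain why the algorithm's guarantees are robust to this omission.
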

\begin{proof}
We begin by showing that $H$ is a \sparsifier~of some current graph $G$. To this end, let $G^{(i)}$ %\mh{Bad notation as $G_i$ meant the graph with color $i$ in th e previous section.}
be the graph after the $i$-th deletion and let $H^{(i)}$ be the sparsifier after the data-structure has processed the $i$-th deletion. To prove that $H^{(i)}$ is a \sparsifier~of $G^{(i)}$ it suffices to show that (i) $\mathcal{U}^{(i)}$ is $\phi/6$-expander decomposition of $G^{(i)}$ (ii) ${\mathcal{U}_{(i)}}'= \{\trim(U) \mid U \in \mathcal{U}_{(i)}\}$, and (iii) $\mathcal{U}_{(i)}'' = \{\shave(U') \mid U' \in \mathcal{U}_{(i)}'\}$. To see why this is true, note that $\phi/6 \geq 240/(6\delta) \geq 40/\delta$ by assumption of Theorem~\ref{thm: decrementalSparsifier}, and apply Lemma~\ref{lem: NMCsparsifier} with the parameter $\phi/6$.

If $i=0$, recall that by construction $\mathcal{U}_{(0)}$ is a $\phi$-expander decomposition of the initial graph $G^{(0)}$, $\mathcal{U}_{(0)}'= \{\trim(U) \mid U \in \mathcal{U}_{(0)}\}$ and $\mathcal{U}_{(0)}'' = \{\shave(U') \mid U' \in \mathcal{U}'_{(0)}\}$. Since the parameter $\phi$ satisfies $\phi \geq 240/\delta \geq 40/\delta$, we get the graph $H^{(0)}$ obtain by contracting every set $U'' \in \mathcal{U}''_{(0)}$ is a \sparsifier~of $G^{(0)}$. 

If $i \geq 1$, inductively assume that $\mathcal{U}_{(i-1)}, \mathcal{U}'_{(i-1)}, \mathcal{U}''_{(i-1)}$ have been correctly maintained until the $(i-1)$-st edge deletion. By Theorem~\ref{thm: decrementalExpanderDecomp}, we already know that $\mathcal{U}_{(i)}$ is a $\phi/6$-expander decomposition of the graph $G^{(i)}$. Thus it remains to argue about the correctness of $\mathcal{U}'_{(i)}$ and $\mathcal{U}''_{(i)}$.

Let $P_{U}$ be the set of vertices pruned out of a cluster $U$ that the data-structure $\textsc{DecExpander}(G,\phi)$ returns upon the $i$-th edge deletion. To prove that the update of $U'=\trim(U)$ and $U''=\shave(U')$ with respect to $P_U$ is correct, by Definition~\ref{def: TrimShave}, consider the following invariants
\begin{itemize}
\itemsep0em
\item[(1)] for all $u \in U' = \trim(U), ~ \deg_{U'}(u) \geq 2/5 \deg(u)$.%\thatchaphol{Actually, the invariant should be that $U' = \trim(U)$. This is a bit stronger than what you wrote, because we never remove vertices $v$ where $\deg_{U'}(u) \geq 2/5 \deg(u)$. I am ok with just saying that ``the \textsc{Remove} procedure removes from $U'$ precise when the condition is violated. So $U' =\trim(U)$. Do you think it is helpful to note that $\trim(U)$ is unique \cite{chuzhoy2019new} even if there are available choices on which vertices to trim first. Because it is unique, our dynamic algorithm just really trimming vertices in some particular order. So it really does implement $\trim$.}, and
\item[(2)] $U'' = \shave(U') =  \{u \in U' \mid \deg_{U'}(u) \geq \deg(u)/2 + 1 \}$.
\end{itemize}

For every vertex $u \in P_U$ with $u \in U'$,  note that our subprocedure $\textsc{Remove}(u)$ removed from $U'=\trim(U)$ all vertices $v$ for which $\deg_{U'}(v) < 2/5 \deg(v)$, and thus the invariant (1) holds for the vertices that are left in $U'$. %\mh{you should say instead that the subprocedure removed all for which $\deg_{U'}(v) < 2/5 \deg(v)$}. 
Moreover, as already pointed out in~\cite{chuzhoy2019new}, $\trim(U)$ is unique so the order in which vertices are removed does not matter. Similarly, by construction we have that subprocedure $\textsc{Remove}(u)$ detects all vertices in $U''=\shave(U')$ that do not satisfy invariant (2). It follows that $U'$ and $U''$ are maintained correctly, which in turn implies that $\mathcal{U}'_{(i)}$ and $\mathcal{U}''_{(i)}$ are also correct. %\gramoz{Do you think we need to be more formal about this part?}\thatchaphol{It is good.}

The guarantee on the size of the sparsifier follows directly from Lemma~\ref{lem: NMCsparsifier}.
\end{proof}

We next study the running time of our algorithm.

\begin{lemma}[Running Time] The decremental algorithm for maintaining a \sparsifier~ runs in $\tilde{O}(m/\phi)$ total update time.
\end{lemma}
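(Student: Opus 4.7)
The plan is to decompose the total runtime into four contributions and show each is $\tilde{O}(m/\phi)$: (i) initialization, (ii) the cost of running \textsc{DecExpander}$(G,\phi)$, (iii) the total cost of all \textsc{Remove} invocations, and (iv) the total cost of all \textsc{Uncontract} invocations.

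First, I would handle initialization. Computing $\mathcal{U}_{(0)}$ via \textsc{DecExpander} costs $\tilde{O}(m/\phi)$ by Theorem~\ref{thm: decrementalExpanderDecomp}. The degree counters $\deg(u)$ for $u\in V$ and $\deg_{U'}(u)$ for $u\in U'\in\mathcal{U}'_{(0)}$ can be computed by a single traversal of $G$ in $O(m)$ time, and the initial $\trim$ and $\shave$ sets can be computed by a standard repeated-scan/queue implementation in $O(m)$ time. Since $\phi\le 1$, this is subsumed by $\tilde{O}(m/\phi)$.

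Next, edge deletions are forwarded to \textsc{DecExpander}, whose total update cost over the entire sequence of at most $\phi^2m$ deletions is $\tilde{O}(m/\phi)$ by Theorem~\ref{thm: decrementalExpanderDecomp}; in particular, the total cost of enumerating all pruned sets $P_U$ is absorbed here.

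The heart of the argument is charging the work done by \textsc{Remove} and \textsc{Uncontract}. A single call $\textsc{Remove}(u)$ does $O(1)$ bookkeeping plus a scan of the edges $E(u,U'_{\mathrm{old}})$, so its local cost is $O(\deg(u))$; the recursive calls it triggers are accounted for separately. I would then observe the key monotonicity: once $u$ is removed from $U'$ (either because $u$ lay in a pruned set $P_U$, because a neighbor decrement pushed $u$ below the $\tfrac{2}{5}\deg(u)$ threshold, or because the cluster was destroyed in the ``$\counterDeletion(U_e)>\phi\vol_G(U_e)/20$'' branch), $u$ never re-enters any trim set, since in every case the decremental expander decomposition places $u$ into a singleton cluster for the remainder of the execution. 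Hence $\textsc{Remove}(u)$ is invoked at most once for each $u\in V$, giving a grand total of
\[
\sum_{u\in V} O(\deg(u)) \;=\; O(m)
\]
for all \textsc{Remove} calls. An identical once-per-vertex argument applies to $\textsc{Uncontract}(u)$: uncontracting $u$ means adding $u$ together with its $G$-edges to $H$, which costs $O(\deg(u))$, and a vertex cannot be uncontracted twice because once $u\notin U''$ it is not inside any contracted set. This contributes another $O(m)$.

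Adding the four contributions yields $\tilde{O}(m/\phi)+O(m)+O(m)=\tilde{O}(m/\phi)$, as desired. The only subtle step I anticipate is the ``each vertex removed at most once'' charging argument: I must verify that the ``cluster-destroying'' branch of \textsc{DecExpander} -- in which an entire $U_e$ is replaced by singletons -- does not cause a vertex to re-enter a trim set, and that the $O(\vol_G(U_e))$ work of tearing down $U_e$ is still paid for by the same one-time per-vertex charge. Since after this event every $v\in U_e$ becomes a singleton cluster forever, the charge goes through, and no further accounting is required.
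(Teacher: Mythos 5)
Your proposal is correct and follows essentially the same approach as the paper: bound the decremental expander decomposition cost by Theorem~\ref{thm: decrementalExpanderDecomp}, charge each \textsc{Remove} and \textsc{Uncontract} invocation $O(\deg(u))$, and use the decremental monotonicity of $U'$ and $U''$ (once a vertex leaves, it never returns) to conclude each vertex pays this cost at most once, giving $O(m)$ total. The only cosmetic difference is that you treat initialization as a separate term, while the paper subsumes it into the expander decomposition cost.
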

\begin{proof}
The running time of the algorithm is dominated by (1) the time to maintain a decremental expander decomposition $\mathcal{U}$, (2) the total time to maintain $\mathcal{U}'$ and $\mathcal{U}''$ and (3) the cost of performing vertex uncontractions. By Theorem~\ref{thm: decrementalExpanderDecomp}, (1) is bounded by $\tilde{O}(m/\phi)$. To bound (2), we can implement subprocedure $\textsc{Remove}(u)$ for a vertex $u$ in $O(\deg(u))$ time, excluding the recursive calls to its neighbours. Since the updates from $G$ and $\mathcal{U}$ are decremental (as they consist of either edge deletions or vertex deletions), once a vertex leaves a set $U'$ or $U''$, it can never join back. Hence, it follows that (2) is bounded by $\sum_{u \in V}O(\deg(u)) = O(m)$. Similarly, a vertex $u$ can be uncontracted at most once, and this operation can also be implemented in $O(\deg(u))$ time, giving a total runtime of $\sum_{u \in V}O(\deg(u)) = O(m)$ for (3). Bringing (1), (2) and (3) together proves the claim of the lemma.
\end{proof}

\subsubsection{Extension to Fully Dynamic \sparsifier}

We follow a widespread approach in data structures for turning a decremental algorithm into a fully dynamic algorithm and apply it to our problem for maintaining a \sparsifier.

At a high level, our approach uses a decremental algorithm for maintaining a \sparsifier~of the graph and handles edge insertions by keeping them ``on the side". It crucially relies on the fact that adding an edge to the sparsifier yields a sparsifier for the new graph augmented by that edge. To make sure that the size of the sparsifier remains small after these edge augmentations, we restart our decremental algorithm from scratch after the number of updates exceeds a predefined threshold. This leads to the following result.

\begin{theorem} \label{thm:fullyDynamicSparsifier}
Given an unweighted, undirected graph $G=(V,E)$ with $m$ edges and a parameter $\phi \in (0,1)$ satisfying $\phi \geq c/\delta$ for some positive constant $c \geq 240$, there is a fully dynamic algorithm that maintains a \sparsifier~$H$ of size $\tilde{O}(\phi m)$ in $\tilde{O}(1/\phi^{3})$ amortized time per edge insertion or deletion.
\end{theorem}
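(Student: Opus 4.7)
The plan is to bootstrap Theorem~\ref{thm: decrementalSparsifier} into a fully dynamic algorithm via the standard ``rebuild + insertion buffer'' technique, exploiting the composability of NMC sparsifiers under edge additions. Initialize by running the decremental algorithm on the input graph $G$ with parameter $\phi$, producing an \sparsifier~$H_D$ of size $\tilde{O}(\phi m)$, and keep an auxiliary edge set $I$, initially empty. The reported sparsifier at any point is $H := H_D \cup I$.

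Updates are handled as follows. On a deletion of an edge $e$: if $e \in I$, simply remove $e$ from $I$; otherwise forward the deletion to the decremental data structure, which updates $H_D$ accordingly. On an insertion of an edge $e$: add $e$ to both $I$ and $H$ in $O(1)$ time. After $T := \phi^2 m$ updates have been processed in the current epoch, discard all internal state, set $I \gets \emptyset$, and restart the decremental algorithm from scratch on the current graph. This respects the ``at most $\phi^2 m$ deletions'' precondition of Theorem~\ref{thm: decrementalSparsifier} since the total number of deletions forwarded to the decremental subroutine within an epoch is at most $T$.

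For correctness, the key observation is the composability property of NMC sparsifiers, already used by Goranci, Henzinger, and Thorup~\cite{goranci2018incremental} in the insertions-only setting: if $H'$ is an \sparsifier~of $G'$ and $e$ is any edge, then $H' \cup \{e\}$ is an \sparsifier~of $G' \cup \{e\}$, because adding the same edge to both sides changes the size of every cut by exactly the same amount. Iterated over the edges of $I$, this gives that $H = H_D \cup I$ is an \sparsifier~of the current graph whenever $H_D$ is an \sparsifier~of ``initial graph of the epoch minus deletions'', which is maintained by Theorem~\ref{thm: decrementalSparsifier}. Deletions of edges from $I$ are trivially correct since those edges were never fed to the decremental data structure. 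The size bound is $|H| \leq |H_D| + |I| \leq \tilde{O}(\phi m) + \phi^2 m = \tilde{O}(\phi m)$ since $\phi \leq 1$.

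For the amortized running time, each rebuild costs $\tilde{O}(m/\phi)$ by Theorem~\ref{thm: decrementalSparsifier} and is charged over the $\phi^2 m$ operations of its epoch, contributing $\tilde{O}(1/\phi^3)$ amortized per operation; forwarded deletions also cost $\tilde{O}(1/\phi^3)$ amortized from the same theorem, and insertions cost $O(1)$. The step requiring the most care is aligning the epoch threshold $T$ with the decremental algorithm's deletion budget so the precondition of Theorem~\ref{thm: decrementalSparsifier} is never violated; since every update either consumes at most one unit of that budget (a forwarded deletion) or consumes none (an insertion or an $I$-deletion), choosing $T = \phi^2 m$ safely suffices and yields the claimed $\tilde{O}(1/\phi^3)$ amortized update time.
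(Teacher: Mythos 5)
Your proposal follows the same phase/rebuild skeleton as the paper (Section~\ref{sec: fullyDynamicSparsifier}): epochs of length $\phi^2 m$, an insertion buffer $I$, forwarding ``old'' deletions to the decremental structure, reporting $H_D \cup I$, and charging the $\tilde O(m/\phi)$ total decremental cost across the $\phi^2 m$ operations of each epoch to get $\tilde O(1/\phi^3)$ amortized. The size bound and running-time analysis also match.

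The one place where you are cutting a corner is the justification for composability. You cite~\cite{goranci2018incremental} and then offer the one-line reason ``adding the same edge to both sides changes the size of every cut by exactly the same amount.'' That is \emph{not} literally true here: $H_D$ is a \emph{contraction} of the current decremental graph, so if the inserted edge $e=(u,v)$ has both endpoints mapped to the same supervertex of $H_D$, adding $e$ to $H_D$ produces a self-loop and changes no cut of $H_D$, whereas $e$ may very well cross cuts of the underlying graph. The composability statement survives, but the argument has to show that no \emph{non-singleton minimum} cut of $(G_0\setminus D)\cup\{e\}$ can separate $u$ from $v$ in that case; the paper's proof of Lemma~\ref{ref: correctnessFDSparsifier} handles exactly this by observing that such a cut would have to be a non-singleton minimum cut of $G_0\setminus D$ (since one edge raises a cut by at most one), hence preserved by $H_D$, contradicting the fact that it splits a supervertex. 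So: same approach and essentially correct, but the ``same amount on both sides'' reasoning should be replaced by (or at least accompanied by) the case analysis on whether $u$ and $v$ are contracted together, as in the paper. Citing~\cite{goranci2018incremental} is acceptable, but be aware that the claim is tied to the contraction structure and is not a tautology about cuts.
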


Our data structure  subdivides the sequence of edge updates into phases of length $\phi^2 m$, where $\phi \in (0,1)$, satisfying $\phi\geq c/\delta$ for some positive constant $c \geq 240$. Our algorithm maintains
\begin{itemize}
\item the set of edges $I$ that represents the edges inserted since the beginning of a phase that have not been subsequently deleted.
\end{itemize}

At the beginning of each phase, we initialize (i) the decremental algorithm \textsc{DecSparsifier$(G,\phi)$} (Theorem~\ref{thm: decrementalSparsifier}) to maintain a \sparsifier~$H$ of the current graph $G$, and (ii) set $I \gets \emptyset$.  

Let $e$ be an edge update to $G$. If $e$ is an edge insertion to $G$, we add it to the set $I$. If $e$ is deleted from $G$, we consider two cases: If $e \in I$, we simply delete $e$ from $I$. If $e \not \in I$, we pass the deletion of $e$ to $\textsc{DecSparsifier}(G,\phi)$ to update the sparsifier $H$. We maintain $H' \gets H \cup I$ as the sparsifier of the current graph. This completes the description of the algorithm. 

We next show that our fully dynamic algorithm maintains a correct \sparsifier~at any time.

\begin{lemma}[Correctness] \label{ref: correctnessFDSparsifier}
The fully dynamic algorithm correctly maintains a \sparsifier~$H'$ of size $O(\phi m)$.
\end{lemma}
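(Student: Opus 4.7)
The plan is to argue correctness by decomposing the current graph into a ``decremental skeleton'' handled by $\textsc{DecSparsifier}$ together with the set $I$ of surviving insertions, and then transferring the NMC sparsifier guarantee across these two parts. Fix any moment during a phase. Let $G_0$ denote the graph at the beginning of the phase with $m$ edges, let $D$ be the set of edges of $G_0$ that have since been deleted, and recall that $I$ is the set of edges inserted during the phase that have not been subsequently deleted. Then the current graph equals $G=(G_0\setminus D)\cup I$, and every deletion forwarded to $\textsc{DecSparsifier}(G,\phi)$ corresponds to an edge removed from $G_0\setminus D$.

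Since the phase length is capped at $\phi^2 m$, we have $|D|\le\phi^2 m$, which is within the deletion budget of Theorem~\ref{thm: decrementalSparsifier}. Hence at every moment in the phase the subroutine's output $H$ is a correct \sparsifier~of $G_0\setminus D$ of size $\Otil(\phi m)$. It remains to show that augmenting $H$ with the surviving insertions $I$ yields a \sparsifier~of $G$ itself. For this I would invoke the composability property of NMC sparsifiers due to Goranci, Henzinger, and Thorup~\cite{goranci2018incremental} and highlighted in the introduction: if $J$ is an \sparsifier~of a simple graph $F$, then $J\cup\{e\}$ is an \sparsifier~of $F\cup\{e\}$. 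Iterating this statement over the edges of $I$ starting from the pair $(H,\,G_0\setminus D)$ yields that $H'=H\cup I$ is a \sparsifier~of $(G_0\setminus D)\cup I=G$, proving correctness.

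For the size bound the calculation is direct: $|E(H')|\le|E(H)|+|I|\le\Otil(\phi m)+\phi^2 m=\Otil(\phi m)$. Because at most $\phi^2 m\le m$ updates occur during a phase, the current edge count remains within a constant factor of $m$, so the bound can equivalently be stated in terms of the current number of edges.

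The main obstacle I foresee is justifying the composability step. Intuitively, adding an edge $e$ either leaves $\lambda$ unchanged, in which case any non-singleton minimum cut of $G\cup\{e\}$ is already a minimum cut of $G$ not crossed by $e$ and is preserved by $H$; or it raises $\lambda$ by exactly one, in which case every minimum minimum cut of $G$ must cross $e$. The genuinely delicate subcase is when $\lambda$ strictly increases yet a non-singleton minimum cut of $G\cup\{e\}$ avoids $e$; I would handle this by appealing to the structural property of the trim-and-shave based \sparsifier{s} used here, noted already in~\cite{goranci2018incremental}, which guarantees preservation of non-singleton cuts of value slightly above the minimum, rather than redoing the case analysis from scratch.
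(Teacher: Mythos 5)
Your decomposition is identical to the paper's: write the current graph as $G=(G_0\setminus D)\cup I$, invoke Theorem~\ref{thm: decrementalSparsifier} (and the $\phi^2 m$ deletion budget) to get that $H$ is an \sparsifier{} of $G_0\setminus D$, and then argue that augmenting $H$ by each inserted edge preserves the \sparsifier{} property. The size bound and its arithmetic also match.

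Where the two diverge is at the composability step, and this is precisely the load-bearing step. You invoke it as an external result and then yourself flag the ``delicate subcase'' --- when $\lambda$ strictly increases and a non-singleton minimum cut of $G\cup\{e\}$ avoids $e$ --- proposing to resolve it via a stronger property (``preservation of non-singleton cuts slightly above minimum'') that is \emph{not} part of the paper's formal Definition of \sparsifier{}s. That leaves a genuine gap: the definition in this paper only promises $\abs{\partial_H(C)}=\abs{\partial_G(C)}$ for cuts with $\partial_G(C)=\lambda(G)$ exactly, so you cannot appeal to an unstated approximate-preservation guarantee without first establishing it for the trim-and-shave construction. The paper instead proves composability directly from the fact that $H$ is a \emph{contraction} of $G_0\setminus D$: with $f$ the contraction map and $e=(u,v)$, one case-splits on $f(u)\neq f(v)$ (where $e$ survives into $H\cup\{e\}$ and the cut values track) versus $f(u)=f(v)$. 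In the latter case, suppose $e$ lay on some non-singleton minimum cut $C$ of $(G_0\setminus D)\cup\{e\}$; since adding a single edge increases the minimum cut by at most one, $C$ would already be a non-singleton minimum cut of $G_0\setminus D$ and hence preserved by $H$, but then $C$ would split the supervertex $f(u)=f(v)$, contradicting that $H$ is a contraction respecting $C$. This is the argument you should carry out rather than gesturing at an approximate-preservation property; as written, your proposal records the right plan but does not close the case it correctly identifies as hard.
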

\begin{proof}
Let $G=(G_0 \setminus D) \cup I$ be the current graph, where $G_0$ is the graph at the beginning of a phase, $D$ is the set of edges deleted from $G_0$, and $I$ is the of edges inserted since the beginning of a phase that have not been subsequently deleted. Let $H' = H \cup I$ by the sparsifier our data-structure maintains.

By Theorem~\ref{thm:fullyDynamicSparsifier}, we know that $H$ is a \sparsifier~of $G_0 \setminus D$. We claim that $H \cup I$ is a \sparsifier~of $(G_0 \setminus D) \cup I$. To see this, consider the case when $I = \{e\}$, where $e=(u,v)$. Once proving this simpler case, our general claim follows follows by induction.  As $H$ is a contraction of $G_0 \setminus D$ (Lemma~\ref{lem: NMCsparsifier}), there is a vertex mapping $f: V(G_0 \setminus D) \rightarrow V(H)$ assigning nodes that are contracted together to a single node in $V(H)$. We distinguish two cases. If $f(u) \neq f(v)$, then $e$ increases a non-singleton minimum cut in $G_0 \setminus D \cup \{e\}$ by at most one. Since the edge is present both in $G_0 \setminus D \cup \{e\}$ and $H \cup \{e\}$, it follows $H \cup \{e\}$ preserves all non-singleton minimum cuts of $G_0 \setminus D \cup \{e\}$. If $f(u) = f(v)$, i.e., both endpoints of $e$ are contracted into a single vertex in $H$, then we claim that $e$ cannot participate in any non-singleton minimum cut in $G_0 \setminus D \cup \{e\}$. Suppose for contradiction that the latter holds. Then there exists a non-singleton minimum cut $(C,V \setminus C)$ in $G \setminus D \cup \{e\}$ such that
\[
	|E_{G \setminus D \cup \{e\}}(C, V \setminus C)| = |E_{G \setminus D} (C, V \setminus C)| + 1,
\]
where the above equality uses the fact that $e$ is a cut edge. Since a non-singleton minimum cut $G \setminus D$ can increase by at most $1$ when adding a single edge, it follows that $(C, V \setminus C)$ is a non-singleton minimum cut in $G$. Since $H$ is \sparsifier~of $G$, we have that $(C,V \setminus C)$ must also be non-singleton minimum cut in $H$. Let $s \in V(H)$ be the supervertex in $H$ containing $u$ and $v$. It follows $\min\{|s \cap C|, |s \setminus C| \}  \neq 0$, which contradicts the fact that $H$ is \sparsifier~of $G$.

To bound the size of $H'$, observe that (1) $H$ is of size $O(\phi m)$ at any time~(Theorem~\ref{thm: decrementalSparsifier}), and (2) $|I| \leq \phi^{2} m$. Therefore, $H'$ is of size $O(\phi^2 m + \phi m) = O(\phi m).$
\end{proof}

\begin{lemma}[Running Time] \label{ref: runTimeFDSparsifier}
The fully dynamic algorithm for maintaining a \sparsifier~runs in $\tilde{O}(1/\phi^3)$ amortized time per edge insertion or deletion.
\end{lemma}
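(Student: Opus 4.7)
The plan is to amortize the cost of each phase over its $\phi^{2} m$ updates. Let $m_0$ denote the number of edges in $G$ at the start of a phase (so the phase terminates after $\phi^{2} m_0$ updates, at which point the data structure is rebuilt).

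First I would account for all work done during a single phase. The total cost decomposes into three contributions: (i) the initialization of \textsc{DecSparsifier}$(G,\phi)$ at the start of the phase, which by Theorem~\ref{thm: decrementalSparsifier} costs $\tilde O(m_0/\phi)$; (ii) all the update work performed by \textsc{DecSparsifier} during the phase, which is bounded by its total update time guarantee of $\tilde O(m_0/\phi)$ since the number of deletions passed to it is at most $\phi^{2} m_0$, well inside its admissible budget; and (iii) the $O(1)$-time bookkeeping performed by the wrapper on each update, namely adding $e$ to $I$ on an insertion, or checking membership and removing $e$ from $I$ on a deletion (which can be made $O(1)$ using a hash table indexed by edges). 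Since there are $\phi^{2} m_0$ updates in a phase, item (iii) contributes $O(\phi^{2} m_0)$ total, absorbed into $\tilde O(m_0/\phi)$.

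Summing, the total work in a phase is $\tilde O(m_0/\phi)$. Dividing by the phase length $\phi^{2} m_0$ yields an amortized cost per update of
\[
\frac{\tilde O(m_0/\phi)}{\phi^{2} m_0} \;=\; \tilde O\!\left(\frac{1}{\phi^{3}}\right),
\]
which is the claimed bound. The output of the current sparsifier $H' = H \cup I$ does not require extra work beyond maintaining $H$ and $I$, since $I$ is stored explicitly and $H$ is maintained by \textsc{DecSparsifier}.

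The only subtle point, rather than a real obstacle, is that $m$ may change within a phase: insertions can inflate the current edge count, and deletions can shrink it. This is handled by fixing the phase length to $\phi^{2} m_0$ where $m_0$ is the edge count at the start of the phase, so that the decremental subroutine always operates within its promised deletion budget of $\phi^{2} m_0$. If at any point during the phase the invariant $\phi \geq c/\delta$ is threatened by changes in the minimum degree, one simply forces a rebuild, which charges an extra $\tilde O(m_0/\phi)$ to the phase and does not affect the asymptotic amortized bound; the standard rebuild-on-doubling/halving trick keeps the number of such forced rebuilds within a constant factor of the natural phase boundaries.
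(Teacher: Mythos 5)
Your proof is correct and follows essentially the same approach as the paper: charge the $\tilde{O}(m/\phi)$ total update time of the decremental sparsifier over a phase of $\phi^2 m$ updates. The paper's argument is more terse (it does not separately itemize initialization, per-update bookkeeping, or the wrapper set $I$, and it does not discuss rebuilds triggered by changes in $\delta$, as the condition $\phi \geq c/\delta$ is taken as a precondition of the theorem), but the amortization is identical.
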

\begin{proof}
The total update to maintain a decremental sparsifier is $\tilde{O}(m/\phi)$~(Theorem~\ref{thm: decrementalSparsifier}) under the condition that the number of deletions is smaller then $\phi^2 m$. Our data-structure makes sure that the number of updates within a phase never exceeds $\phi^2 m$. Charging the total update time to these updates, we get an amortized update time of $\tilde{O}(m/\phi \cdot 1/(\phi^2 m)) = O(1/\phi^3)$.
\end{proof}

\subsection{The Algorithm}

In this section we show an algorithm for Theorem~\ref{thm:main det}. Our main idea is to run in ``parallel" a variant of the fully algorithm for maintaining a \sparsifier~(Theorem~\ref{thm:fullyDynamicSparsifier}) and the exact fully dynamic edge connectivity algorithm due to Thorup~\cite{thorup2007fully} which is efficient whenever the edge connectivity is polylogarithmic or a small polynomial in the number of vertices. We also maintain a carefully chosen threshold edge connectivity value which tells us when to switch between the two algorithms.
We start by observing that the fully dynamic algorithm for maintaining a \sparsifier~$H$ of a graph $G$~(Theorem~\ref{thm:fullyDynamicSparsifier}) gives the following simple algorithm for edge connectivity: (i) maintain the minimum degree $\delta$ of the current graph $G$, (ii) after each edge update compute $\lambda(H)$ on the graph $H$~(Theorem~\ref{thm: staticEdgeConnectivity}), and (iii) set the edge connectivity $\lambda(G)$ of the current graph to be $\min\{\delta, \lambda(H)\}$. The following corollary is an immediate consequence of Theorem~\ref{thm:fullyDynamicSparsifier}.

\begin{corollary} \label{cor: fullyDynamMC}
Given an unweighted, undirected graph $G=(V,E)$ with $m$ edges and a parameter $\phi \in (0,1)$, there is a fully dynamic algorithm for maintaining an edge connectivity estimate $\mu(G)$ in $\tilde{O}(1/\phi^3 +\phi m)$ amortized time per edge insertion or deletion. If $\phi \geq c/\delta$, for some positive constant $c \geq 240$, then the edge connectivty estimate is correct, i.e., $\mu(G) = \lambda(G)$.
\end{corollary}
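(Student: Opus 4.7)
The plan is to formalize the three-step algorithm sketched immediately before the statement. I would maintain (i) the current minimum degree $\delta(G)$ via a standard priority structure (e.g., a binary heap keyed on vertex degrees, updated in $O(\log n)$ time per edge insertion/deletion), (ii) a \sparsifier~$H$ of the current graph $G$ using the fully dynamic construction of Theorem~\ref{thm:fullyDynamicSparsifier} instantiated with the given parameter $\phi$, and (iii) after every update, recompute $\lambda(H)$ from scratch by invoking the static edge connectivity algorithm of Theorem~\ref{thm: staticEdgeConnectivity} on $H$. The estimate returned is $\mu(G) \defeq \min\{\delta(G), \lambda(H)\}$.

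For the running time, the per-update cost decomposes cleanly. Maintaining $\delta(G)$ takes $\tilde{O}(1)$ time per edge update. By Theorem~\ref{thm:fullyDynamicSparsifier}, maintaining $H$ takes $\tilde{O}(1/\phi^{3})$ amortized time and $|E(H)| = \tilde{O}(\phi m)$ at every moment, so recomputing $\lambda(H)$ from scratch via Theorem~\ref{thm: staticEdgeConnectivity} costs $\tilde{O}(|E(H)|) = \tilde{O}(\phi m)$ per update. Summing yields the claimed $\tilde{O}(1/\phi^{3} + \phi m)$ amortized bound.

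For correctness under the hypothesis $\phi \geq c/\delta$ with $c \geq 240$, Theorem~\ref{thm:fullyDynamicSparsifier} applies and guarantees that $H$ is a genuine \sparsifier~of the current $G$. Write $\lambda(G) = \min\{\delta(G), \lambda_{\mathrm{ns}}(G)\}$, where $\lambda_{\mathrm{ns}}(G)$ is the minimum size of a non-singleton cut. Since $H$ is obtained from $G$ by vertex contractions, every cut of $H$ lifts to a cut of $G$ of the same size (the multigraph structure preserves parallel edges), so $\lambda(H) \geq \lambda(G)$. If some non-singleton cut $(C, V \setminus C)$ attains $\lambda(G)$, the \sparsifier~property gives $|\partial_H(C)| = |\partial_G(C)| = \lambda(G)$, hence $\lambda(H) \leq \lambda(G)$, and therefore $\lambda(H) = \lambda(G)$. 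Otherwise all minimum cuts are singletons and $\lambda(G) = \delta(G) \leq \lambda(H)$. In either case $\min\{\delta(G), \lambda(H)\} = \lambda(G)$, as required.

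The main subtlety, rather than a genuine obstacle, is that the condition $\phi \geq c/\delta$ must be interpreted as holding for the current dynamic minimum degree. The corollary sidesteps this by only asserting correctness conditionally on the hypothesis; the cleanest proof simply states the bound and the conditional guarantee, leaving the machinery for coping with a changing $\delta$ (e.g., running $O(\log n)$ parallel instances, one per geometric scale of $\delta$, analogous to the reduction in Section~\ref{sec:randomized}) to the proof of Theorem~\ref{thm:main det} itself.
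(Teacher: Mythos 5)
Your proof is correct and follows exactly the three-step construction that the paper sketches immediately before the corollary: maintain $\delta(G)$, maintain the NMC sparsifier $H$ via Theorem~\ref{thm:fullyDynamicSparsifier}, recompute $\lambda(H)$ from scratch with Theorem~\ref{thm: staticEdgeConnectivity}, and output $\min\{\delta(G),\lambda(H)\}$. The correctness argument (every cut of the contraction $H$ lifts to a cut of $G$ of the same size, so $\lambda(H)\geq\lambda(G)$, with equality when a non-singleton minimum cut exists) and the running-time accounting both match what the paper treats as an immediate consequence.
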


Next we review the result of Thorup~\cite{thorup2007fully}  concerning efficient maintenance of small edge connectivity.

\begin{theorem}[Theorem 26,~\cite{thorup2007fully}] \label{thm: fullyDynamThorup}
Given an unweighted, undirected graph $G=(V,E)$ with $n$ edges, and a parameter $\eta \in (0,n)$, there is a fully dynamic algorithm for maintaining an edge connectivity estimate $\mu(G)$ in $\tilde{O}(\eta^{29/2} \sqrt{n})$ worst-case time per edge insertion or deletion. If $\lambda(G) \leq \eta$, then the edge connectivity estimate is correct, i.e., $\mu(G) = \lambda(G)$.
\end{theorem}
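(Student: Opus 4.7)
The plan is to reduce fully dynamic exact edge connectivity for $\lambda \leq \eta$ to dynamically maintaining, alongside $G$: (i) a sparse $\eta$-connectivity certificate $K$ which preserves all cuts of size at most $\eta$, (ii) a family of $\tilde{O}(\eta)$ spanning trees of $K$ that form (approximately) a maximum tree packing, so that every minimum cut of value at most $\eta$ is $2$-respected by at least one tree in the family, and (iii) for each such tree $T$, a dynamic data structure that returns the minimum over all cuts $C$ crossing at most two edges of $T$ of the quantity $|\partial_G(C)|$. The reported estimate $\mu(G)$ is the minimum of the current minimum degree $\delta(G)$ (handling singleton cuts) and the minimum value returned by the per-tree structures. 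Correctness when $\lambda(G) \leq \eta$ follows from the packing guarantee together with Karger's $2$-respecting cut theorem applied to the tree witnessing the min cut.

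The first ingredient I would maintain is the Nagamochi--Ibaraki-style nested family $F_1, \ldots, F_\eta$ of edge-disjoint spanning forests, where each edge of $G$ is routed to the lowest-index forest in which it does not close a cycle; their union is an $\eta$-connectivity certificate of size $O(\eta n)$. An insertion or deletion in $G$ can be propagated through the cascade using $O(\eta)$ swap operations, each realized by a worst-case dynamic spanning forest structure of Frederickson/Eppstein~et~al.\ with $\tilde{O}(\sqrt{n})$ update time. On top of $K = \bigcup_i F_i$, I would maintain the tree packing by running a dynamic variant of a Plotkin--Shmoys--Tardos-style multiplicative weights scheme: each of the $k = \tilde{O}(\eta)$ trees in the packing is (approximately) an MST under its own edge weights, and edge updates in $K$ trigger rebalancing swaps in each tree, again serviced by worst-case $\tilde{O}(\sqrt{n})$ dynamic forest operations. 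A standard analysis shows that in such a packing, a constant fraction of trees $2$-respects any fixed cut of value $\leq \eta$, so taking $\tilde{O}(\eta)$ trees suffices for high-probability coverage of all such cuts.

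The third ingredient, a dynamic minimum $2$-respecting cut structure on a changing tree $T$ with changing background edges of $G$, is the technical core. I would build it on top trees over $T$, storing at each top-tree cluster sufficient aggregate information about edges of $G$ between the cluster's boundary vertices so that the minimum $1$-respecting cut is maintained in $\tilde{O}(\sqrt{n})$ worst-case time per update. The $2$-respecting case is reduced to a collection of $1$-respecting queries plus a bounded number of bipartite matching-like aggregations over the top tree; using the $\eta$-bounded connectivity to truncate all cut-size values at $\eta$, each such aggregation contributes a factor polynomial in $\eta$ (on the order of $\eta^{O(1)}$) to the per-update cost of the structure.

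Combining everything, for each of the $\tilde{O}(\eta)$ trees an update pays a polynomial-in-$\eta$ cost times $\tilde{O}(\sqrt{n})$, and the certificate and packing maintenance contribute a further polynomial-in-$\eta$ factor times $\tilde{O}(\sqrt{n})$; a careful accounting across these layers yields the stated $\tilde{O}(\eta^7 \sqrt{n})$ worst-case update time. The main obstacle in this plan is unquestionably the third step: making dynamic $2$-respecting minimum cut work with \emph{worst-case} update bounds rather than amortized, while the underlying tree $T$ itself is being edited by the packing maintenance. The known static $2$-respecting algorithms rely on sophisticated divide-and-conquer over the tree and offline sweeps; lifting them to a dynamic structure against a changing tree, without sacrificing the $\sqrt{n}$ factor, is the part that consumes most of the polynomial-in-$\eta$ overhead and determines the exponent $7$.
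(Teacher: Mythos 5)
First, note that the paper does not prove this statement at all: it is imported verbatim as Theorem~26 of \cite{thorup2007fully}, so your proposal has to be measured against Thorup's actual argument. Measured that way, your plan has a genuine gap at its core, namely ingredient (iii): a fully dynamic data structure that maintains the minimum $2$-respecting cut of a (changing) spanning tree in worst-case $\tilde{O}(\sqrt{n}\cdot \mathrm{poly}(\eta))$ time per update. No such structure is known; dynamizing Karger's $2$-respecting dynamic program \cite{Karger00} is exactly what this paper's introduction singles out as a notoriously hard open problem, and you yourself flag it as the main obstacle without resolving it. Saying that each aggregation ``contributes a factor polynomial in $\eta$'' does not substitute for a construction: the difficulty of the $2$-respecting case is not the magnitude of the cut values (so truncating at $\eta$ does not help), but the quadratically many pairs of tree edges and the offline, divide-and-conquer nature of all known static algorithms; the obstruction is present already for tiny connectivity, so no polynomial-in-$\eta$ overhead can be expected to absorb it.

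Thorup's proof avoids $2$-respecting cuts entirely, and this is where your route diverges from his. He maintains a \emph{greedy} tree packing of $\mathrm{poly}(\eta)$ trees (on the order of $\eta^{7}$, up to polylogarithmic factors) and proves the combinatorial fact that such a packing contains, for every minimum cut when $\lambda(G)\le\eta$, a tree crossing that cut \emph{exactly once}. Consequently only minimum $1$-respecting cuts need to be maintained per tree, which top-tree / dynamic-MSF machinery supports with $\tilde{O}(\sqrt{n})$ worst-case cost --- this is the ``easier problem'' the present paper alludes to when describing \cite{thorup2007fully}. In particular, the exponent $7$ in $\tilde{O}(\eta^{7}\sqrt{n})$ is the size of the packing, not the overhead of a $2$-respecting structure as your accounting suggests; your budget of $\tilde{O}(\eta)$ trees with $2$-respecting coverage is the right trade-off in the static setting (Karger), but it is precisely the part that cannot currently be made dynamic. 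Your certificate layer (Nagamochi--Ibaraki nested forests maintained by worst-case dynamic spanning forests) is broadly in the spirit of Thorup's use of sparse certificates, but it does not repair the missing core step, so the proposal as written does not establish the theorem.
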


We now have all the tools to present our sub-linear fully dynamic edge connectivity algorithm, which proceeds as follows. Let $\tau=n^{a}$ be a threshold value on the edge connectivity to be determined shortly, where $a \in (0,1)$ is a parameter. We run
\begin{itemize}
\itemsep0em
\item[] (1) the fully dynamic algorithm $\mathcal{A}_1$ from Theorem~\ref{thm: fullyDynamThorup} with parameter $\eta = (\tau +1)$, and
\item[] (2) the fully dynamic algorithm $\mathcal{A}_2$ from Corollary~\ref{cor: fullyDynamMC} with parameter $\phi = 240/\tau$.
%\item[] (3) a fully dynamic connectivity algorithm, e.g.~\cite{kapron2013dynamic} with $\Otil(1)$ worst-case running time.
\end{itemize}
%We use the latter algorithm just to detect the case when $\lambda(G) = 0$. Thus for the rest of the analysis we can assume that $\lambda(G) \ge 1$.
We extend both algorithms $\mathcal{A}_1$ and $\mathcal{A}_2$ to perform a test on how edge connectivity $\lambda(G)$ of the current graph compares to the threshold value $\tau$ \emph{after} the algorithm that is currently being used to answer queries has processed an edge update. These extensions allow us to switch between these two algorithms, so the queries we answer regarding $\lambda(G)$ are correct.

First, observe that both algorithms internally explicitly maintain $\lambda(G)$. We proceed as follows
\begin{itemize}
\itemsep0em
    \item Suppose $\mathcal{A}_1$ is currently being used to answer queries. If $\lambda(G) \leq \tau$ after an update operation operation, then we do not switch. Otherwise (i.e., $\lambda(G) = (\tau + 1)$), we switch to $\mathcal{A}_2$ for the next operation.
    \item Suppose $\mathcal{A}_2$ is currently being used to answer queries. If $\lambda(G) \geq (\tau+1)$ after an update operation, then we do not switch. Otherwise (i.e., $\lambda(G) = \tau$), we switch to $\mathcal{A}_1$ for the next operation.
\end{itemize}

We next prove the correctness. It suffices to verify that our parameter requirements from \Cref{thm: fullyDynamThorup} and \Cref{cor: fullyDynamMC} are satisfied whenever we use one of the algorithms to answer queries. Let $G$ be the current graph. Note that if $\lambda(G)$ was at most $\tau$ before an update operation, we used algorithm $\mathcal{A}_1$, which works correctly, even then $\lambda(G)$ reaches $\tau + 1$ after that operation. If $\lambda(G)$ was at least $\tau+1$ before the operation, we use $\mathcal{A}_2$ which works correctly even if $\lambda(G)$ drops to $\tau$ after the operation as $\tau \le\lambda(G)$. In either case we have that
$\phi = 240/\tau \ge 240/\lambda(G) \ge 240/\delta$. This completes the correctness proof.

The running time is bounded as follows. By Theorem~\ref{thm: fullyDynamThorup} and $\lambda(G) \leq \tau +1$, (1) supports edge updates in $\tilde{O}(\tau^{29/2} \sqrt{n})$ worst-case time. By Corollary~\ref{cor: fullyDynamMC}, (2) guarantees a $\tilde{O}(1/\phi^3 + \phi m) = \tilde{O}(\tau^3 + m/\tau)$ amortized time per update. Bringing these running times together, we get that the amortized time per edge update is
\[
	\tilde{O}(\tau^{29/2} \sqrt{n} + \tau^3 + m/\tau) = \tilde{O}(\tau^{29/2} \sqrt{n} + m/\tau).
\]

Balancing the two terms in the above expression, we get $\tau = m^{\nfrac{2}{31}}/n^{\nfrac{1}{31}}$, which in turn implies that the amortized update time of our algorithm is $O(m^{\nfrac{29}{31}}n^{1/31})$. This completes the proof of Theorem~\ref{thm:main det}.

\section{Concluding remarks and open problems}
We showed two sub-linear algorithms for exactly maintaining edge connectivity in fully dynamic graphs. The main idea behind both algorithms was to maintain sparsifiers that preserve non-singleton cuts dynamically, and this was achieved by leveraging the power of random 2-out contractions and expander decompositions in the context of edge connectivity. 

Our work leaves several natural open problems.

\begin{itemize}
\itemsep0em
    \item[(1)] \emph{Can our update time for dynamically maintaining exact edge connectivity be improved?} We remark that a closer examination of our result based on expander decompositions reveals that an improvement to Thorup’s result~\cite{thorup2007fully} for bounded edge connectivity (specifically, improving the polynomial dependency on the edge connectivity) would immediately lead to an improved running time. It would be very interesting to investigate whether this can be achieved.

    \item[(2)] \emph{Is there a fully dynamic algorithm for $(1+\eps)$-approximating edge connectivity in $\textup{poly}(\log n)\eps^{-O(1)}$ update time?} The best-known algorithm due to Thorup achieves $\tilde{O}(\sqrt{n})$ update time, and even going beyond this $\tilde{O}(n^{1/2})$ barrier remains an important open problem in dynamic graph algorithms.
\end{itemize}

\section*{Acknowledgments}
We thank Tijn de Vos and Aleksander Bjørn Grodt Christiansen for pointing out an error when citing Thorup~\cite{thorup2007fully} in a previous version of the paper.

\small
\setlength\parskip{0cm}
\setlength\itemsep{0cm}
\bibliographystyle{plain}
\bibliography{references}

%{\small \printbibliography}
%\bibliographystyle{alpha}
%\bibliography{references}

%\appendix
% \input{EquivalenceProblems}
% \input{Oldproofs}
\end{document}